\DeclarePairedDelimiter\ceil{\lceil}{\rceil}
\DeclarePairedDelimiter\floor{\lfloor}{\rfloor}
\newcommand{\g}{\mathfrak{g}}
\newcommand{\C}{\mathbb{C}}
\newcommand{\T}{\mathbb{T}}
\newcommand{\N}{\mathbb{N}}
\newcommand{\R}{\mathbb{R}}
\newcommand{\Z}{\mathbb{Z}}
\newcommand{\Diff}{\textit{\em Diff}_+(S^1)}
\newcommand{\hi}{\mathcal{H}}
\newcommand{\M}{\mathcal{M}}
\theoremstyle{plain}
\newtheorem{thm}{Theorem}
\newtheorem*{thm*}{Theorem}
\newtheorem{cor}[thm]{Corollary}
\newtheorem*{cor*}{Corollary}
\newtheorem{lem}[thm]{Lemma} 
\newtheorem*{lem2a*}{Lemma 2A} 
\newtheorem*{lem2b*}{Lemma 2B} 
\newtheorem*{lem2c*}{Lemma 2C} 
\newtheorem{prop}[thm]{Proposition}
\newtheorem{example}{Example}
\newtheorem*{claim*}{Claim} 
\theoremstyle{definition}
\newtheorem{defn}[thm]{Definition}
\newtheorem*{defn*}{Definition}
\theoremstyle{remark}
\newtheorem{remark}[thm]{Remark}
\title{Loop groups and QNEC}
\author{Lorenzo Panebianco* \\ \\
	*Dipartimento di Matematica,  Università di Roma “La Sapienza” \\
	Piazzale Aldo Moro 5, 00185–Roma, Italy \\
	E-mail: panebianco@mat.uniroma1.it}
\date{November 2020}
\begin{document}
		
		\maketitle
		
		\begin{abstract}
We investigate some analytical properties of loop group models, showing that a Positive Energy Representation (PER) of a loop group $LG$ can be extended to a PER of $H^{3/2}(S^1,G)$ for any compact, simple and simply connected Lie group $G$. We then explicitly compute the adjoint action of $H^{5/2}(S^1,G)$ on the stress energy tensor and we use these results to prove the Quantum Null Energy Condition (QNEC) and the Bekenstein Bound for states obtained by applying a Sobolev loop to the vacuum. We also give a simpler proof of these last results in the case $G=SU(n)$. Finally, we construct and study solitonic representations of the loop group conformal nets induced by the conjugation by a loop with a discontinuity in $-1$.
		\end{abstract}
		
		{\bf Keywords:} Loop Group, Positive Energy Representation, Relative Entropy
		
		{\bf MSC2020:} primary 81T05, 81T40

\section{Introduction}

Recently, much attention has been focused on quantum information aspects of Quantum Field Theory, which naturally takes place in the framework of quantum black holes thermodynamics. However, more unexpected and interesting connections between the relative entropy and the stress energy tensor have arisen, and in particular it is of interest to provide and prove an axiomatic formulation of the Quantum Null Energy Condition (QNEC). A general proof of the QNEC is given in \cite{CF}.  While in \cite{LP} we study the QNEC on the Virasoro net, in this work we focus on loop group models. \\

We mainly follow \cite{TL}. Let $G$ be a compact, simple and simply connected Lie group. A {\em Positive Energy Representation (PER)} of the loop group $LG=C^\infty(S^1, G)$ on a separable Hilbert space $\hi$ is a projective strongly continuous unitary representation $\pi$ of $	{LG} \rtimes \T$ with a commutative diagram
\[
\xymatrix{
	{LG} \rtimes \T  \ar@{->}[r]^{\pi} 
	& PU(\hi)   \\
	\T \ar@{->}[u]  \ar@{->}[r]^{R} 
	& U(\hi)  \ar@{->}[u] 
}
\]
where the torus $\T \cong \text{Rot} $ acts on $LG$ by rotations and $R$  is a strongly continuous unitary representation which induces an isotypical decomposition $\hi = \bigoplus_{n \geq 0} \hi(n)$. In the following we will consider only PERs of {\em finite type}, namely we require $ \dim \hi(n) < + \infty $ for every $n$. We also require, without loss of generality, that $\hi(0)$ is not zero-dimensional. Irreducible PERs are of finite type. \\

We denote by $\mathfrak{g}_0$ the Lie algebra of $G$  and by $\mathfrak{g}$ the complexification of $\mathfrak{g}_0$. Recall that $\mathfrak{g}_0$ is a compact Lie algebra, that is its Killing form is negative definite. In particular, there is an antilinear involution $x \mapsto x^*$ of $\g$ such that 
\[
\g_0 = \{ x \in \g \colon x^* = -x \} \,.
\] 
We define an involution of $L^\text{pol}\g$ by $X(n)^*=X^*(-n)$, where $X(n) = X e^{ in \theta } $ for $X$ in $\g$. We denote by $\hi^{\text{fin}}$ the subspace of finite energy vectors, namely the algebraic sum of the subspaces $\hi(n)$, and by $\hi^\infty \subseteq \hi$ the Fréchet space  of smooth vectors. For $X $ in $ L \mathfrak{g} $ and $\gamma$ in $LG$, on $\hi^\infty$ we have that $	[d, \pi(X)]  = i \pi(\dot{X}) $ and that $\pi(X)^*  =  \pi(X^*) $. Moreover, we have the commutation relations
\[
	[\pi(X), \pi(Y)]  = \pi([X, Y]) + i \ell B(X, Y) \,, \qquad B(X, Y)  =  \int_0^{2 \pi} \braket{X, \dot{Y}} \frac{d \theta}{2 \pi} \,.
\]
The adjoint action of $LG$ is given by
\begin{equation}
	\begin{split}
	\pi(\gamma) \pi(X) \pi(\gamma)^* & = \pi(\gamma X \gamma^{-1}) + i c(\gamma, X)  \,,  \\
		\pi(\gamma) d \pi(\gamma)^* & = d - i \pi(\dot{\gamma} \gamma^{-1}) + c(\gamma,d)  \,,
	\end{split}
\end{equation}
where $d$ is the generator of rotations, namely $\pi(R_\theta)= e^{i \theta d}$.  The real constants $c(\gamma, X)$ and $c(\gamma, d)$ are explicitly given by
\[
c(\gamma, X)  = - \ell \int_0 ^{2 \pi} \braket{\gamma^{-1} \dot{\gamma}, \dot{X}} \frac{d \theta}{2 \pi} \,, \qquad c(\gamma,d)  =- \frac{\ell}{2} \int_0 ^{2 \pi} \braket{\gamma^{-1} \dot{\gamma}, \gamma^{-1} \dot{\gamma}} \frac{d \theta}{2 \pi} \,.
\]

Here $\braket{\cdot , \cdot}$ denotes the basic inner product, namely the Killing form normalized on the highest root $\theta$ in such a way that $\braket{\theta, \theta} =2$. Notice that the map $X \mapsto \pi(X)$ is actually a representation if restricted on $\g$. Indeed, the projective representation of $G$ lifts to a unitary representation and the subspaces $\hi(n)$ are $G$-invariant. We will use the following notation:
\[
x= \pi(X)   \,, \quad x(n) = \pi(X(n))\,, \quad \braket{x, y } = \braket{X, Y }   \,.
\]
We can define a representation of the Virasoro algebra $\text{Vir}$ 
\[
[L_n, L_m] = (n-m)L_{n+m} +\delta_{n+m,0} \frac{n(n^2-1)}{12} c \,,
\]
by Sugawara construction, that is such a representation is given by defining
\[
L_n = \frac{1}{2(\ell + g)} \sum_{m } \colon x_i(-m)x^i(m+n) \colon \,,
\]
where we used the Einstein convention on summations.  Here $\{x_i \}$ and $\{x^i \}$ are dual basis with respect to the basic inner product, namely $\braket{x^i, x_j} = \delta_{ij}$, and $g$ is the {\em dual Coxeter number}, that is
\[
g= 1 + \sum {a}^\vee_i  \,, \quad \theta = \sum {a}^\vee_i \alpha^\vee_i \,,
\]
where $\alpha^\vee_i$ are the simply coroots and ${a}^\vee_i $ are strictly positive. By the assumption $\braket{\theta, \theta} =2$ it can be shown that  the dual Coxeter number is half the Casimir of the adjoint representation, namely we have $[X_i, [X^i,Y]] = 2g Y $ for $Y$ in $\mathfrak{g} $. Notice that if $X_i$ belongs to $\g_0$ then $x_i(n)^* = - x_i(-n)$. If the PER is irreducible, then the central charge $c$ and the trace anomaly $h$ are given by
\[
c = \frac{\ell \dim \g }{\ell + g} \,, \quad h = \frac{C_\lambda}{2(\ell + g)} \,,
\]
where $C_\lambda$ is the Casimir associated to the basic inner product $\braket{\cdot, \cdot}$ and to the null energy space $\hi(0)= \hi_\lambda$, which is the irreducible highest weight representation of $\g$ associated to some dominant integral weight $\lambda$ satisfying
\begin{equation} \label{eq:alcove}
	\braket{\lambda, \theta } \leq \ell \,.
\end{equation}
The set of dominant integral weights $\lambda$ satisfying condition \eqref{eq:alcove} is called the {\em level $\ell$ alcove}. We will say that $\pi$ is a {\em vacuum positive energy representation}, or simply a {\em vacuum representation}, if $\hi(0) $ is one-dimensional. If $\hi(0) = \C \Omega$ with $(\Omega | \Omega)=1$, then the state $\omega$ associated to $\Omega$ is called the {\em vacuum state}. Notice that $\pi$ is a vacuum representation if and only if $h=0$. More in general, if $\hi(0)= \hi_\lambda$ then the trace anomaly can be computed by taking in account that
\[
C_\lambda = \braket{\lambda, \lambda + 2 \rho } \,, \quad g = 1 + \braket{\rho,  \theta }\,,
\]
where $\rho$ is the {\em Weyl vector}, that is the sum of all the dominant integral weights. Equivalently, the Weyl vector can be defined as half the sum of all the positive roots. \\

Now we briefly study the unitary irreducible representations $V(c,h) $ of the Virasoro algebra appearing from an irreducible level $\ell$ positive energy representation of $LG$. If $\ell =0$ then $\lambda =0$, and by $c=h=0$ we have the trivial representation of $\text{Vir}$. If $\ell \geq 1$, then $V(c,h) $ belongs to the {\em continuous series}, namely we have $h \geq 0$ and $c \geq 1$. The estimate on the central charge follows by the inequality $g+1 \leq \dim \g $, which can be noticed by studying the following table: 

\begin{center}
\begin{tabular}{|l|l|l|l|l|l|l|l|l|l|}
	\hline
	Dynkin diagram  & $A_n$  &  $B_n$  &  $C_n$  &  $D_n$ & $E_6$ & $E_7$ & $E_8$ & $F_4$ & $G_2$      \\
	\hline
	Complex simple Lie algebra  & $\mathfrak{sl}_{n+1}$  &  $\mathfrak{so}_{2n+1}$  &  $\mathfrak{sp}_{2n}$  &  $\mathfrak{so}_{2n}$ & $\mathfrak{e}_{6}$ & $\mathfrak{e}_{7}$ & $\mathfrak{e}_{8}$ & $\mathfrak{f}_{4}$ & $\mathfrak{g}_{2}$      \\
	\hline
	Complex dimension  & $n^2 + 2n$   &  $2n^2 + n$  &  $2n^2 + n$  &  $2n^2 - n$   & $78$  & $133$ & $248$ & $52$ & $14$    \\
	\hline
	Dual Coxeter number & $n+1$      &  $2n-1$  &  $n+1$  &  $2n-2$ & $12$   & $18$   & $30$ & $9$ & $4$     \\
	\hline
\end{tabular} 
\end{center}



\begin{lem} \label{lem:comm} \cite{GW}  $[L_n, x(k)]=-kx(n+k)$ on $\hi^\text{\em fin}$. 
\end{lem}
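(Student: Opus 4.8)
The plan is to pass to the mode algebra and to compute the bracket by hand; the one delicate point is the normal ordering in the Sugawara sum. First I would record everything in mode form: from $[\pi(X),\pi(Y)]=\pi([X,Y])+i\ell B(X,Y)$ with $X(m)=Xe^{im\theta}$ one gets, on $\hi^{\mathrm{fin}}$,
\[
[x(m),y(m')]=[x,y](m+m')+m\,\ell\braket{x,y}\,\delta_{m+m',0}\,.
\]
I would also isolate the three Lie‑algebraic facts the computation needs: invariance of the Casimir, i.e.\ $\sum_i[x_i,y]\otimes x^i=-\sum_i x_i\otimes[x^i,y]$ in $\g\otimes\g$, equivalently $\sum_i x_i(a)[x^i,y](b)=-\sum_i[x_i,y](a)x^i(b)$ as operators for all modes $a,b$; that $\sum_i[x_i,x^i]=0$ by semisimplicity, so that $\sum_i[x_i(-m),x^i(m+n)]$ is a scalar for every $m$; and $\sum_i[x_i,[x^i,y]]=2g\,y$, which is the defining relation of the dual Coxeter number re‑summed over dual bases.

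Now the computation. Write $L_n=\tfrac1{2(\ell+g)}\sum_m \colon x_i(-m)x^i(m+n)\colon$ and $y=\pi(Y)$. Since $\sum_i[x_i(-m),x^i(m+n)]$ is a scalar, the commutator is insensitive to the normal ordering inside each summand, so
\[
[L_n,y(k)]=\frac1{2(\ell+g)}\sum_m\Big(x_i(-m)[x^i(m+n),y(k)]+[x_i(-m),y(k)]x^i(m+n)\Big)\,.
\]
Inserting the mode relations splits the right‑hand side into (i) two ``central'' contributions — pinned by the Kronecker deltas to $m=-n-k$ and to $m=k$, each equal to $-\ell k\,y(n+k)$, hence together $-\tfrac{\ell k}{\ell+g}\,y(n+k)$ once the prefactor is included — and (ii) a ``bulk'' part, namely $\tfrac1{2(\ell+g)}$ times $\sum_m\big(x_i(-m)[x^i,y](m+n+k)+[x_i,y](k-m)x^i(m+n)\big)$. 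Here is the crux. The bulk part is a well‑defined operator on $\hi^{\mathrm{fin}}$, but its summands become manageable only after each is restored to normal order, and at the finitely many ``middle'' values of $m$ where a product straddles the ordering threshold this re‑ordering generates a correction of the shape $\pm\sum_i[[x_i,y],x^i](n+k)=\mp 2g\,y(n+k)$ — the scalar part of such a re‑ordering commutator drops out because $\sum_i\braket{[x_i,y],x^i}=0$. Once the bulk part is genuinely normal‑ordered, the Casimir antisymmetry together with a shift of the summation index makes it cancel identically, so the only survivor is the accumulated re‑ordering anomaly, which I expect to be exactly $-\tfrac{gk}{\ell+g}\,y(n+k)$. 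Adding (i) and (ii) yields $[L_n,y(k)]=-k\,y(n+k)$, hence $[L_n,x(k)]=-k\,x(n+k)$ by linearity in $X$.

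The main obstacle is precisely this normal‑ordering bookkeeping in step (ii): one must pin down the handful of $m$ at which $x_i(-m)x^i(m+n)$, or one of the commutators it generates, crosses the ordering threshold, and check that the re‑ordering commutators combine — through $\sum_i[x_i,[x^i,y]]=2g\,y$ — into the single term $-\tfrac{2gk}{2(\ell+g)}\,y(n+k)$ with the correct sign and multiplicity; this is also where passing from $\ell$ to $\ell+g$ in the Sugawara normalization becomes forced. It is cleanest to run $n=0$ first, where only $m=0$ needs re‑ordering and the anomaly is transparent. Alternatively, to bypass the general enumeration one can establish the identity directly for $n\in\{0,1,2\}$ and bootstrap to all positive $n$ from the Virasoro relations $[L_n,L_m]=(n-m)L_{n+m}+\cdots$ already at hand — since $[L_{n+1},x(k)]=\tfrac1{1-n}\big([L_1,[L_n,x(k)]]-[L_n,[L_1,x(k)]]\big)$ for $n\neq1$ — and then to negative $n$ using $L_n^{*}=L_{-n}$.
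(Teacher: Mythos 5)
Your proposal is correct and follows essentially the same route as the paper: a direct mode computation from the Sugawara formula in which the Killing-form (central) terms pinned at $m=k$ and $m=-n-k$ produce $-\tfrac{\ell k}{\ell+g}\,x(n+k)$, while the remaining bulk cancels up to a finite normal-ordering correction governed by $\sum_i[x_i,[x^i,y]]=2g\,y$, yielding $-\tfrac{gk}{\ell+g}\,x(n+k)$. The only difference is bookkeeping: the paper handles the anomaly by first rewriting $L_n$ as the one-sided sum $\tfrac{1}{2(\ell+g)}\sum_{m\geq -n/2}(2-\delta_{-m,n/2})\,x_i(-m)x^i(m+n)$ and then telescoping $\sum_m(X_m-X_{m-k})$ against the pairing identity $X_m+X_{-m-n-k}=2g\,x_j(n+k)$, which is precisely the finite count of reordering crossings that you identify as the main obstacle.
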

\begin{proof}
	Let $\{x_i \}$ be a orthonormal basis in $\g_0$, so that its dual basis is given by $x^i=-x_i$.  We have the following relations:
	\begin{equation*}
		\begin{split}
		 [xy,z] & = x[y,z]+[x,z]y \,, \\
			[x(a), [y,z](b)] & = [[z,x],y](a+b) + [[x,y](a),z(b)] \,, \\
			[x_i(a), [x^i,z](b)] & = [x_i, [x^i,z]](a+b) =2g z(a+b ) \,,
	\end{split}
	\end{equation*}
	for any integers $a$ and $b$. Notice also that on  $\hi^{\text{fin}}$ we have
	\begin{equation*}
		\begin{split}
			L_n & = \frac{1}{2(\ell + g)} \sum_{m \geq -n/2 } (2- \delta_{-m, n/2}) x_i(-m)x^i(m+n) \,.
		\end{split}
	\end{equation*}
Therefore we can compute
\begin{equation*}
\begin{split}
[x_i(-m)x^i(m+n), x_j(k)] & = x_i(-m)[x^i(m+n), x_j(k)] + [x_i(-m),x_j(k)]x^i(m+n)  \\
& = - \ell k \delta_{i,j}(\delta_{k,-m-n} + \delta_{k,m})x_j(n+k) \\ 
& \qquad + x^i(-m)[x_i,x_j](m+n+k) + [x_i,x_j](-m+k)  x^i(m+n) \,.
\end{split}
\end{equation*}
The structure constants $\{c^h_{ij} \}$ relative to $\{x_i \}$, that is the constants determined by $ [x_i, x_j] = c^h_{ij}x_h $, verify the relations $c^h_{ij} = c^i_{jh} = - c^i_{hj}$. It follows that
\begin{equation*}
\begin{split}
[x_i, x_j](a)x^i(b)  & = -x^i(a)[x_i, x_j](b) \,, \\
x^i(a)[x_i, x_j](b) + x^i(b)  [x_i, x_j](a)  & = 2g x_j(a+b) \,, \\
x^i(a)[x_i, x_j](a)  & = g x_j(2a) \,.
\end{split}
\end{equation*}
In particular, if we set $ X_m = x^i(-m)[x_i,x_j](m+n+k) $ then $ X_m + X_{-m-n-k} = 2 g x_j (n+k)  $. Thus, by explicit computation one can prove that
\begin{equation*}
\begin{split}
[L_n, x_j(k) ] & =  -\frac{k \ell}{(\ell + g)} x_j(n+k) + \frac{1}{2(\ell + g)} \sum_{m \geq -n/2 } (2- \delta_{-m, n/2}) (X_m - X_{m-k}) \\
& = - k x_j(n+k) \,,
\end{split}
\end{equation*}
and therefore we have $[L_n, x(k) ] =- k x(n+k)$ on $\hi^\text{fin}$ for every $x$ in $\g$ and $k,n$ in $\Z$.  
\end{proof}

As a corollary, the representation of $\text{Vir} = \C \cdot c \oplus \partial $, with $\partial$ the Witt algebra, extends to a representation of the semidirect product $\g[t,t^{-1}]\rtimes \text{Vir} \cong \tilde{\g} \rtimes \partial$, with $\tilde{\g} = \g[t, t^{-1}] \oplus \C \cdot c$. Indeed, if we set $L_n = \pi (\ell_n)$, where $ \ell_n(\theta) = e^{in\theta} \frac{d}{d \theta} $, then we can define the stress energy tensor $\pi(h) = \sum_n \hat{h}_n L_n $ for any polynomial vector field $h$ on the circle, namely a vector field which is a finite linear combination of the fields $ \ell_n$. Therefore, by Lemma \ref{lem:comm}  we have $ [\pi(h), \pi(X)] = \pi(h.X)  $ on $\hi^\text{fin}$ for every $X$ in $L^{\text{pol}}\g$, where $h.X(\theta) = h(\theta) \frac{d}{d \theta}X(\theta)$. \\

The space $L^{\text{pol}}\g$ can be completed to a Banach Lie algebra $L\g_t $, with $t \geq 0$. Indeed, given $X = \sum_k a_k e^{ik \theta}$ in $L^{\text{pol}}\g$, we define $L\g_t $ as the completion of $L^{\text{pol}}\g$ with respect to the norm
\[
|X|_t = \sum_k (1 + |k|)^t \Vert a_k \Vert \,.
\]
We have norm continuous embeddings with dense range $ C^{\ceil*{t}+1} (S^1, \g)\hookrightarrow L\g_t \hookrightarrow C^{\floor{t}} (S^1, \g) 
$, and for any $t \geq n$ we have $ \Vert X^{(n)}\Vert_\infty \leq |X|_t $.  Notice that, in general, we can similarly define the space  $L\g_{s,p} $ by
\[
|X|_{s,p} = \bigg(  \sum_k (1 + |k|)^{sp} \Vert a_k \Vert^p  \bigg)^{1/p} \,.
\]
We set $\mathcal{S}_t = L\C_t$, namely the space of continuous complex functions $h$ on $S^1$ satisfying
\[
|h|_t=\sum_k (1 + |k|)^t \Vert \hat{h}_k \Vert < + \infty \,.
\]
Notice that by Fourier expansion we can naturally identify $\mathcal{S}_t $  with a space of Sobolev vector fields on the circle. Finally, we define two actions of $\mathcal{S}_t $  by $h.X(\theta) = h(\theta) \frac{d}{d \theta}X(\theta)$ and $hX(\theta) = h(\theta)X(\theta)$. Indeed, by noticing that $ (1+|n+k|)^t \leq (1+|n|)^t(1+|k|)^t $ we have  $|h.X|_{s,p}  \leq |h|_s |X|_{s+1,p} $ and $|hX|_{s,p}  \leq |h|_s |X|_{s,p} $.

\section{Sobolev loop groups}

We know that $LG= C^\infty(S^1, G)$ is a Fréchet Lie group if endowed with the Whitney smooth topology. Its topology is induced by the norms defined on the Banach Lie groups $ L^kG = C^{k}(S^1, G) $. The exponential map $\exp_{LG} \colon L\g_0 \to LG $ is naturally defined by $ \exp_{LG}(X) = \exp_G \cdot X $ and is a local homeomorphism near the identity (see \cite{PS}). Here we define and describe some properties of Sobolev loop groups. \\

Let $M$ be a riemannian  manifold. Suppose $M$ to be isometrically embedded in $\R^\nu$ for some  $\nu>0$. Define, for $1 \leq p < \infty$ and $0 \leq s < \infty$, 
\[
W^{s,p}(S^1,M) = \{ f \in W^{s,p}(S^1, \R^\nu) \colon f(\theta) \in M \, \text{\em  a.e.} \} \,.
\]
Here $W^{s,p}(S^1, \R^\nu)$ is the completion of $C^\infty(S^1, \R^\nu)$ with respect to the norm $ \Vert f \Vert_{s,p} =  \Vert \Delta^{s/2} f \Vert_p + \Vert f \Vert_p 
$, where $\Delta \geq 0$ is the smallest closure on $L^p(S^1, \R^\nu)$ of the laplacian seen as an operator on $C^\infty(S^1, \R^\nu)$.  \\

In the following, every compact Lie group $G$ will be considered as a riemannian Lie group with respect to the unique riemannian structure extending $-\braket{\cdot , \cdot}$, namely the opposite of the basic inner product, and such that left and right translations are smooth isometries. We show that if $G$ is compact and simple then every faithful unitary representation $\rho \colon G \to U(n)$  induces an isometrical embedding of $G$ in some real euclidean space. By continuity of the representation we have that $G$ is represented as a compact embedded Lie subgroup of    $U(n)$. Moreover, by simplicity of $\g_0$ we have that $ \lambda \text{tr}(\rho(x)^*\rho(y)) = - \braket{x , y}$ for some $\lambda > 0$. Therefore if we consider $M_n(\C)$ as a real vector space with inner product $\lambda \text{Re} \, \text{tr}(X^*Y)$ then we have an isometric embedding $G \hookrightarrow M_n(\C)$. \\

\begin{thm}
	If $G$ is a compact, simple and simply connected Lie group faithfully represented in some space of matrices, then $W^{s,p}(S^1, G)$ is an analytic Banach Lie group for $1 < p, sp < \infty$ whose Lie algebra is $W^{s,p}(S^1, \g_0)$. Moreover,  $C^\infty(S^1, G)$ is dense in $W^{s,p}(S^1, G)$ and thus $W^{s,p}(S^1, G)$ is connected.
\end{thm}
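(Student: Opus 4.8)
The plan is to build a real-analytic atlas on $W^{s,p}(S^1,G)$ out of the exponential map of $G$, to reduce the analyticity of the transition maps and of the group operations to two standard facts about the scalar space $W^{s,p}(S^1)$, and then to deduce density of $C^\infty(S^1,G)$ and connectedness by approximating in the ambient space $W^{s,p}(S^1,M_n(\C))$. The two facts I use, both valid precisely because $p>1$ and $sp>1$, are: (i) $W^{s,p}(S^1)$ embeds continuously in $C^0(S^1)$ and is a Banach $*$-algebra under pointwise product, so that any fixed bounded multilinear map between finite-dimensional spaces induces, by pointwise application, a bounded multilinear (hence real-analytic) operation on the corresponding $W^{s,p}$-valued maps; and (ii) the \emph{analytic $\Omega$-lemma}: for $\mathcal O\subseteq\R^\nu$ open and $\phi\colon\mathcal O\to\R^\mu$ real-analytic, the set $W^{s,p}(S^1,\mathcal O):=\{u\in W^{s,p}(S^1,\R^\nu)\colon u(S^1)\subseteq\mathcal O\}$ is open (by (i) and compactness of $S^1$) and $u\mapsto\phi\circ u$ is real-analytic $W^{s,p}(S^1,\mathcal O)\to W^{s,p}(S^1,\R^\mu)$. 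I would prove (ii) by expanding $\phi$ in a convergent power series around a point of $\mathcal O$ and controlling the Taylor remainder of $u\mapsto\phi\circ u$ with the algebra estimate $\Vert fg\Vert_{s,p}\le C\Vert f\Vert_{s,p}\Vert g\Vert_{s,p}$ together with the Cauchy estimates for $\phi$; this is the only point requiring real work, and everything below is formal once it is granted (alternatively one may quote the literature on manifolds of Sobolev maps).

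Realize $G$ as a compact real-analytic submanifold of $M_n(\C)\cong\R^{2n^2}$ through the given faithful representation, which we may take unitary after averaging an inner product, and fix $r_0>0$ so that $\exp_G$ restricts to an analytic diffeomorphism of the ball $B_{r_0}\subseteq\g_0$ onto an open neighbourhood $V\ni e$, with analytic inverse $\log\colon V\to B_{r_0}$. For $\gamma\in W^{s,p}(S^1,G)$ define
\[
\Phi_\gamma\colon\{\xi\in W^{s,p}(S^1,\g_0)\colon\Vert\xi\Vert_\infty<r_0\}\longrightarrow W^{s,p}(S^1,G),\qquad\Phi_\gamma(\xi)(\theta)=\gamma(\theta)\exp_G(\xi(\theta)).
\]
The domain is open because $\Vert\cdot\Vert_\infty\le C\Vert\cdot\Vert_{s,p}$; the map lands in $W^{s,p}(S^1,G)$ by the $\Omega$-lemma applied to $\exp_G$ followed by the bounded bilinear pointwise matrix product with $\gamma$; and it is injective because $\exp_G|_{B_{r_0}}$ is. Its inverse $\eta\mapsto\log(\gamma^{-1}\eta)$ is defined on the open set $\{\eta\colon\gamma^{-1}\eta\in W^{s,p}(S^1,V)\}$ and is continuous by the $\Omega$-lemma applied to $\log$, so $\Phi_\gamma$ is a homeomorphism onto an open set. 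A transition map $\Phi_{\gamma_2}^{-1}\circ\Phi_{\gamma_1}$ reads $\xi\mapsto\log\bigl((\gamma_2^{-1}\gamma_1)\,\exp_G(\xi)\bigr)$, a composition of $\xi\mapsto\exp_G\circ\,\xi$, of multiplication by the fixed element $\gamma_2^{-1}\gamma_1\in W^{s,p}(S^1,M_n(\C))$ (bounded linear, by (i)), and of $\log$, hence real-analytic on its open domain. Thus the $\Phi_\gamma$ form a real-analytic atlas modelled on the Banach space $W^{s,p}(S^1,\g_0)$.

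Pointwise multiplication $W^{s,p}(S^1,G)\times W^{s,p}(S^1,G)\to W^{s,p}(S^1,G)$ is the restriction of the bounded bilinear pointwise matrix product on $W^{s,p}(S^1,M_n(\C))$, and inversion is $\gamma\mapsto\gamma^{-1}=\gamma^{*}$, the restriction of a bounded real-linear map; both preserve $W^{s,p}(S^1,G)$ (the values stay in $G$, and $W^{s,p}$ regularity is kept by (i)), and read in the charts $\Phi_\gamma$ they become compositions of $\Omega$-lemma maps with bounded (multi)linear pointwise operations, hence real-analytic. Therefore $W^{s,p}(S^1,G)$ is an analytic Banach Lie group, and reading $\Phi_e$ at $0$ identifies its Lie algebra with $W^{s,p}(S^1,\g_0)$, the bracket being the pointwise bracket inherited from $\g_0$ since multiplication is pointwise.

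It remains to prove density of $C^\infty(S^1,G)$. Fix a tubular neighbourhood $\mathcal N\supseteq G$ in $M_n(\C)$ together with its real-analytic nearest-point retraction $r\colon\mathcal N\to G$. Given $\gamma\in W^{s,p}(S^1,G)$, choose $\gamma_k\in C^\infty(S^1,M_n(\C))$ with $\gamma_k\to\gamma$ in $W^{s,p}$ (possible, since $W^{s,p}$ is by definition the $\Vert\cdot\Vert_{s,p}$-completion of $C^\infty$); then $\gamma_k\to\gamma$ uniformly by (i), so eventually $\gamma_k(S^1)\subseteq\mathcal N$, and $r\circ\gamma_k\in C^\infty(S^1,G)$ converges to $r\circ\gamma=\gamma$ in $W^{s,p}$ by the $\Omega$-lemma; hence $C^\infty(S^1,G)$ is dense in $W^{s,p}(S^1,G)$. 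Finally, $G$ being connected and simply connected, every smooth loop is smoothly homotopic to a constant, so $C^\infty(S^1,G)$ is path-connected; a topological space containing a dense connected subset is connected, so $W^{s,p}(S^1,G)$ is connected. The one genuine obstacle is the analytic $\Omega$-lemma (ii); once that is in hand, the rest is bookkeeping with the algebra property (i).
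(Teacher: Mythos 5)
Your argument is correct in outline but organized quite differently from the paper's. The paper first makes $W^{s,p}(S^1,G)$ a topological group directly from the Sobolev algebra estimate $\Vert fg\Vert_{s,p}\le C_{s,p}\Vert f\Vert_{s,p}\Vert g\Vert_{s,p}$ and the identity $f^{-1}-g^{-1}=f^{-1}(g-f)g^{-1}$, shows $\exp_{s,p}$ is a local homeomorphism because the exponential series converges absolutely in the Banach algebra $W^{s,p}(S^1,M_n(\C))$, quotes Bousquet--Ponce--Van Schaftingen for density of smooth loops, and then gets analyticity of the operations near the identity from the Baker--Campbell--Hausdorff series, globalizing by a local-to-global Lie group lemma (Lemma 2.2.1 of Toledano Laredo). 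You instead build a global analytic atlas from left-translated exponential charts and reduce everything to an analytic $\Omega$-lemma for composition operators $u\mapsto\phi\circ u$ on $W^{s,p}$. This buys you two things: you never need BCH or the local-to-global lemma, and your density proof (approximate in the ambient space, use $W^{s,p}\hookrightarrow C^0$ to land in a tubular neighbourhood, retract) is self-contained and replaces the citation of a much harder theorem, being valid here precisely because $sp>1$. The price is that you need the $\Omega$-lemma for \emph{general} real-analytic $\phi$ (you compose with $\log$ and with the nearest-point retraction $r$), whereas the paper only ever composes with entire functions whose power series converge globally in the Banach algebra. That lemma is the one soft spot in your write-up: expanding $\phi(u_0+v)=\sum_\alpha\frac{1}{\alpha!}(\partial^\alpha\phi\circ u_0)\,v^\alpha$ at a non-constant, non-smooth base point $u_0$ requires knowing that $\Vert\partial^\alpha\phi\circ u_0\Vert_{s,p}$ grows at most geometrically in $|\alpha|$, which is not delivered by the algebra estimate and Cauchy estimates alone and risks circularity; the standard repair is to complexify $\phi$, establish a Moser-type bound $\Vert\phi\circ u\Vert_{s,p}\le C(\Vert u\Vert_\infty)(1+\Vert u\Vert_{s,p})$ for local boundedness, and conclude via G\^ateaux holomorphy, or simply to cite the literature on manifolds of Sobolev maps as you suggest. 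With that point either cited or proved carefully, your proof stands.
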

\begin{proof}
	First we show that $W^{s,p}(S^1, G)$ is a topological group. This can be proved by using the fact that any two functions $f,g$ in  $W^{s,p}(S^1, \R^\nu )$ verify, for $p $ and $sp$ in $(1, \infty)$, the estimate \cite{SA}
	\[
	\Vert fg \Vert_{s,p} \leq C_{s,p} \Vert f \Vert_{s,p} \Vert g \Vert_{s,p} \,.
	\]
By this estimate and by the identity $f^{-1} - g^{-1} = f^{-1}(g-f)g^{-1}$ it follows that  $W^{s,p}(S^1, G)$ is a topological group for $p$ and $sp$ in $(1, \infty)$, since it is clearly a Hausdorff space. Now we define the map
	\[
	\exp_{s,p} \colon W^{s,p}(S^1, \g_0) \to W^{s,p}(S^1, G) \,, \quad \exp_{s,p}(X)(z) = \exp_G(X(z))  \,.
	\]
This map is a local homeomorphism since $\exp_G \cdot X$ is an absolutely convergent series for $\Vert X \Vert_{s,p} < 1/C_{s,p}$. We check that $W^{s,p}(S^1, G)$ is connected. By the density of $C^\infty(S^1, G)$ in $W^{s,p}(S^1, G)$ (see Theorem 1.1. of  \cite{BPVS}), it suffices to prove that $C^\infty(S^1,G)$ is path connected and then connected. But a smooth homotopy between two loops in $G$ is a path in $C^\infty(S^1,G)$ and the connectedness follows. Finally, we conclude if we prove that the group operations of inversion and multiplication are analytic. By connectedness we can reduce to prove this in an open neighborhood of the identity (see \cite{TL}, Lemma 2.2.1.). The inversion $X \mapsto -X$ is clearly analytic. The analyticity of  left and right multiplication follows from the Baker-Campbell-Hausdorff-Dynkin formula, and the theorem is proved.
\end{proof}

\begin{cor} \label{cor:prod}
	Every loop $\gamma$ in $W^{s,p}(S^1, G)$ is a finite product of exponentials.
\end{cor}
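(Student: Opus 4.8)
The plan is to deduce the corollary from the purely topological fact that a connected topological group is generated by any neighbourhood of the identity, together with the local surjectivity of $\exp_{s,p}$ established in the Theorem. So first I would invoke that Theorem: since $\exp_{s,p}$ is a local homeomorphism at $0$, its image contains an open neighbourhood $V$ of the identity $e$ in $W^{s,p}(S^1,G)$.

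Next I would arrange for $V$ to be symmetric. Because $\exp_{s,p}(X)^{-1}=\exp_{s,p}(-X)$, the set $V^{-1}$ is again contained in the image of $\exp_{s,p}$; replacing $V$ by the open neighbourhood $V\cap V^{-1}$ of $e$, we may assume $V=V^{-1}$ and $V\subseteq \exp_{s,p}\big(W^{s,p}(S^1,\g_0)\big)$. Then I would consider the subgroup $H=\bigcup_{n\geq 1}V^{\,n}$ generated by $V$. It is open, being a union of left translates of the open set $V$; an open subgroup of a topological group is also closed, since its complement is a union of cosets, each of which is open. By the Theorem, $W^{s,p}(S^1,G)$ is connected, so the nonempty open-and-closed set $H$ must be the whole group.

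Finally, given any $\gamma\in W^{s,p}(S^1,G)$, this shows $\gamma\in V^{\,n}$ for some $n$, i.e.\ $\gamma=g_1\cdots g_n$ with each $g_j\in V$; writing $g_j=\exp_{s,p}(X_j)$ with $X_j\in W^{s,p}(S^1,\g_0)$ exhibits $\gamma$ as a finite product of exponentials, as claimed.

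I do not expect a genuine obstacle here: the argument is the standard ``a connected group is generated by any identity neighbourhood'' reasoning, and the only point requiring (minor) care is that the identity neighbourhood can be taken simultaneously symmetric and inside the range of $\exp_{s,p}$, which is immediate from $\exp_{s,p}(-X)=\exp_{s,p}(X)^{-1}$.
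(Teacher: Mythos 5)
Your argument is correct and is exactly the standard reasoning the paper implicitly relies on: the corollary is stated without proof precisely because it follows from the theorem's two ingredients (local surjectivity of $\exp_{s,p}$ and connectedness of $W^{s,p}(S^1,G)$) via the ``open subgroup of a connected group is everything'' argument you spell out. No gaps; the symmetrization step is, as you note, immediate from $\exp_{s,p}(-X)=\exp_{s,p}(X)^{-1}$.
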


We have formally defined our Sobolev loop group $W^{s,p}(S^1, G)$ and we have checked that such a space has good topological and analytical properties. Now we are finally ready to extend our positive energy representation of $LG$.

\begin{lem} \label{lem:net}
	Let $X$ be a topological space. Consider  a net $(x^i)_{i \in I}$ in $X$ such that $x^i \to x$ for some $x$ in $X$. Suppose that for every $i \in I$ there is a directed set $A_i$ and a net $(x^i_\alpha)_{\alpha \in A_i}$ convergent to $x^i$. Then there is a net with values in $D= \{ x^i_\alpha \}$ which is convergent to $x$.
\end{lem}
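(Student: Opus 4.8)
The cleanest route avoids building a diagonal net by hand and argues through closures, using the standard fact that in any topological space a point belongs to $\overline{S}$ if and only if some net with values in $S$ converges to it (index by the neighbourhood filter of the point ordered by reverse inclusion; this needs no separation axiom). So the statement reduces to showing $x \in \overline{D}$.

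First, for each $i \in I$ the net $(x^i_\alpha)_{\alpha \in A_i}$ takes values in $D$ and converges to $x^i$, so by the characterization above $x^i \in \overline{D}$; that is, $\{x^i : i \in I\} \subseteq \overline{D}$. Taking closures and using monotonicity and idempotence of the closure operator gives $\overline{\{x^i : i \in I\}} \subseteq \overline{\overline{D}} = \overline{D}$. On the other hand $(x^i)_{i \in I}$ is a net in $\{x^i : i \in I\}$ converging to $x$, hence $x \in \overline{\{x^i : i \in I\}} \subseteq \overline{D}$. Applying the characterization once more yields a net with values in $D$ converging to $x$, which is the assertion.

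For later use it is worth recording the explicit form of such a net. Order $J := I \times \prod_{j \in I} A_j$ by declaring $(i,f) \le (i',f')$ iff $i \le i'$ in $I$ and $f(j) \le f'(j)$ in $A_j$ for every $j \in I$; this is directed because $I$ and each $A_j$ are, choosing joint upper bounds componentwise. Then $y(i,f) := x^i_{f(i)} \in D$ defines a net on $J$. Given an open neighbourhood $U$ of $x$, pick $i_0$ with $x^i \in U$ for all $i \ge i_0$ (using $x^i \to x$), and for each such $i$ an index $\alpha_i \in A_i$ with $x^i_\beta \in U$ for all $\beta \ge \alpha_i$ (using $x^i_\alpha \to x^i$, noting $U$ is a neighbourhood of $x^i$); defining $f_0(j) = \alpha_j$ for $j \ge i_0$ and arbitrarily otherwise, one checks $y(i,f) \in U$ whenever $(i,f) \ge (i_0,f_0)$, so $y \to x$.

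I do not expect a genuine obstacle: the only points needing care are the reduction to open neighbourhoods in the net characterization of closure (every neighbourhood contains an open one), and, in the explicit version, the verification that $J$ is directed, which is the step where the axiom of choice is invoked.
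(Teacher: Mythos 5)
Your proof is correct, and it takes a genuinely different route from the paper's. The paper builds one net indexed by the disjoint union $\coprod_{i \in I} A_i$ equipped with a lexicographic-type order ($\alpha_i \le \alpha'_j$ iff $i \le j$, with the extra condition $\alpha_i \le \alpha'_i$ in $A_i$ when $i = j$), sends $\alpha_i \mapsto x^i_{\alpha_i}$, shows that $x$ is an accumulation point of this net, and then extracts a convergent subnet; the net it ultimately produces is that subnet. You instead reduce everything to the set-topological fact $x \in \overline{D}$, deduced from $\{x^i : i \in I\} \subseteq \overline{D}$ together with monotonicity and idempotence of closure, and then invoke the net characterization of closure once more -- this avoids both the ad hoc order on the disjoint union and the passage to a subnet, at the cost of returning a net indexed by the neighbourhood filter of $x$ rather than one assembled from the given index sets. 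Your supplementary explicit construction is Kelley's diagonal net on $I \times \prod_{j} A_j$ with the product order, which converges to $x$ outright with no subnet extraction, and is again a different indexing device from the paper's. Both of your arguments are complete; the two points of care you flag -- passing to open neighbourhoods so that a neighbourhood of $x$ containing $x^i$ is also a neighbourhood of $x^i$, and the use of choice in verifying that the product index set is directed -- are exactly the right ones, and the first of them is also present, silently, in the paper's own argument.
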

\begin{proof}
 Fix a neighborhood $U$ of $x$ and an element ${\alpha}_i$ in $A_i$ for some $i \in I$. By convergence one can find an element ${\alpha}'_j$ in $A_j$ for some $j \geq i$, with eventually ${\alpha}_i \leq {\alpha}'_j$ if $i=j$, which belongs to $U$. This means that  $x$ is an accumulation point of the net
	\[
	A = \coprod_{i \in I} A_i \to D \, ,\quad \alpha_i \mapsto y^i_{\alpha_i} \,.
	\]
	Here $A $ is the disjoint union of the sets $A_i$ considered with the lexicographical order:  $\alpha _i \leq \alpha'_j$ if and only if $i \leq j$ in $I$ and $\alpha _i \leq \alpha'_i$ in $A_i$ in the case $i=j$. It follows the existence of a subnet $B \to D$ of $A$ convergent to $x$.
\end{proof}

\begin{example} \label{example:net}
Let $G$ be a topological group. Let $(x_\alpha)_{\alpha \in A}$ and $(y_\beta)_{\beta \in B}$ be nets in $G$, with $x_\alpha \to x$ and $y_\beta \to y$. Then the net $(x_\alpha y_\beta)_{ ( \alpha, \beta ) \in A \times B }$, where $A \times B$ is the product directed set, is convergent to $xy$. Indeed,	given a neighborhood $W$ of the identity, there is a neighborhood $V$ of the identity such that $V^2 \subseteq W$. By definition there exist $\alpha_0$ in $A$ and $\beta_0$ in $B$ such that $x_\alpha$ is in $xV$ and $y_\beta$ is in $Vy$ for every $\alpha \geq \alpha_0$ and $\beta \geq \beta_0$, and by $xV^2 y \subseteq xWy$ the assertion follows.
\end{example}

\begin{prop} \label{prop:extension}
	Let $i \colon G \to H$ and $\pi \colon G \to U$ be two homomorphisms of topological groups. We suppose $H$ to be connected and $i(G)$ to be dense in $H$. We suppose the existence of a neighborhood $V$ in $H$ of the identity such that
	\[
	p(v) =\lim_\alpha \pi(g_\alpha)\,, \quad i(g_\alpha) \to v \,,
	\] 
	is well defined and does not depend on the the choice of the net $(g_\alpha)_{\alpha \in A}$. Then, we can define a homomorphism $p \colon H \to U$ such that $\pi = p \cdot i$.
\end{prop}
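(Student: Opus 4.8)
The strategy is to first extend $p$ from the neighborhood $V$ to all of $H$ using connectedness, then verify the resulting map is a well-defined homomorphism. Since $H$ is connected and $V$ is a neighborhood of the identity, $V$ generates $H$, so every $h \in H$ can be written as $h = v_1 \cdots v_n$ with each $v_j \in V$. The natural definition is $p(h) = p(v_1) \cdots p(v_n)$, and the crux is to show this does not depend on the chosen factorization. First I would record the key compatibility: if $g_\alpha, g'_\beta$ are nets in $G$ with $i(g_\alpha) \to v$ and $i(g'_\beta) \to v'$ where $v, v', vv' \in V$, then by Example \ref{example:net} the product net $i(g_\alpha g'_\beta) \to vv'$, so by the hypothesis that $p(vv')$ is independent of the approximating net we get $p(vv') = \lim_{\alpha,\beta} \pi(g_\alpha g'_\beta) = \lim_{\alpha,\beta} \pi(g_\alpha)\pi(g'_\beta) = p(v)p(v')$. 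This local multiplicativity is the engine of the whole argument.

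Next I would establish well-definedness of $p(h) = p(v_1)\cdots p(v_n)$ on $H$. The clean way is: given two factorizations $v_1 \cdots v_n = w_1 \cdots w_m$ of the same element $h$, one wants $\prod p(v_j) = \prod p(w_k)$. The standard trick is to pass to a common refinement — shrink $V$ to a symmetric neighborhood $V_0$ with $V_0^2 \subseteq V$ (or iterate to get $V_0^N \subseteq V$ for the needed $N$) and use the local multiplicativity above to show that inserting $v v^{-1}$ or regrouping adjacent factors lying in $V_0$ does not change the product. Concretely, for $v, v' \in V_0$ we have $p(v)p(v') = p(vv')$ by the previous paragraph (since $vv' \in V$), which lets one merge and split factors freely as long as partial products stay in $V$; a connectedness/path argument (or the fact that the "germ of homomorphism" on $V_0$ extends uniquely to the group generated, a standard fact about topological groups — cf. the use of \cite{TL}, Lemma 2.2.1 style arguments) then gives that any two $V_0$-factorizations of $h$ yield the same value. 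I would also check $p(e) = \lim \pi(e) = e$ using the constant net, so $p$ is unambiguous.

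Finally, with $p \colon H \to U$ well-defined, homomorphism property and the relation $\pi = p \circ i$ are essentially bookkeeping: for $h = v_1\cdots v_n$ and $h' = w_1 \cdots w_m$, concatenating the factorizations gives $p(hh') = p(v_1)\cdots p(v_n) p(w_1) \cdots p(w_m) = p(h)p(h')$; and for $g \in G$, choosing the constant net $g_\alpha \equiv g$ (valid when $i(g) \in V$, and extended to all $g$ by the homomorphism property just proved, writing $i(g)$ as a product of elements of $V$ that are themselves images of elements of $G$ near the identity) shows $p(i(g)) = \pi(g)$. The main obstacle is the well-definedness step — making the refinement/regrouping argument rigorous, since factorizations into elements of $V$ are highly non-unique; this is where I would lean on Lemma \ref{lem:net} (to handle the interplay of the various approximating nets when several factors are approximated simultaneously) and on the symmetric-neighborhood shrinking to control how partial products move. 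Continuity of $p$ is not asserted in the statement, so I would not need to address it, though it would follow similarly near the identity from the hypothesis.
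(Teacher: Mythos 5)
Your reduction of everything to the well-definedness of $p(v_1)\cdots p(v_n)$ across different $V$-factorizations is where the argument breaks. The ``standard fact'' you invoke --- that a germ of a homomorphism on a symmetric neighborhood $V_0$ of the identity extends to the subgroup it generates --- is the monodromy principle, and it requires $H$ to be \emph{simply} connected, not merely connected. The local homomorphism $e^{i\theta}\mapsto\theta$ from a small arc of $S^1$ into $\R$ is locally multiplicative in exactly your sense ($p(vv')=p(v)p(v')$ whenever $v,v',vv'$ all lie in the neighborhood) yet admits no extension to $S^1$; so local multiplicativity of $p$ on $V_0$, which is all your first paragraph establishes, cannot by itself deliver factorization-independence. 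Nor can you get it by comparing the two product nets $i(g^1_{\alpha_1}\cdots g^n_{\alpha_n})$ and $i(f^1_{\beta_1}\cdots f^m_{\beta_m})$ directly, because the hypothesis only guarantees that $\lim_\alpha\pi(g_\alpha)$ is independent of the approximating net when the limit point lies in $V$, not in $V^n$. (Lemma 2.2.1 of \cite{TL} is a uniqueness statement for homomorphisms on connected groups; the issue here is existence.)

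The paper sidesteps the problem by never choosing a factorization: it defines $p(h)=\lim_\alpha\pi(g_\alpha)$ intrinsically for \emph{any} net with $i(g_\alpha)\to h$, and proves by induction on $n$ that this limit exists and is net-independent for all $h\in V^n$. In the inductive step one writes $h=wv$ with $w\in V^n$ and $v\in V$, picks $h_\beta$ with $i(h_\beta)\to v$, factors $\pi(g_\alpha)=\pi(g_\alpha h_\beta^{-1})\pi(h_\beta)$, and applies the inductive hypothesis to the elements $h\,i(h_\beta)^{-1}$, which eventually lie in the open set $V^n$, using Lemma \ref{lem:net} to splice the resulting double net and Example \ref{example:net} to pass to the product of limits. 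Your first paragraph already contains the germ of this computation; if you replace your definition of $p$ by the intrinsic one and run your ``local multiplicativity'' calculation as the inductive step (with $V^n$ in place of $V$ for one of the two factors), well-definedness becomes automatic rather than an obstacle, and the rest of your outline (the homomorphism property via Example \ref{example:net} and $\pi=p\cdot i$ via constant nets) goes through.
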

\begin{proof}
	By the connectedness of $H$ we have that $H = \cup_n V^n$. We show by induction that $p$ is well defined on $V^n$ for every $n$. Suppose the thesis true for $V^n$, and consider elements $w$ in $V^n$ and $v $ in $V$. Pick a net $h_\beta$ such that $ i(h_\beta) \to v$. By inductive hypothesis and Lemma \ref{lem:net} we have that
\begin{equation}
	 \lim_\alpha  \pi(i(g_\alpha))= \lim_\beta \lim_\alpha  \pi(i(g_\alpha h^{-1}_\beta)) \pi(v)
\end{equation}
is well defined and does not depend on the net $g_\alpha$ such that $i(g_\alpha) \to wv$. Hence $p$ is well defined and clearly $\pi = p \cdot i$. The continuity follows by Lemma \ref{lem:net}, and Example \ref{example:net} shows that $p$ is actually a group homomorphism.
\end{proof}

We now consider the Fréchet Lie group $\Diff$. The action of $\Diff$ on $LG$ is smooth and any irreducible positive energy representation of $LG$ is $\Diff$-covariant. 	Let $h(\theta) \frac{d}{d \theta}$ be a smooth vector field on the circle, or more generally in $(L\g_0)_{3/2}$. We recall that the stress energy tensor
\[
T(h) = \sum_n \hat{h}_n L_n \,,
\]
is an essentially self-adjoint operator with $\hi^{\text{fin}}$ as a dense core. 
For any $X + i \alpha h$ in $L \g \rtimes i \R h$ we define $\pi(X+ i\alpha h) = \pi(X) + i\alpha  T(h)$. We notice that by the bound
\[
\Vert (1 + L_0)^k L_n \xi \Vert \leq \sqrt{c/2} (1+|n|)^{k + 3/2} \Vert (1+L_0)^{k+1} \xi \Vert \,, \quad k \in \N \,,
\]
we have
\[ \Vert \pi(X+  ih) \xi \Vert_k
\leq \sqrt{2\kappa} |X|_{k+1/2} \Vert \xi \Vert_{k + 1/2} + \sqrt{c/2} | {h} |_{k+3/2} \Vert \xi \Vert_{k+1}  \,.
\]

\begin{prop} \label{prop:sob}
    If $X$ is in $W^{s,p}(S^1, \g_0)$ for $1 \leq p \leq 2$ and $s>3/2 + 1/p$, then $\pi(X)$ is a closable operator which is essentially skew-adjoint on any core of $d$.
\end{prop}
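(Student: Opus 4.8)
The plan is to realize $i\pi(X)$ as a symmetric operator falling under Nelson's commutator theorem, with comparison operator $N = 1 + d$; since the representation has positive energy, $d \geq 0$, so $N$ is self-adjoint and $N \geq 1$. Concretely I will need: (i) that $\pi(X)$ is a well-defined, closable, skew-symmetric operator whose domain contains $\hi^{\text{fin}}$; (ii) an $N$-relative bound $\Vert \pi(X)\xi \Vert \leq c\,\Vert N\xi \Vert$; and (iii) an $N$-relative bound for the form commutator $[\,i\pi(X),N\,]$. All three are obtained by transporting to Sobolev loops, via density and continuity, the corresponding facts already available for $X \in L\g$.

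First I record what the hypothesis gives. For $1 \leq p \leq 2$ and $s > 3/2 + 1/p$ there are continuous inclusions $W^{s,p}(S^1,\g_0) \hookrightarrow (L\g_0)_{3/2} \hookrightarrow (L\g_0)_{1/2}$: writing $X = \sum_n a_n e^{in\theta}$, Hölder's inequality on the Fourier coefficients gives $|X|_{3/2} \leq \big(\sum_n (1+|n|)^{(3/2-s)p'}\big)^{1/p'}\,|X|_{s,p}$ with $1/p + 1/p' = 1$, and the series converges precisely when $s > 5/2 - 1/p$, which is implied by $s > 3/2 + 1/p$ exactly because $p \leq 2$; in particular both $X$ and $\dot X$ lie in $(L\g_0)_{1/2}$. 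Choose smooth loops $X_j \in L\g_0$ with $X_j \to X$ in $W^{s,p}$, hence $X_j \to X$ in $(L\g_0)_{3/2}$ and $\dot X_j \to \dot X$ in $(L\g_0)_{1/2}$. For $\xi \in \hi^{\text{fin}}$ the bound recalled just before the statement (with $h = 0$, $k = 0$) gives $\Vert \pi(X_j - X_k)\xi \Vert \leq \sqrt{2\kappa}\,|X_j - X_k|_{1/2}\,\Vert \xi \Vert_{1/2}$, so $(\pi(X_j)\xi)_j$ is norm-Cauchy; its limit, independent of the approximating sequence by the same estimate, defines $\pi(X)\xi$, with $\Vert \pi(X)\xi \Vert \leq \sqrt{2\kappa}\,|X|_{1/2}\,\Vert \xi \Vert_{1/2}$ and likewise $\Vert \pi(\dot X)\xi \Vert \leq \sqrt{2\kappa}\,|\dot X|_{1/2}\,\Vert \xi \Vert_{1/2}$. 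Passing to the limit in $\langle \pi(X_j)\xi,\eta\rangle = -\langle \xi,\pi(X_j)\eta\rangle$ (valid since $\pi(X_j)^* = \pi(X_j^*) = -\pi(X_j)$ for $X_j \in L\g_0$) shows $\pi(X)$ is skew-symmetric on $\hi^{\text{fin}}$, hence closable; and, using that $d$ is closed, passing to the limit in $d\,\pi(X_j)\xi = \pi(X_j)\,d\xi + i\,\pi(\dot X_j)\xi$ yields $\pi(X)\xi \in D(d)$ and $[d,\pi(X)]\xi = i\pi(\dot X)\xi$ for all $\xi \in \hi^{\text{fin}}$.

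Next, the norms $\Vert \cdot \Vert_{m}$ in the recalled bound are comparable to the graph norms of $(1+d)^{m}$, the two differing by the bounded operator $L_0 - d$ (on a finite type level-$\ell$ representation $L_0 - d$ acts as the scalar $C_\lambda/2(\ell+g)$ on each irreducible summand). Hence on $\hi^{\text{fin}}$ one has $\Vert \pi(X)\xi \Vert \leq \sqrt{2\kappa}\,|X|_{1/2}\,\Vert \xi \Vert_{1/2} \leq c\,\Vert (1+d)\xi \Vert = c\,\Vert N\xi \Vert$, which is (ii). Because $\hi^{\text{fin}}$ is a core for $N$, this relative bound extends $\pi(X)$ to a closed operator with $D(N) = D(d) \subseteq D\big(\overline{\pi(X)}\big)$, and shows that every core of $d$ is a core for $\overline{\pi(X)}$. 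For (iii), put $A = i\,\overline{\pi(X)}$ restricted to $D(N)$; this is symmetric with $D(A) \supseteq D(N)$, and $[A,N] = i[\pi(X),d] = \pi(\dot X)$ (using $[d,\pi(X)] = i\pi(\dot X)$), so for $\psi \in \hi^{\text{fin}}$
\[
\big| \langle \psi, [A,N]\psi\rangle \big| = \big| \langle \psi, \pi(\dot X)\psi\rangle \big| \leq \Vert \psi \Vert\,\Vert \pi(\dot X)\psi \Vert \leq \sqrt{2\kappa}\,|\dot X|_{1/2}\,\Vert \psi \Vert_{1/2}^{2} \leq c'\,\Vert N^{1/2}\psi \Vert^{2},
\]
where I used $\Vert \psi \Vert \leq \Vert \psi \Vert_{1/2}$ and the comparison of $\Vert \cdot \Vert_{1/2}$ with the graph norm of $(1+d)^{1/2}$; since both sides are continuous in the graph norm of $N$, this extends from $\hi^{\text{fin}}$ to all of $D(N)$.

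By (i)--(iii), Nelson's commutator theorem (see Reed--Simon, Theorem X.37) applies to $A = i\pi(X)$ and $N = 1 + d$: $A$ is essentially self-adjoint on $D(N)$ and on every core for $N$. Since $d \geq 0$, a dense subspace is a core for $N = 1+d$ if and only if it is a core for $d$; and $\hi^{\text{fin}}$, hence also $\hi^\infty$, is a core for $d$ contained in $D(N)$. Therefore $\overline{\pi(X)}$ is skew-adjoint and every core of $d$ is a core for it, which is the assertion. The step that requires genuine care is the limiting argument of the second paragraph: one must check that the relations $\pi(X)^* = \pi(X^*)$ and $[d,\pi(X)] = i\pi(\dot X)$, along with the operator bounds that control them, really persist when $X$ is only a Sobolev loop; once that is done, the rest is routine, and the appeal to Nelson's theorem could equally well be replaced by an elementary direct argument.
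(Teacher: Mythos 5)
Your overall strategy is exactly the paper's: embed $W^{s,p}(S^1,\g_0)$ into $(L\g_0)_{3/2}$, use the standard energy bounds to get $\pi(X)$ and $[1+d,\pi(X)]$ bounded from $\hi^{1/2}$ to $\hi^{-1/2}$, and invoke Nelson's commutator theorem with $N=1+d$. The functional-analytic part (skew-symmetry and closability by density, persistence of $[d,\pi(X)]=i\pi(\dot X)$, the form bound for the commutator, and the identification of cores of $N$ with cores of $d$) is carried out correctly and in more detail than the paper gives.

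There is, however, a genuine gap in the embedding step, which is the only place the hypotheses $1\leq p\leq 2$ and $s>3/2+1/p$ enter. Your H\"older split pairs the exponents the wrong way: you bound $|X|_{3/2}$ by $\bigl(\sum_n(1+|n|)^{(3/2-s)p'}\bigr)^{1/p'}\,|X|_{s,p}$, where $|X|_{s,p}$ is the $\ell^p$-based norm on the weighted Fourier coefficients. For $p<2$ this quantity is \emph{not} controlled by the $W^{s,p}$ norm $\Vert\Delta^{s/2}X\Vert_p+\Vert X\Vert_p$: the Hausdorff--Young inequality sends $L^p$ into $\ell^{p'}$, not into $\ell^p$ (for $p=1$, e.g., $X\in W^{s,1}$ only gives $(1+|k|)^s\Vert a_k\Vert$ bounded, and $|X|_{s,1}=|X|_s$ can be infinite). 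So the chain "$X_j\to X$ in $W^{s,p}$ hence in $(L\g_0)_{3/2}$" is not justified as written. The fix is the paper's: split with the conjugate pairing, $|X|_{3/2}\leq\bigl(\sum_k(1+|k|)^{(3/2-s)p}\bigr)^{1/p}|X|_{s,p'}$, whose first factor converges precisely when $s>3/2+1/p$ (which is why that threshold appears in the statement), and then use Hausdorff--Young (Riesz--Thorin) to bound $|X|_{s,p'}$ by $\Vert X\Vert_{s,p}$. The fact that your version seemed to need only the weaker condition $s>5/2-1/p$ was the signal that the other H\"older factor had become too strong a norm. For $p=2$ the two readings coincide and your argument is complete.
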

\begin{proof}
    We know that for every vector $\xi$ in $\hi^\text{fin}$ and $X$ in $L^\text{pol}\g$ we have \cite{TL}
	\begin{equation}
		\begin{split}
		\Vert \pi(X) \xi \Vert_t & \leq \sqrt{2(\ell + g)} |X|_{|t|+ 1/2} \Vert \xi \Vert_{t + 1/2} \,, \\
		\Vert [1+d,\pi(X)] \xi \Vert_t & \leq \sqrt{2(\ell + g)} |X|_{|t|+ 3/2} \Vert \xi \Vert_{t+ 1/2} \,,
		\end{split}
	\end{equation}
	for any $t$ in $\R$. By density one extends $\pi$ to  $(L\g)_{|t|+3/2}$ in such a way to still verify the same estimates for $\xi$ in $\hi^{t+ 1/2}$. It follows that if $X$ is in $L\g_{3/2}$  then both $\pi(X)$ and $ [1+d,\pi(X)] $ are bounded  operators from $\hi^{1/2}$ to $\hi^0 \subseteq \hi^{-1/2}$. Therefore, by the Nelson commutator theorem (Thm. X.36 in \cite{RS}) we have that if $X$ is in $(L\g_0)_{3/2}$ then the restriction of $\pi(X)$ on
	\[
	\mathcal{D}= \{ \psi \in \hi \cap \hi^{1/2} \colon \pi(X) \in \hi \}
	\]
	is a closable operator on $\hi$ which is essentially skew-adjoint on any core of $d$ such as $\hi^\text{fin}$. 	Notice now that, by standard arguments, there is a norm continuous embedding $W^{s,p}(S^1, \g) \hookrightarrow L\g_{3/2} $. Indeed, if $X (\theta)= \sum_k a_k  e^{i k \theta}$  then by the Hölder inequality
	\begin{equation}
	| X |_{3/2}  = \sum_k (1 + |k|)^{3/2} \Vert a_k \Vert = \sum_k (1 + |k|)^{3/2-s} (1 + |k|)^{s} \Vert a_k \Vert \leq A_{s,p} |{X} |_{s,p'} \leq B_{s,p} \Vert {X} \Vert_{s,p} \,,
	\end{equation}
	where $A_{s,p} $ exists finite by construction and $B_{s,p} $ exists finite by Riesz-Thorin. Therefore, by the arguments given above we have that if $X$ is in $W^{s,p}(S^1, \g_0)$ then $\pi(X)$ is a skew-symmetric operator on $\hi^\textit{\em fin}$ which is essentially skew-adjoint on any core of $d$.
\end{proof}

Propositions \ref{prop:extension} and \ref{prop:sob} can be used to extend a positive energy representation of $LG$ to a strongly continuous projective representation of $W^{s,p}(S^1, G)$. However, for simplicity and convenience in the following we will focus on $H^s(S^1, G)=W^{s,2}(S^1, G)$.

\begin{prop}
    The map $\pi$ of $L\g$ can be extended to $H^{3/2}(S^1, \g)$, with $\pi(X)$ closable and such that
    \begin{equation} \label{eq:H2-est}
    \Vert \pi(X) \xi \Vert \leq C \Vert X \Vert_{H^{3/2}} \Vert \xi \Vert_{1/2} \,, \quad \xi \in \hi^{1/2} \,.
    \end{equation}
  Moreover, $ \pi(X)^* = \overline{\pi(X^*)} $, and in particular $ \pi(X) $ is essentially skew-adjoint if $X$ is in $ H^{3/2}(S^1, \g_0) $. 
\end{prop}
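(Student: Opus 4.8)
The plan is to run essentially the scheme of the proof of Proposition~\ref{prop:sob}, but feeding in only the first–order energy bound of \cite{TL} together with the elementary Sobolev embedding $H^{3/2}(S^1,\g)\hookrightarrow L\g_{1/2}$; the genuinely delicate point, which I expect to be the main obstacle, is that the Nelson commutator argument of Proposition~\ref{prop:sob} is \emph{not} available at the endpoint $s=3/2$, so essential skew–adjointness has to be extracted by approximation. First I would note that for $X=\sum_k a_k e^{ik\theta}$, Cauchy--Schwarz gives
\[
|X|_{1/2}=\sum_k (1+|k|)^{-1}(1+|k|)^{3/2}\|a_k\|\le\Big(\sum_{k\in\Z}(1+|k|)^{-2}\Big)^{1/2}\Vert X\Vert_{H^{3/2}},
\]
so $H^{3/2}(S^1,\g)$ embeds continuously in $L\g_{1/2}$. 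Combined with the bound $\Vert\pi(X)\xi\Vert\le\sqrt{2(\ell+g)}\,|X|_{1/2}\Vert\xi\Vert_{1/2}$, valid on $\hi^{\text{fin}}$ for $X\in L^{\text{pol}}\g$, this yields \eqref{eq:H2-est} for polynomial $X$; since $L^{\text{pol}}\g$ is dense in $H^{3/2}(S^1,\g)$ and $\hi^{\text{fin}}$ is dense in $\hi^{1/2}$, for $X_n\to X$ in $H^{3/2}$ the operators $\pi(X_n)$ converge in $B(\hi^{1/2},\hi)$, and I define $\pi(X)$ to be the limit. Regarded as an unbounded operator on $\hi$ with dense domain $\hi^{1/2}$ it is linear in $X$ and satisfies \eqref{eq:H2-est}.

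Next I would establish the adjoint relation. The involution $X\mapsto X^*$ is an isometry of $H^{3/2}(S^1,\g)$ (the Hermitian norm on $\g$ is $*$–invariant and $*$ merely permutes Fourier modes), so $\pi(X^*)$ is defined on $\hi^{1/2}$ as well. Passing to the limit in $\langle\pi(X_n)\xi,\eta\rangle=\langle\xi,\pi(X_n^*)\eta\rangle$, first in $\xi,\eta\in\hi^{1/2}$ and then in $X_n\to X$ (using $X_n^*\to X^*$ and the uniform boundedness and norm convergence of $\pi(X_n),\pi(X_n^*)\colon\hi^{1/2}\to\hi$), yields $\langle\pi(X)\xi,\eta\rangle=\langle\xi,\pi(X^*)\eta\rangle$ for all $\xi,\eta\in\hi^{1/2}$. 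Hence $\pi(X^*)|_{\hi^{1/2}}\subseteq\pi(X)^*$; in particular $\pi(X)^*$ is densely defined, so $\pi(X)$ is closable, and passing to closures $\overline{\pi(X^*)}\subseteq\pi(X)^*$.

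The remaining — and hardest — point is the opposite inclusion $\pi(X)^*\subseteq\overline{\pi(X^*)}$, equivalently that $\hi^{\text{fin}}$ is a core for $\pi(X)^*$; the ``in particular'' is then immediate, since for $X\in H^{3/2}(S^1,\g_0)$ one has $X^*=-X$, and $\pi(X)^*=-\overline{\pi(X)}$ says precisely that $\overline{\pi(X)}$ is skew–adjoint. Here the commutator method breaks down: $[1+d,\pi(X)]=i\pi(\dot X)$ with $\dot X\in H^{1/2}(S^1,\g)$, and $H^{1/2}$ does not embed in $L\g_{1/2}$, so $\pi(\dot X)$ cannot be bounded relatively to the energy and Nelson's commutator theorem does not apply at $s=3/2$. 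My plan is instead to approximate $X$ by smooth (e.g.\ polynomial) loops $X_n\to X$ in $H^{3/2}$, for which $\overline{\pi(X_n)}^{\,*}=\overline{\pi(X_n^*)}$ — and, in the real case, essential skew–adjointness of $\pi(X_n)$ — are already available (Proposition~\ref{prop:sob}, \cite{TL}), and then to transfer the property to $X$ by a perturbative stability argument: either Kato--Rellich applied to $\overline{\pi(X)}=\overline{\pi(X_n)}+\pi(X-X_n)$, or strong resolvent / Trotter--Kato convergence of the unitary groups $e^{t\overline{\pi(X_n)}}$, exploiting that $\pi(X-X_n)$ is small as a map $\hi^{1/2}\to\hi$. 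The real work is that $\hi^{1/2}$ need not lie inside the domain of $\overline{\pi(X_n)}$ with uniformly controlled graph norm, so this step requires an energy \emph{lower} bound of the form $\Vert\xi\Vert_{1/2}\le \mathrm{const}\cdot(\Vert\pi(X_n)\xi\Vert+\Vert\xi\Vert)$ on $\hi^{1/2}$ for the approximants (valid for nowhere–vanishing loops, cf.\ \cite{TL}, \cite{GW}; note that a generic $X$, and in particular one of the form $\theta\mapsto Ze^{i\theta}+Z^*e^{-i\theta}$ with $Z,Z^*$ linearly independent, is already nowhere vanishing), together with a choice of approximating sequence converging fast enough that the corresponding constants do not deteriorate; alternatively one may simply invoke the relevant core statement from \cite{TL}.
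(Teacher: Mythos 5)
Your setup is fine and matches the paper: the Cauchy--Schwarz embedding $H^{3/2}(S^1,\g)\hookrightarrow L\g_{1/2}$, the resulting bound \eqref{eq:H2-est} on the domain $\hi^{1/2}$, and the inclusion $\overline{\pi(X^*)}\subseteq\pi(X)^*$ giving closability are exactly the first half of the paper's proof, and you correctly diagnose that Nelson's commutator theorem is unavailable at $s=3/2$ because $\dot X\in H^{1/2}$ is not energy-bounded. But your plan for the reverse inclusion $\pi(X)^*\subseteq\overline{\pi(X^*)}$ has a genuine gap. The Kato--Rellich variant rests on a lower bound $\Vert\xi\Vert_{1/2}\le C(\Vert\pi(X_n)\xi\Vert+\Vert\xi\Vert)$, and no such bound holds, nowhere-vanishing or not: a smeared current involves only the Fourier modes of $X_n$, while $(1+d)^{1/2}$ counts all modes. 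Concretely, for a constant loop $X=a$ in a Cartan subalgebra (which is nowhere vanishing), any weight-zero vector $\xi_n\in\hi(n)$ satisfies $\pi(a)\xi_n=0$ while $\Vert\xi_n\Vert_{1/2}^2=(1+n)\Vert\xi_n\Vert^2$; the same failure occurs for non-constant loops by restricting attention to vectors built from modes ``orthogonal'' to those of $X$. Without relative boundedness with bound $<1$, $\overline{\pi(X_n)}+\pi(X-X_n)$ need not even be defined on $\mathcal D(\overline{\pi(X_n)})$, let alone skew-adjoint there. The Trotter--Kato alternative is either circular (Theorem VIII.25(a) of \cite{RS} presupposes essential skew-adjointness of the limit on the common core, which is the statement to be proved) or insufficient: producing a strongly convergent limit group with generator $A\supseteq\pi(X)\vert_{\hi^{\text{fin}}}$ only exhibits \emph{a} skew-adjoint extension, not that $\hi^{1/2}$ is a core for $\pi(X)^*$. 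Finally, \cite{TL} treats smooth loops only, so there is no core statement there to invoke at $H^{3/2}$ regularity.

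What the paper actually does for this step is the mollification argument of \cite{CW}: one introduces $R_{X,\epsilon}=[\pi(X),e^{-\epsilon d}]$ on $\hi^{1/2}$, computes its action on energy eigenvectors via $f_{n,k}(\epsilon)=e^{-\epsilon k}-e^{-\epsilon(k-n)}$, and derives the bound $\Vert R_{X,\epsilon}v\Vert^2\le 4(\ell+g)\Vert X\Vert_{H^{3/2}}^2\Vert v\Vert^2$ uniformly in $\epsilon\ge 0$, whence $R_{X,\epsilon}$ and $R_{X,\epsilon}^*=-\overline{R_{X^*,\epsilon}}$ extend to bounded operators tending to $0$ strongly as $\epsilon\to 0$. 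Then for $v\in\mathcal D(\pi(X)^*)$ one has $e^{-\epsilon d}v\in\hi^\infty\subseteq\hi^{1/2}$ and
\begin{equation*}
\pi(X^*)e^{-\epsilon d}v=\pi(X)^*e^{-\epsilon d}v=e^{-\epsilon d}\pi(X)^*v-R_{X,\epsilon}^*v\longrightarrow\pi(X)^*v\,,
\end{equation*}
which together with $e^{-\epsilon d}v\to v$ gives $v\in\mathcal D(\overline{\pi(X^*)})$ and $\overline{\pi(X^*)}v=\pi(X)^*v$. If you want to salvage an approximation-in-$X$ strategy instead, you would need to replace the false lower bound by some other mechanism forcing the domains to match; the heat-semigroup commutator is precisely the device that does this, and I would recommend adopting it.
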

\begin{proof}
    We use some techniques shown in \cite{CW}. Consider $X= \sum_k a_k e^{i k \theta}$ in $H^{3/2}(S^1, \g)$. By
    \begin{equation}
    	\begin{split}
    	|X|_{1/2} & = \sum_k (1  + |k| )^{1/2} \Vert a_k \Vert  \leq  \bigg( \sum_k (1+ |k|)^{-2} \bigg)^{1/2}  \bigg( \sum_k (1+ |k|)^3 \Vert a_k \Vert^2 \bigg)^{1/2} ,
    	\end{split}
    \end{equation}
    we have that $\pi(X)$ is well defined and \eqref{eq:H2-est} follows by the previous estimates. It is also closable since $\pi(X^*) \subseteq \pi(X)^*$. Notice that since $\hi^\text{fin}$ is a core for $(1+d)^{1/2}$ then $\pi(X^*)$ is the formal adjoint of $\pi(X)$ on $\hi^{1/2}$ for any $X$ in $H^{3/2}(S^1, \g)$. Now we define on $\hi^{1/2}$ the operator
    \[
    R_{X, \epsilon} = [\pi(X), e^{- \epsilon d}] \,,
    \]
    which is well defined since $e^{- \epsilon d} \colon \hi \to \hi^\infty \subseteq \hi^{1/2}$. Since $ - R_{X^*, \epsilon} \subseteq R^*_{X, \epsilon}$ then $R_{X, \epsilon}$ is closable. Notice that if $d v_k = k v_k$ then
    \[
    R_{x(n), \epsilon} v_k = f_{n,k}(\epsilon) x(n) v_k \,, \quad f_{n,k}(\epsilon)= e^{- \epsilon k}- e^{-\epsilon(k-n)} \,.
    \]
    By simple analysis techniques one can prove that 
    \[
    |f_{n,k+n}(\epsilon)|^2(1+k+n) \leq 2 (1 + |n|)^2 \,,
    \] 
    for any $ \epsilon \geq 0 $, $k \geq 0$ and $n+ k \geq 0$. Therefore if $v = \sum_{k \geq 0} v_k$ is in $\hi^\text{fin}$ then for every $X=\sum_j x_j(n_j)$ in $L^\text{pol} \g$ we have
    \begin{equation}
    	\begin{split}
    	\Vert R_{X, \epsilon }v \Vert^2 & = \sum_{k} \Vert (R_{X, \epsilon }v)_k \Vert^2  = \sum_{j,k} |f_{n_j,k+n_j}(\epsilon)|^2 \Vert x_j(n_j) v_{k+n_j} \Vert^2 \\
    	& \leq 2(\ell + g) \sum_{j,k} |f_{n_j,k+n_j}(\epsilon)|^2 (1 + |n_j|) (1 + k + n_j)  \Vert x_j \Vert^2 \Vert  v_{k+n_j} \Vert^2 \\
     	& \leq 4(\ell + g) \sum_{j,k}  (1 + |n_j|)^3  \Vert x_j \Vert^2 \Vert  v_{k+n_j} \Vert^2 \\
     	& \leq  4(\ell + g) \Vert  X \Vert_{H^{3/2}}^2  \Vert  v \Vert^2 \,.
    	\end{split}
    \end{equation}
    In particular, $R_{X,\epsilon}$ can be defined as a bounded operator for every $X$ in $H^{3/2}(S^1, \g)$ and $R_{X,\epsilon} \to 0$ strongly as $\epsilon \to 0$. Moreover, by the identity $R^*_{X,\epsilon} = - \overline{R_{X^*, \epsilon}} $ we have that $R^*_{X,\epsilon} \to 0$ strongly as well. Now we arrive to the crucial point: if $v$ is in $\mathcal{D}(\pi(X)^*)$ then 
    \[
    \pi(X^*) e^{- \epsilon d}v =      \pi(X)^* e^{- \epsilon d}v =  e^{- \epsilon d} \pi(X)^*v - R^*_{X, \epsilon} v \to \pi(X)^*v\,, \quad \epsilon \to 0 \,,
    \]
    and this concludes the proof since $e^{- \epsilon d}v  \to v$.
\end{proof}

\begin{thm} \label{thm:ext}
	Let $\pi \colon LG \to PU(\hi)$ be a positive energy representation of $LG$. Then $\pi$ can be extended to a positive energy representation of $H^{3/2}(S^1,G)$. 
\end{thm}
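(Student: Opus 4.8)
The plan is to feed Proposition~\ref{prop:extension} with the inclusion $i\colon C^\infty(S^1,G)\hookrightarrow H^{3/2}(S^1,G)$ and the given projective representation $\pi$ into $U=PU(\hi)$. By the Theorem on Sobolev loop groups (applied with $p=2$, $s=3/2$, so that $sp=3>1$), $H^{3/2}(S^1,G)$ is a connected analytic Banach Lie group in which $C^\infty(S^1,G)$ is dense, so the structural hypotheses of Proposition~\ref{prop:extension} are satisfied and the whole problem reduces to constructing the map $p$ on a neighbourhood of the identity and checking it does not depend on the approximating net.

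First I would fix the local parametrization. From the proof of that Theorem, $\exp_{3/2,2}\colon H^{3/2}(S^1,\g_0)\to H^{3/2}(S^1,G)$ is a local homeomorphism, so there is a ball $B$ around $0$ on which it restricts to a homeomorphism onto an open neighbourhood $V$ of the identity. Let $v=\exp_{3/2,2}(X)\in V$ and let $(g_\alpha)$ be any net in $C^\infty(S^1,G)$ with $g_\alpha\to v$ in $H^{3/2}$. Eventually $g_\alpha\in V$, so $g_\alpha=\exp_{3/2,2}(Y_\alpha)$ with $Y_\alpha\to X$ in $H^{3/2}$, and $Y_\alpha\in C^\infty(S^1,\g_0)$ since $\exp_G$ is a local diffeomorphism. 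Choosing for each $\alpha$ the unitary lift $e^{\overline{\pi(Y_\alpha)}}$ of $\pi(g_\alpha)$ — the exponential of the (essentially skew-adjoint, on $\hi^{\text{fin}}$) generator — I claim that $e^{\overline{\pi(Y_\alpha)}}\to e^{\overline{\pi(X)}}$ strongly. Indeed, by the preceding Proposition $\pi(X)$ is essentially skew-adjoint on $\hi^{\text{fin}}$, and by \eqref{eq:H2-est}
\[
\Vert(\pi(Y_\alpha)-\pi(X))\xi\Vert \le C\,\Vert Y_\alpha-X\Vert_{H^{3/2}}\,\Vert\xi\Vert_{1/2}\longrightarrow 0
\]
for every $\xi\in\hi^{\text{fin}}$; since $\hi^{\text{fin}}$ is a common core for the skew-adjoint operators $\overline{\pi(Y_\alpha)}$ and $\overline{\pi(X)}$, convergence of the generators on a core forces strong resolvent convergence and hence strong convergence of the associated one-parameter groups, uniformly on compact time intervals (see \cite{RS}, Section~VIII.7, whose arguments apply verbatim to nets). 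Evaluating at time $1$ proves the claim, so $p(v):=e^{\overline{\pi(X)}}$, read in $PU(\hi)$, is a well-defined limit independent of $(g_\alpha)$; Proposition~\ref{prop:extension} then yields a strongly continuous homomorphism $p\colon H^{3/2}(S^1,G)\to PU(\hi)$ with $p\circ i=\pi$.

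It remains to promote $p$ to a positive energy representation, i.e.\ to extend it to $H^{3/2}(S^1,G)\rtimes\T$ with the original $R$. Rotations act isometrically and strongly continuously on $H^{3/2}(S^1,G)$, so the covariance identity $R_\theta\,\pi(\gamma)\,R_\theta^{-1}=\pi(R_\theta.\gamma)$, valid on the dense subgroup $C^\infty(S^1,G)$, passes to the limit and holds for every $\gamma\in H^{3/2}(S^1,G)$; this produces the required projective strongly continuous unitary representation of the semidirect product, and the commutative diagram with $R$ together with the isotypical decomposition $\hi=\bigoplus_{n\ge0}\hi(n)$ are inherited unchanged from $\pi$, so the extension is again a PER of finite type. (No irreducibility is needed, since the last two Propositions and the bound \eqref{eq:H2-est} are representation-wide; alternatively one decomposes $\pi$ into irreducible PERs and argues componentwise.)

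The step I expect to be the main obstacle is the middle one: upgrading the Sobolev convergence $Y_\alpha\to X$ to strong convergence $e^{\overline{\pi(Y_\alpha)}}\to e^{\overline{\pi(X)}}$ of the corresponding unitaries in a way that is independent of the net. This is exactly where the essential skew-adjointness and the quantitative estimate \eqref{eq:H2-est} of the preceding Propositions are indispensable, and where the projective phase ambiguities must be controlled — handled above by systematically using the generator-exponential lift $e^{\overline{\pi(Y_\alpha)}}$ of $\pi(g_\alpha)$ and the net form of the Trotter--Kato theory.
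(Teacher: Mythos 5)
Your proposal is correct and follows essentially the same route as the paper: both define the extension locally by $\gamma=\exp_{H^{3/2}}(X)\mapsto e^{\pi(X)}$ on a neighbourhood where the exponential is a homeomorphism, use the bound \eqref{eq:H2-est} to get convergence of generators on the common core $\hi^{\text{fin}}$ (hence strong resolvent convergence and convergence of the unitaries), feed this into Proposition~\ref{prop:extension}, and recover rotation covariance from the continuous action of $\text{Rot}$ on $H^{3/2}(S^1,G)$ together with density of $LG$. The only difference is presentational: you spell out the choice of unitary lift and the net form of the Trotter--Kato argument, which the paper leaves implicit.
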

\begin{proof}
We consider an open neighborhood $U$ in $H^{3/2}(S^1, \g_0)$ on which the exponential map of $H^{3/2}(S^1, G)$ is a homeomorphism and set $V = \exp_{H^{3/2}}(U)$. For $\gamma = \exp_{H^{3/2}}(X)$ in $V$ we define in $PU(\hi)$
	\[
	\pi(\gamma) = e^{\pi(X)} \,, \quad  X \in U \,.
	\]
The neighborhood $V$ verifies Proposition \ref{prop:extension}, since if $\gamma_\alpha= \exp(X_\alpha)$ converges to $\gamma= \exp(X)$ in $V$ then the estimate \eqref{eq:H2-est} implies that $\pi(X_\alpha)\xi$ is a Cauchy net for every $\xi $ in $\hi^{1/2}$. But the pointwise convergence of self-adjoint operators on a common core implies the strong resolvent convergence of such operators (Theorem VIII.25.(a) of \cite{RS}), thus $\pi$ can be continuously extended. Finally, since the rotation group acts on $H^{3/2}(S^1, G)$ by continuous operators (see Lemma A.3 of \cite{CDIT}) and since $LG$ is dense in $H^{3/2}(S^1, G)$, we have that $\pi$ is actually a positive energy representation since it is $\text{Rot} $-covariant.
\end{proof}



\begin{prop} \label{prop:exp}
	\cite{TL} Let $R_s = \exp_{\Diff}(sh)$ be a smooth diffeomorphism of $S^1$, with $h$ a smooth vector field of the circle. Set $R_h=\{R_s\}_{s \in \R}$. Then the exponential map $L \g_0 \rtimes \R h \to LG \rtimes R_h$ is well defined and continuous. Moreover, if $X_\alpha = R_\alpha . X $ then
	\begin{equation} \label{eq:exp}
			\exp_{LG \rtimes R_h}(X+ \alpha h) = \exp_{LG}(X_\alpha) R_\alpha \,.
	\end{equation}
\end{prop}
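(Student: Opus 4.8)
The plan is to recognise both sides of \eqref{eq:exp} as values of one-parameter subgroups of $LG\rtimes R_h$ and to compare them. Since the projection $LG\rtimes R_h\to R_h$ is a group homomorphism it intertwines exponential maps, so writing $c(t):=\exp_{LG\rtimes R_h}\!\bigl(t(X+\alpha h)\bigr)=(\gamma(t),R_{t\alpha})$ the $R_h$-component is forced to be $\exp_{R_h}(t\alpha h)=R_{t\alpha}$. Differentiating the subgroup relation $c(t+s)=c(t)c(s)$ in $s$ at $s=0$ against the semidirect-product multiplication rule — with the $\Diff$-action on $LG$ normalised so that its infinitesimal form is the $h.X=h\,\tfrac{d}{d\theta}X$ of the Introduction — produces a first-order evolution equation for $\gamma$ on $LG$ (of the shape $\gamma(t)^{-1}\dot\gamma(t)=R_{t\alpha}.X$, $\gamma(0)=e$) driven by the smooth curve $t\mapsto R_{t\alpha}.X$ in $L\g_0$, the Lie algebra of $LG$. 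Existence and uniqueness of $\gamma$, hence well-definedness of $\exp_{LG\rtimes R_h}$ on $L\g_0\rtimes\R h$, follow either abstractly from regularity of the Lie group $LG$, or — more concretely — by fixing a faithful matrix realisation $G\subset U(n)$, solving the resulting pointwise-in-$\theta$ linear ODE in $\mathrm{GL}_n(\C)$, and verifying that the solution stays $G$-valued and depends smoothly on $\theta$; continuity of $(X,\alpha)\mapsto\exp_{LG\rtimes R_h}(X+\alpha h)$ is then smooth dependence of solutions of linear ODEs on parameters, together with continuity of $\alpha\mapsto R_\alpha$.

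For \eqref{eq:exp} I would verify that the curve $t\mapsto\exp_{LG}\!\bigl((tX)_{t\alpha}\bigr)\,R_{t\alpha}$, with $(tX)_{t\alpha}=R_{t\alpha}.(tX)$, is precisely this one-parameter subgroup. It equals the identity at $t=0$, and its derivative there is $X+\alpha h$ because the differential of $\exp_{LG}$ at the origin is the identity and $\tfrac{d}{dt}\big|_{0}R_{t\alpha}=\alpha h$. The substantive point is then that this curve is a homomorphism — equivalently that it solves the evolution equation above — and the tools for this are: $R_s$ acts on $LG$ by group automorphisms, hence commutes with $\exp_{LG}$; the $R_s$ compose as the flow of $h$; and the flow identity $\partial_sR_s=h\circ R_s$. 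Substituting these into the semidirect-product product and matching the $LG$- and $R_h$-components gives the homomorphism property, whence $c(t)=\exp_{LG\rtimes R_h}(t(X+\alpha h))$, and setting $t=1$ yields \eqref{eq:exp}.

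The step I expect to be the main obstacle is the convention bookkeeping interlocked with that last derivative computation: one must pin down exactly which factor of $LG\rtimes R_h$ is acted on, and with which sign, compatibly with both $R_s=\exp_{\Diff}(sh)$ and the normalisation $h.X=h\,\tfrac{d}{d\theta}X$, and then carry the identity $\partial_sR_s=h\circ R_s$ cleanly through to reconcile the closed form $\exp_{LG}(X_\alpha)$ with the driving equation for $\gamma$. By contrast the existence-and-continuity half is routine once the problem has been transported, through a faithful representation, to an honest linear matrix ODE. One should also keep in mind that when the flow of $h$ is periodic $R_h$ is a circle rather than a line, so that $\exp_{R_h}$ is only a covering onto $R_h$ and $R_\alpha$ must be read as $\exp_{R_h}(\alpha h)$; this does not affect \eqref{eq:exp}.
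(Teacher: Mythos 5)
Your strategy is the same as the paper's: both arguments reduce \eqref{eq:exp} to identifying the curve $c(t)=\exp_{LG}(tX_{t\alpha})R_{t\alpha}$ with the one-parameter subgroup generated by $X+\alpha h$, the paper via the right-invariant ODE $\dot{\gamma}^t=X\gamma^t+\alpha\frac{d}{ds}\vert_{s=0}(\gamma^t)_s$ and you via the equivalent left-logarithmic form $\gamma(t)^{-1}\dot{\gamma}(t)=R_{t\alpha}.X$; your derivation of that equation, of the $R_h$-component, and the existence/continuity half are all fine. The problem is precisely the step you single out as ``the substantive point'': the candidate curve is \emph{not} a one-parameter subgroup, and no convention bookkeeping repairs this. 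Take $h=\partial_\theta$, so $R_s\theta=\theta+s$ and $X_s(\theta)=X(\theta-s)$. The relation $c(t)c(t)=c(2t)$, after pushing $R_{t\alpha}$ past the second factor, demands pointwise in $\theta$
\[
\exp_G\big(tX(\theta-t\alpha)\big)\,\exp_G\big(tX(\theta-2t\alpha)\big)=\exp_G\big(2tX(\theta-2t\alpha)\big)\,,
\]
i.e.\ $X(\theta-t\alpha)=X(\theta-2t\alpha)$ for small $t$, which fails unless $X$ is invariant under the flow of $h$. Equivalently, $\gamma(t)=\exp_{LG}(tX_{t\alpha})$ satisfies $\gamma(t)^{-1}\dot{\gamma}(t)=R_{t\alpha}.X+O(t)$ but not the exact equation: differentiating $tX(R_{t\alpha}^{-1}\theta)$ in $t$ produces an extra transport term proportional to $t\alpha\,h\,\partial_\theta X$ along the flow (further filtered through the differential of $\exp_G$), and it does not cancel.

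The honest solution of $\gamma^{-1}\dot{\gamma}=R_{t\alpha}.X$, $\gamma(0)=e$, is the product integral (time-ordered exponential) of the curve $s\mapsto R_{s\alpha}.X$; it collapses to a single exponential only when the rotated copies of $X$ commute pointwise, and even then one obtains $\exp_{LG}\big(\int_0^1X_{s\alpha}\,ds\big)R_\alpha$, which already differs from $\exp_{LG}(X_\alpha)R_\alpha$. So \eqref{eq:exp} holds only to first order, i.e.\ after replacing $(X,\alpha)$ by $(X/n,\alpha/n)$ and up to $o(1/n)$ --- which is in fact all that the subsequent Trotter-formula corollary consumes. To be fair, the paper's own proof stops at exactly the same point, asserting ``the solution of this equation is $\gamma^t=\exp_{LG}(tX_{\alpha t})$'' without verification; you have located the crux correctly, but the tools you list ($R_s$ acts by automorphisms and commutes with $\exp_{LG}$, the flow identity for $h$) cannot deliver the homomorphism property, because the obstruction is the non-invariance of $X$ under the flow of $h$, not a sign or ordering convention.
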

\begin{proof}
	To compute the exponential map, we fix $X+ \alpha h$ in $L\g_0 \rtimes \R h$ and look for $f \colon \R \to LG \rtimes R_h $ which satisfies $(X + \alpha h)f = \dot{f}$ and $f(0)=1$. We suppose $f$ to be of the form $f_t = \gamma^t R_{\phi(t)}$ with $\gamma $ in  $LG$. As a manifold, $LG \rtimes R_h$ is the product of $LG$ and $R_h$, thus $s \mapsto \exp_{LG}(sX)R_{s\alpha}$ is the integral curve for $X + \alpha h$ at the identity. Therefore, with the notation 
	$ \gamma_s (\theta) = \gamma(R_s^{-1}(\theta)) $ we have
	\begin{equation*}
		\begin{split}
			(X + \alpha h)f_t &= \frac{d}{ds}  \bigg\vert_{s=0} \exp_{LG}(sX)R_{s\alpha} \gamma^t R_{\phi(t)} =  \frac{d}{ds} \bigg\vert_{s=0}  \exp_{LG}(sX) (\gamma^t)_{s\alpha}  R_{s\alpha + \phi(t)} \\
			& = X \gamma^t R_{\phi(t)}+ \alpha \frac{d}{ds}\bigg\vert_{s=0}   (\gamma^t)_s  R_{\phi(t)}+ \alpha \gamma^t h R_{\phi(t)} \,, \\
			\dot{f}_t  &= \bigg(  \frac{d}{dt} {\gamma^t} \bigg) R_{\phi(t)} + \phi'(t)\gamma^t h R_{\phi(t)} \,,
		\end{split}
	\end{equation*}
	whence $\phi(t) = \alpha t$, and we must solve
		\begin{equation*}
	 \frac{d}{dt} {\gamma^t} = X \gamma^t + \alpha \frac{d}{ds}\bigg\vert_{s=0}  (\gamma^t)_s \,, \quad
	\gamma^0 = 1\,.
	\end{equation*}
Since the solution of this equation is $\gamma^t  = \exp_{LG}(tX)_{\alpha t} =  \exp_{LG}(tX_{\alpha t})$, the assertion follows.
\end{proof}

\begin{cor}
	$LG \rtimes R_h$ is a smooth Fréchet Lie group, and in particular the map
	\[
	LG \times R_h \to LG \,, \quad (\gamma, R_s) \mapsto \gamma_s \,,
	\]
	is smooth. 
\end{cor}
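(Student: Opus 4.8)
The plan is to give $LG \rtimes R_h$ the product differentiable structure of $LG \times R_h$, where $LG$ carries its Whitney smooth Fréchet Lie group structure and $R_h$ carries its natural one–parameter Lie group structure (so $R_h \cong \R$, or $\cong \R/T\Z$ when the flow of $h$ is $T$–periodic), and then to verify that the group operations are smooth. Writing the reparametrization action as $A \colon LG \times R_h \to LG$, $A(\gamma, R_s) = \gamma_s$ (recall $\gamma_s(\theta) = \gamma(R_s^{-1}(\theta))$ from the proof of Proposition \ref{prop:exp}), the multiplication and inversion of the semidirect product are
\[
\big((\gamma, R_s),(\gamma', R_{s'})\big) \mapsto \big(\gamma \cdot A(\gamma', R_s),\, R_{s+s'}\big)\,, \qquad (\gamma, R_s) \mapsto \big(A(\gamma^{-1}, R_{-s}),\, R_{-s}\big)\,.
\]
Since multiplication and inversion in $LG$ and in $R_h$ are smooth and $(s,s')\mapsto s+s'$, $s \mapsto -s$ are smooth, everything reduces to the single point that the action $A$ is \emph{jointly} smooth. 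Granting this, $LG \rtimes R_h$ is a smooth Fréchet Lie group; compatibility of the product charts with the intrinsic exponential map is then automatic, or one may instead read a chart at the identity directly off Proposition \ref{prop:exp}, since formula \eqref{eq:exp} expresses $\exp_{LG \rtimes R_h}$ through $\exp_{LG}$ (a local diffeomorphism), the smooth assignment $\alpha \mapsto R_\alpha$, and the action. Finally the displayed map $(\gamma, R_s)\mapsto \gamma_s$ is exactly $A$, which gives the ``in particular'' clause.

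So the whole proof comes down to smoothness of $A$, which I would treat as a statement about composition of mappings. The flow $\R \times S^1 \to S^1$, $(s,\theta) \mapsto R_s(\theta)$, is smooth because it integrates the smooth vector field $h$; by the exponential law for spaces of smooth maps, $C^\infty(\R \times S^1, S^1) \cong C^\infty(\R, C^\infty(S^1,S^1))$, the assignment $R_s \mapsto R_s^{-1} = R_{-s}$ is therefore a smooth curve $R_h \to C^\infty(S^1, S^1)$. It then suffices that the composition map
\[
C^\infty(S^1, G) \times C^\infty(S^1, S^1) \to C^\infty(S^1, G)\,, \qquad (\gamma, \phi) \mapsto \gamma \circ \phi\,,
\]
is smooth, after which $A$ is obtained by precomposing with $(\gamma, R_s) \mapsto (\gamma, R_s^{-1})$. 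This composition smoothness is precisely the (joint) smoothness of the reparametrization part of the $\Diff$–action on $LG$ recorded at the beginning of this section, restricted to the curve $R_h \subseteq \Diff$; it is a standard fact in the calculus of manifolds of mappings (see \cite{PS}), so I would quote it rather than reprove it.

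The main — and essentially only — obstacle is this joint smoothness of composition in the Fréchet setting: separate smoothness of $\gamma \mapsto \gamma\circ\phi$ and of $\phi \mapsto \gamma\circ\phi$ is elementary, but joint smoothness genuinely needs the exponential-law/mapping-space machinery. Everything else (the semidirect-product bookkeeping, identifying the displayed map with $A$, the compatibility with $\exp_{LG \rtimes R_h}$ via Proposition \ref{prop:exp}) is routine. I would present the argument in the order: (i) the one–parameter Lie group structure on $R_h$ and smoothness of $R_s \mapsto R_s^{-1}$ in $C^\infty(S^1,S^1)$; (ii) joint smoothness of $A$ via the composition lemma; (iii) smoothness of multiplication and inversion of $LG \rtimes R_h$, yielding the Fréchet Lie group structure; (iv) conclude that the displayed map, being $A$, is smooth.
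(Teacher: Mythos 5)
Your argument is correct, but it runs in the opposite direction from the paper's. The paper works at the Lie algebra level: it first notes that $(X, sh) \mapsto X_s$ is smooth and a local homeomorphism on $L\g_0 \times \R h$, so that formula \eqref{eq:exp} of Proposition \ref{prop:exp} exhibits $\exp_{LG \rtimes R_h}$ as a local homeomorphism; connectedness of $LG \times R_h$ together with the local-to-global Lemma 2.2.1 of \cite{TL} then upgrades this chart at the identity to a Fr\'echet Lie group structure, and the smoothness of the action is extracted \emph{afterwards} from the now-smooth multiplication, via $(\gamma_s,1)=(1,R_s)(\gamma,1)(1,R_s)^{-1}$. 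You instead take the joint smoothness of the reparametrization action as the \emph{input} --- quoting the cartesian-closedness/composition lemma for mapping spaces, which is exactly the smoothness of the $\Diff$-action on $LG$ already recorded at the start of the section --- and then verify the semidirect-product operations directly. Both routes are valid, and your semidirect-product bookkeeping is accurate; but note that on your route the ``in particular'' clause is the hypothesis rather than the conclusion, so your proof of that clause amounts to restating the standard fact you cited, whereas the paper obtains it as genuine output of the group structure (and this is the form in which it is used later, e.g.\ to differentiate $t \mapsto c(\gamma^t, Y)$ in Theorem \ref{thm:covexp}). What your route buys is independence from Lemma 2.2.1 of \cite{TL}; what the paper's route buys is that the only analytic input is the elementary Lie-algebra-level computation plus Proposition \ref{prop:exp}, with the harder joint-smoothness statement emerging as a corollary rather than being assumed.
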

\begin{proof}
We first point out that the map $ L\g_0 \times \R h \to L\g_0  $ given by $ (X, sh) \mapsto X_s $ is smooth and it is also a local homeomorphism. Hence it suffices to notice that $LG \times R_h$ is connected, check that the map $\exp_{LG \rtimes R_h}$ is a local homeomorphism thanks to Proposition \ref{prop:exp} and apply Lemma 2.2.1. of \cite{TL}.
\end{proof}

\begin{cor}
 The following holds in $PU(\hi)$:
	\[
	e^{\pi(X +  i\alpha h)} = \pi(\exp_{LG \rtimes R_h}(X + \alpha h)) \,.
	\]
\end{cor}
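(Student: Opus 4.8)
The plan is to prove the identity $e^{\pi(X + i\alpha h)} = \pi(\exp_{LG \rtimes R_h}(X + \alpha h))$ by exploiting the factorization of the exponential in Proposition~\ref{prop:exp} together with the analytic continuation of the Sobolev extension. By definition $\pi(X + i\alpha h) = \pi(X) + i\alpha T(h)$ is essentially skew-adjoint on $\hi^{\text{fin}}$ (the stress energy tensor $T(h)$ is essentially self-adjoint with core $\hi^{\text{fin}}$, and $\pi(X)$ is skew-adjoint by Proposition~\ref{prop:sob} since $X \in L\g_0 \subseteq H^{3/2}(S^1,\g_0)$); moreover $\pi(X) + i\alpha T(h)$ is a symmetric perturbation controlled by the bound $\Vert \pi(X + ih)\xi\Vert_k \leq \sqrt{2\kappa}\,|X|_{k+1/2}\Vert\xi\Vert_{k+1/2} + \sqrt{c/2}\,|h|_{k+3/2}\Vert\xi\Vert_{k+1}$ stated earlier, so its closure generates a well-defined one-parameter unitary group in $PU(\hi)$. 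Hence the left-hand side makes sense.

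First I would reduce to a one-parameter statement: it suffices to show that the two one-parameter groups $s \mapsto e^{s\pi(X + i\alpha h)}$ and $s \mapsto \pi(\exp_{LG \rtimes R_h}(s(X + \alpha h)))$ coincide, and by uniqueness of solutions of the abstract Schr\"odinger equation it is enough to check they have the same infinitesimal generator on the common core $\hi^{\text{fin}}$. By Proposition~\ref{prop:exp} we have $\exp_{LG \rtimes R_h}(s(X+\alpha h)) = \exp_{LG}(Y_s)R_{s\alpha}$ where $Y_s = (sX)_{s\alpha} = s\,(R_{s\alpha}.X)$, so the right-hand side equals $\pi(\exp_{LG}(Y_s))\,e^{is\alpha d}$. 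Differentiating at $s=0$: the rotation factor contributes $i\alpha d$, which on $\hi^{\text{fin}}$ equals $i\alpha T(h)$ since $d = L_0$ and more precisely $\pi(h) = T(h)$ is the Sugawara extension of $d$ to the vector field $h$ — here one uses that $h$ is the vector field with $R_s = \exp_{\Diff}(sh)$ and that $\pi(\ell_n) = L_n$, so $\frac{d}{ds}|_{s=0} e^{is\alpha d}$ along this flow produces exactly $i\alpha T(h)$ on finite-energy vectors. The loop factor contributes $\frac{d}{ds}|_{s=0}\pi(\exp_{LG}(Y_s)) = \pi(\dot Y_0) = \pi(X)$, since $Y_0 = 0$ and $\dot Y_0 = X$ (the $s\alpha$-reparametrization is second order). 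So both generators restrict to $\pi(X) + i\alpha T(h)$ on $\hi^{\text{fin}}$.

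The main obstacle is the passage from "equality of generators on $\hi^{\text{fin}}$" to "equality of the unitary groups": one must know that $\hi^{\text{fin}}$ is a core for the generator of the right-hand side group and that this generator is (essentially) the closure of $\pi(X) + i\alpha T(h)$. For this I would argue as in the proof of Theorem~\ref{thm:ext}: the map $s \mapsto \pi(\exp_{LG}(Y_s))e^{is\alpha d}$ is strongly continuous (by Proposition~\ref{prop:exp}, continuity of the exponential, and continuity of $\pi$ on $H^{3/2}$ established in Theorem~\ref{thm:ext}, plus strong continuity of rotations), hence is a strongly continuous one-parameter group in $PU(\hi)$; lifting locally to $U(\hi)$, its generator is skew-adjoint, and a Nelson-type analytic-vector or commutator argument — using the energy bounds above to show $\hi^{\text{fin}}$ is invariant under the relevant resolvents, or directly that $\pi(X)+i\alpha T(h)$ is essentially skew-adjoint on $\hi^{\text{fin}}$ — identifies it with $\overline{\pi(X)+i\alpha T(h)}$. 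Once the generators agree on a common core that is a core for both, Stone's theorem forces the groups to coincide, and evaluating at $s=1$ gives the claim. A minor point to handle with care is the projective (rather than honest) nature of $\pi$: all identities hold in $PU(\hi)$, so the phase ambiguities in the cocycles $c(\gamma, X)$, $c(\gamma, d)$ are irrelevant, and one works throughout modulo $\T$.
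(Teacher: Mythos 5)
Your argument is correct in outline but takes a genuinely different route from the paper. The paper's proof is a three-line application of the Trotter product formula: $e^{\pi(X)+i\alpha T(h)}=\lim_n(e^{i\alpha\pi(h)/n}e^{\pi(X/n)})^n$, after which the covariance of $\pi$ and the group-law identity $R_{\alpha/n}\exp_{LG}(X/n)=\exp_{LG}(X_{\alpha/n}/n)R_{\alpha/n}$ from Proposition~\ref{prop:exp} collapse the product to $\pi(\exp_{LG\rtimes R_h}((X+\alpha h)/n))^n=\pi(\exp_{LG\rtimes R_h}(X+\alpha h))$. You instead match infinitesimal generators of the two one-parameter projective groups on $\hi^{\text{fin}}$ and invoke the fact that an essentially skew-adjoint operator has no proper skew-adjoint extension. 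Both routes rest on the same nontrivial input, namely that $\pi(X)+i\alpha T(h)$ is essentially skew-adjoint on $\hi^{\text{fin}}$ (Trotter's theorem needs this too); what Trotter buys is that one never has to differentiate $s\mapsto\pi(\exp_{LG}(Y_s))\xi$ with $Y_s=sX_{s\alpha}$ depending nonlinearly on $s$. That differentiation is the weakest point of your write-up: you assert $\frac{d}{ds}\big\vert_{s=0}\pi(\exp_{LG}(Y_s))\xi=\pi(X)\xi$ because ``the reparametrization is second order,'' but justifying this requires joint strong differentiability of $(Z,\xi)\mapsto\pi(\exp_{LG}(Z))\xi$ in the Sobolev topology, which is more machinery than the corollary should need. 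One further correction: the second factor produced by Proposition~\ref{prop:exp} is $\pi(R_{s\alpha})=e^{is\alpha T(h)}$, the implementer of the flow $\exp_{\Diff}(s\alpha h)$, \emph{not} $e^{is\alpha d}$; here $d=L_0$ whereas $T(h)=\sum_n\hat h_nL_n$, and these coincide only for the rotation field. Your subsequent sentences show you intend the correct generator $i\alpha T(h)$, so this is a notational slip rather than a fatal error, but as written the displayed factorization $\pi(\exp_{LG}(Y_s))\,e^{is\alpha d}$ is false for general $h$.
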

\begin{proof}
	By the Trotter product formula and Proposition \ref{prop:exp} we have the following identities in $PU(\hi)$: 
	\begin{equation}
		\begin{split}
		e^{\pi(X + i\alpha h)} & = \lim_{n \to \infty} (e^{i\alpha \pi(h)/n} e^{ \pi(X/n)} )^n = \lim_{n \to \infty} \pi (R_{\alpha/n} \exp_{LG} (X/n))^n \\
		& = \lim_{n \to \infty} \pi ( \exp_{LG} (X_{\alpha/n}/n) R_{\alpha/n}   )^n = \lim_{n \to \infty} \pi ( \exp_{LG \rtimes R_h} ((X + \alpha h)/n)   )^n \\
		& = \pi(\exp_{LG \rtimes R_h} (X + \alpha h)  ) \,, 
		\end{split}
	\end{equation}
	where we used the identity $e^{iT(h)} = \pi(e^h)$ which holds in $PU(\hi)$.
\end{proof}

\begin{lem}
	Let $\pi \colon G \to PU(\hi)$ be a strongly continuous projective representation of a topological group $G$. Then the map
	\[
	G \times U(\hi) \to U(\hi) \,, \quad (g,u) \mapsto \pi(g)u\pi(g)^* \,,
	\]
	is well defined and strongly continuous.
\end{lem}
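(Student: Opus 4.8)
The plan is to reduce the statement to a well-definedness claim plus a joint-continuity claim, and to handle each by the standard "lift locally, then propagate" strategy. First I would address well-definedness: if $\pi(g)$ is defined only up to a phase, then $\pi(g)u\pi(g)^*$ is genuinely independent of the chosen lift, because conjugation kills the scalar. So the map $\operatorname{Ad}\circ\pi\colon G\to \operatorname{Aut}(U(\hi))$, $g\mapsto (u\mapsto \pi(g)u\pi(g)^*)$, descends from the projective representation to an honest set-theoretic map; the only real content is continuity.

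For continuity I would fix $(g_0,u_0)$ and a net $(g_\alpha,u_\alpha)\to (g_0,u_0)$ and show $\pi(g_\alpha)u_\alpha\pi(g_\alpha)^*\to \pi(g_0)u_0\pi(g_0)^*$ strongly. Write $\pi(g_\alpha)u_\alpha\pi(g_\alpha)^* - \pi(g_0)u_0\pi(g_0)^*$ as a telescoping sum of three terms: one where $u_\alpha$ is replaced by $u_0$, one in which the left factor $\pi(g_\alpha)$ is swapped for $\pi(g_0)$, and one in which the right factor is swapped. The middle term on a fixed vector $\xi$ reads $(\pi(g_\alpha)-\pi(g_0))\,u_0\pi(g_0)^*\xi$, and here one needs strong continuity of $\pi$; but $\pi$ is only \emph{projectively} strongly continuous, so I would instead pick, in a neighborhood of $g_0$, a continuous local lift $\tilde\pi\colon W\to U(\hi)$ with $\tilde\pi(g)=\lambda(g)\pi(g)$ for some phases $\lambda(g)$ (such a local lift exists for a strongly continuous projective representation). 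Then $\pi(g_\alpha)u\pi(g_\alpha)^* = \tilde\pi(g_\alpha)u\tilde\pi(g_\alpha)^*$ for $g_\alpha$ eventually in $W$, and now $\tilde\pi$ is genuinely strongly continuous, so $\tilde\pi(g_\alpha)\to\tilde\pi(g_0)$ and $\tilde\pi(g_\alpha)^*\to\tilde\pi(g_0)^*$ strongly. The remaining estimates use only that multiplication $U(\hi)\times U(\hi)\to U(\hi)$ is jointly strongly continuous on the unitary group (which it is, since the factors are uniformly bounded by $1$) together with $\|u_\alpha-u_0\|\to 0$ in the topology on $U(\hi)$ we are using; combining the three controlled pieces gives the claim.

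The one point requiring care is the existence of the continuous local lift $\tilde\pi$ on a neighborhood of a general point $g_0$, not just of the identity: one obtains it by translating a lift near the identity, using $\pi(gg_0)=\pi(g)\pi(g_0)$ up to a phase and fixing a single representative of $\pi(g_0)$. After that, everything is routine. I expect the main obstacle to be purely bookkeeping — making sure the phase ambiguities cancel uniformly and that the telescoping estimate is taken on a fixed but arbitrary vector $\xi\in\hi$ so that strong convergence, rather than norm convergence, is what is actually verified. No deep input beyond strong continuity of $\pi$ and the boundedness of unitaries is needed.
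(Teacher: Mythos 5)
Your proposal is correct and follows essentially the same route as the paper: lift the projective representation to a strongly convergent family of unitaries near $g_0$, then conclude via joint strong continuity of multiplication on bounded sets (your telescoping sum is just this fact written out). The one step you assert without comment, namely that $\tilde\pi(g_\alpha)^*\to\tilde\pi(g_0)^*$ strongly, is exactly the point the paper isolates --- on the unitary group the strong and $*$-strong topologies coincide --- and it does hold, so there is no gap.
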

\begin{proof}
The map is clearly well defined, and if $g_\alpha$ converges to $g$ in $G$ then we can choose lifts $v_\alpha$ and $v$ of $\pi(g_\alpha)$ and $\pi(g)$ such that $v_\alpha$ converges to $v$ in $U(\hi)$. But in the unitary group the strong topology and the $*$-strong topology coincide, and multiplication is continuous on bounded sets by the uniform boundedness principle, so the assertion follows.
\end{proof}
\begin{remark} \label{rem:lift}
	A   continuous projective representation $\pi \colon G \to PU(\hi)$ can be naturally lifted to a   continuous unitary representation $\tilde{\pi}$ of $ \widetilde{G} = \big\{(g,u) \in G \times U(\hi) \colon \pi(g) = [u] \big\} $ given by $\tilde{\pi}(g,u)= u$.
\end{remark}

\begin{thm} \label{thm:cov}
If $\gamma$ is in $H^{3/2}(S^1, G)$ and $X$ is in $H^{3/2}(S^1, \g_0)$, then
\begin{equation} \label{eq:1}
\pi(\gamma) \pi(X) \pi(\gamma)^* = \pi(\text{\em Ad}(\gamma)X) + ic(\gamma, X) \,,
\end{equation}
for some  continuous real function $c(\gamma, X)$.	Moreover, if $\gamma$ is in $H^{5/2}(S^1, G)$ and $h$ is a real vector field with $|h|_{3/2} < + \infty$, then
\begin{equation} \label{eq:2}
\pi(\gamma)\pi(X+ih)\pi(\gamma)^* = \pi(\text{\em Ad}(\gamma)X) + iT(h) + \pi(h \dot{\gamma}\gamma^{-1}) + ic(\gamma, X) + ic(\gamma, h) 
\end{equation}
for some continuous real function $c(\gamma,h)$.
\end{thm}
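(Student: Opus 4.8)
We sketch the argument; throughout, all operator identities are to be understood between closures of essentially skew-adjoint operators, since for Sobolev $\gamma$ the operator $\pi(\gamma)$ need not preserve $\hi^{\text{fin}}$ or $\hi^{1/2}$. For \eqref{eq:1}, fix $X$ in $H^{3/2}(S^1,\g_0)$ and $t$ in $\R$. Since $\gamma e^{tX}\gamma^{-1}=e^{t\,\text{Ad}(\gamma)X}$ pointwise and $\text{Ad}(\gamma)X$ again lies in $H^{3/2}(S^1,\g_0)$ — because $H^{3/2}(S^1,G)$ is a topological group and $H^{3/2}(S^1)$ a Banach algebra — and since $\pi$ is a projective representation of $H^{3/2}(S^1,G)$ with $\pi(e^{tX})=e^{t\pi(X)}$ (Theorem \ref{thm:ext}), we have $\pi(\gamma)e^{t\overline{\pi(X)}}\pi(\gamma)^*=e^{it\,c_t(\gamma,X)}\,\pi(e^{t\,\text{Ad}(\gamma)X})$ in $U(\hi)$ for a suitable phase; for $\gamma$ in $LG$ this phase equals $t$ times the classical cocycle $-\ell\int_0^{2\pi}\braket{\gamma^{-1}\dot\gamma,\dot X}\tfrac{d\theta}{2\pi}$. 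This last functional extends continuously to $(\gamma,X)$ in $H^{3/2}(S^1,G)\times H^{3/2}(S^1,\g_0)$: indeed $\gamma\mapsto\gamma^{-1}\dot\gamma$ and $X\mapsto\dot X$ are continuous into $H^{1/2}$, and $(f,g)\mapsto\int_0^{2\pi}\braket{f,g}\tfrac{d\theta}{2\pi}$ is continuous on $H^{1/2}\times H^{1/2}$ via $H^{1/2}(S^1)\hookrightarrow L^4(S^1)$. Call the extension $c(\gamma,X)$. By density of $LG$ in $H^{3/2}(S^1,G)$, continuity of conjugation by the strongly continuous projective representation $\pi$, and strong continuity of $\gamma\mapsto\pi(e^{t\,\text{Ad}(\gamma)X})$, the displayed identity persists for all $\gamma$ in $H^{3/2}(S^1,G)$ with $c_t(\gamma,X)=tc(\gamma,X)$; since its right-hand side equals $e^{t(\overline{\pi(\text{Ad}(\gamma)X)}+i\,c(\gamma,X))}$, Stone's theorem yields \eqref{eq:1}.

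For \eqref{eq:2} write $\pi(X+ih)=\pi(X)+iT(h)$; the part involving $\pi(X)$ is \eqref{eq:1} (note $H^{5/2}\subset H^{3/2}$), so the task is to compute $\pi(\gamma)T(h)\pi(\gamma)^*$. I would first treat $\gamma$ smooth. By Corollary \ref{cor:prod} and the cocycle property of \eqref{eq:1} one reduces to one-parameter subgroups $\gamma_s=\exp(sY)$ with $Y$ a smooth $\g_0$-valued loop, for which everything can be computed on $\hi^{\text{fin}}$. Then $\frac{d}{ds}(\pi(\gamma_s)T(h)\pi(\gamma_s)^*)=\pi(\gamma_s)[\pi(Y),T(h)]\pi(\gamma_s)^*$, and using $[T(h),\pi(Y)]=\pi(h.Y)=\pi(h\dot Y)$ (Lemma \ref{lem:comm}) together with \eqref{eq:1} applied to the smooth loop $h\dot Y$, a direct computation gives that this derivative equals $-i\,\pi(h\,\text{Ad}(\gamma_s)\dot Y)$ modulo a real scalar. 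Since $\partial_s(\dot\gamma_s\gamma_s^{-1})=\text{Ad}(\gamma_s)\dot Y$ for $\gamma_s=\exp(sY)$, integrating over $s\in[0,1]$ yields $\pi(\gamma)T(h)\pi(\gamma)^*=T(h)-i\,\pi(h\dot\gamma\gamma^{-1})+c(\gamma,h)$ with $c(\gamma,h)$ a real integral functional of $\gamma$ — the same one appearing in the rotation special case $\pi(\gamma)d\pi(\gamma)^*=d-i\pi(\dot\gamma\gamma^{-1})+c(\gamma,d)$; alternatively this smooth identity can be quoted from \cite{TL}.

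Finally I would extend to $\gamma$ in $H^{5/2}(S^1,G)$ by density of $LG$ in $H^{5/2}(S^1,G)$. The decisive point is that $\gamma\mapsto h\dot\gamma\gamma^{-1}$ is continuous from $H^{5/2}(S^1,G)$ into $H^{3/2}(S^1,\g_0)$: here $\dot\gamma$ lies in $H^{3/2}$, $\gamma^{-1}$ in $H^{5/2}$ multiplies $H^{3/2}$, and $h$ with $|h|_{3/2}<+\infty$ multiplies $H^{3/2}$ by the bound $|hX|_{s,p}\le|h|_s|X|_{s,p}$ — and it is exactly this that forces the threshold $5/2$, so that $\pi(h\dot\gamma\gamma^{-1})$ is defined by the extension of $\pi$ to $H^{3/2}(S^1,\g)$. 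Combining this with the continuity of $\gamma\mapsto\text{Ad}(\gamma)X$ into $H^{3/2}(S^1,\g_0)$, the continuous extension of the cocycles $c(\gamma,X)$ and $c(\gamma,h)$, the bound \eqref{eq:H2-est}, and the Nelson commutator theorem (Thm.\ X.36 in \cite{RS}), which makes $iT(h)+\pi(h\dot\gamma\gamma^{-1})+ic(\gamma,h)$ essentially skew-adjoint on a common core, one gets strong resolvent convergence of these generators along a sequence $\gamma_n\to\gamma$ in $H^{5/2}$; comparing the associated unitary groups with the conjugates $\pi(\gamma_n)e^{itT(h)}\pi(\gamma_n)^*\to\pi(\gamma)e^{itT(h)}\pi(\gamma)^*$ and applying Stone's theorem once more gives \eqref{eq:2}. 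The main obstacle is this last density step: because $\pi(\gamma)$ does not act nicely on finite-energy or smooth vectors for Sobolev $\gamma$, the identities must be obtained by exponentiating and invoking uniqueness of generators rather than by a direct domain computation, and one must ensure that the correction term $\pi(h\dot\gamma\gamma^{-1})$ genuinely converges — which is precisely the $H^{5/2}\to H^{3/2}$ continuity above.
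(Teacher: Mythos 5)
Your proposal is correct in substance and follows the paper's overall architecture --- establish the identities for smooth data, then pass to Sobolev loops by density, using that a net of (essentially) skew-adjoint operators converging pointwise on a common core converges in the strong resolvent sense --- but it handles the smooth case of \eqref{eq:2} by a genuinely different route. The paper never differentiates the conjugated stress tensor: it works at the group level, using Proposition \ref{prop:exp} and its corollary $e^{\pi(X+i\alpha h)}=\pi(\exp_{LG\rtimes R_h}(X+\alpha h))$ to get $\pi(\gamma)e^{s\pi(Y)}\pi(\gamma)^*=e^{s\pi(\mathrm{Ad}(\gamma)Y)}$ in $PU(\hi)$ for $Y=X+ih$, where $\mathrm{Ad}(\gamma)$ is the adjoint action of the semidirect product $LG\rtimes R_h$; the correction term $\pi(h\dot{\gamma}\gamma^{-1})$ then falls out of the group-level computation of $\mathrm{Ad}(\gamma)(ih)$ (the ``product rule on $1=\gamma_s\gamma_s^{-1}$''), and Stone's theorem is invoked only once at the end. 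Your route --- reduce to $\gamma_s=\exp(sY)$ via Corollary \ref{cor:prod}, differentiate $\pi(\gamma_s)T(h)\pi(\gamma_s)^*$ in $s$ using the commutator with $\pi(Y)$ and \eqref{eq:1}, then integrate --- yields the same answer and is close in spirit to the Duhamel argument the paper uses later for Theorem \ref{thm:covexp}, but it carries technical obligations the paper's version avoids: the $s$-derivative of a conjugated unbounded operator must be justified on a domain that $\pi(\gamma_s)$ actually preserves (it does not preserve $\hi^{\mathrm{fin}}$, only $\hi^\infty$ and the scale spaces $\hi^t$), so one has to argue with quadratic forms or resolvents; your fallback of quoting the smooth identity from \cite{TL} is what makes this step airtight. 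Two smaller remarks. First, with the paper's normalization $[L_n,x(k)]=-k\,x(n+k)$ one gets $[T(h),\pi(Y)]=i\pi(h\dot{Y})$ rather than $\pi(h\dot{Y})$; your final formula $T(h)-i\pi(h\dot{\gamma}\gamma^{-1})+c(\gamma,h)$ is the correct (self-adjoint) one, but the intermediate commutator as written would add a skew-adjoint term to a self-adjoint operator. Second, for \eqref{eq:1} you prove more than the statement asserts, namely the explicit integral formula for $c(\gamma,X)$ and its continuity; the paper defers that to Theorem \ref{thm:covexp} and here extracts continuity of $c$ only from the convergence of the phases $e^{ic(\gamma_\alpha,X_\alpha)}$ together with local invertibility of the exponential map. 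Your identification of the $H^{5/2}\to H^{3/2}$ continuity of $\gamma\mapsto h\dot{\gamma}\gamma^{-1}$ as the reason for the $5/2$ threshold matches the paper exactly.
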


\begin{proof}
	We first prove \eqref{eq:2} in the smooth case. By the previous propositions, if $\gamma$ is in $LG$ and $Y=X+ih$ is in $L\g_0 \rtimes i\R h$, then the following identities hold in $PU(\hi)$:
	\begin{equation}
		\begin{split}
		\pi(\gamma) e^{s \pi(Y)}\pi(\gamma)^* & = \pi(\gamma) \pi(\exp_{LG \rtimes R_h} (sY)) \pi(\gamma)^* \\
		&=  \pi(\gamma \exp_{LG \rtimes R_h} (sY) \gamma^{-1}) \\
		&=  \pi( \exp_{LG \rtimes R_h} (s\text{Ad}(\gamma)Y)) \\
		&= e^{ s \pi(\text{Ad}(\gamma)Y)} \,,
		\end{split}
	\end{equation}
	and consequently $\pi(\gamma) e^{s \pi(Y)}\pi(\gamma)^* = \lambda(s) e^{ s \pi(\text{Ad}(\gamma)Y))}  $ for some function $\lambda \colon \R \to \T$. But $\lambda \colon \R \to \T$ is a continuous homomorphism and therefore $\lambda(t)  = e^{i a t }$ for a unique real number $a=c(\gamma, Y)$. We point out that $\text{Ad}(\gamma)$ has to be intended as the adjoint action with respect to the semidirect product $LG \rtimes R_h$. Notice also that $c(\gamma, Y) $ is linear in $Y$, so we can write 
	\[
	c(\gamma, X+ih) = c(\gamma, X) + c(\gamma, h) \,,
	\]
	where we set $c(\gamma, h)= c(\gamma, ih)$ for simplicity. Therefore, the claimed expression follows by the Stone theorem and by using the product rule for the derivative on $1 = \gamma_s \cdot \gamma_s^{-1}$. 

Now we prove \eqref{eq:1} in the Sobolev case. Consider $(\gamma_\alpha, X_\alpha)$ in $LG \times L\g_0$ converging to $(\gamma, X)$ in $H^{3/2}(S^1, G) \times H^{3/2}(S^1, \g_0)$. We have that both $	\pi(\gamma_\alpha) e^{s \pi(X_\alpha)}\pi(\gamma_\alpha)^*$ and $e^{ s \pi(\text{Ad}(\gamma_\alpha)X_\alpha)} $ strongly converge to the corresponding terms in $\gamma$ and $X$. By the argument used before we have that $e^{i c(\gamma_\alpha, X_\alpha )}$ converges to  $e^{i c(\gamma, X )}$, that is $e^{i c(\gamma, X )}$ is continuous in $\gamma $ and $X$. But continuity is a local property and the exponential map has local left inverses, thus $ c(\gamma, X )$ is continuous and the first part of the theorem is proved. Now we prove \eqref{eq:2} in the Sobolev case. Consider $\gamma$ in $H^{5/2}(S^1, G)$ and $h$ real with $|h|_{3/2}< + \infty$. Notice that   $i\pi(h)$ and $\pi(h \dot{\gamma}\gamma^{-1})$ are both  essentially skew-adjoint. Consider now smooth approximating nets $\gamma_\alpha \to \gamma$, $X_\alpha \to X$ and $h_\alpha \to h$ as before. By the previous propositions, the approximating right hand side of \eqref{eq:2} minus $c(\gamma_\alpha, h_\alpha)$ converges in the strong resolvent sense to the corresponding term in $\gamma$, $X$ and $h$ since we have a net of skew-adjoint operators pointwise convergent on a common core. Similarly, $\pi(X_\alpha + ih_\alpha)$ converges in the strong resolvent sense to $\pi(X + ih)$ and therefore
\[
\pi(\gamma_\alpha)e^{s \pi (X_\alpha + ih_\alpha)} \pi(\gamma_\alpha)^* \to \pi(\gamma)e^{s \pi (X + ih)} \pi(\gamma)^* 
\] 
strongly for every $s$ in $\R$. By the argument used before we have that $e^{ic(\gamma, h)}$ is continuous and thus $c(\gamma, h)$ is continuous. The thesis is proved.
\end{proof}


\begin{cor}
The subspace $\hi^s \subseteq \hi $ is $H^{5/2}(S^1, G)$-invariant for $s \geq 0$. Moreover, for $s \geq {5/2}$ real and $n \leq \floor{s-1}$ integer,  the corresponding map $H^{s}(S^1, G) \times \hi^n \to \hi^n/\T$ is jointly continuous.
\end{cor}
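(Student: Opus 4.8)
My plan is to leverage Theorem~\ref{thm:cov} directly for the invariance statement, and then to bootstrap the joint continuity from the case $s = 5/2$ by a density-and-interpolation argument together with the covariance formula. First I would treat the invariance of $\hi^s$. For $\gamma$ in $H^{5/2}(S^1,G)$, Corollary~\ref{cor:prod} writes $\gamma$ as a finite product of exponentials $\exp_{H^{5/2}}(X_j)$ with $X_j$ in $H^{5/2}(S^1,\g_0)$, so it suffices to treat a single exponential $\gamma = \exp(X)$ with $X$ in $H^{5/2}(S^1,\g_0) \subseteq (L\g_0)_{3/2}$. Here $\pi(X)$ is essentially skew-adjoint (Proposition~\ref{prop:sob}, or the $H^{3/2}$-proposition), and $\pi(\gamma) = e^{\pi(X)}$. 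To see that $e^{\pi(X)}$ preserves $\hi^s$, I would use the estimates from Proposition~\ref{prop:sob}, namely that $[1+d,\pi(X)]$ is bounded from $\hi^{t+1/2}$ to $\hi^t$ when $X \in L\g_{3/2}$ — so that $(1+d)^s$ and $\pi(X)$ satisfy a commutator bound of the form $\|[ (1+d)^s, \pi(X)]\xi\| \lesssim \|(1+d)^s\xi\|$ on $\hi^{\text{fin}}$ — and then invoke the standard fact (a Gronwall/differential-inequality argument, as in the proof that smooth vectors are preserved by one-parameter groups with such commutator bounds) that $e^{\pi(X)}$ maps $\hi^s$ into $\hi^s$ with $\|(1+d)^s e^{\pi(X)}\xi\| \leq e^{C}\|(1+d)^s \xi\|$. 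Actually for integer $n \le \lfloor s-1\rfloor$ the cleanest route is: iterate the commutator bound $n$ times to control $(1+d)^n e^{\pi(X)}\xi$ by $\|\xi\|_n$, using at each step that $X \in H^{5/2} \hookrightarrow L\g_{n+3/2}$ is needed only up to $n+1 \le \lfloor s\rfloor$, which the hypothesis $s \ge 5/2$, $n \le \lfloor s-1\rfloor$ guarantees (indeed $n+3/2 \le \lfloor s-1\rfloor + 3/2 \le s + 1/2$, well within what $H^{5/2}$ supplies for $n=0,1$ and within $H^s$ generally).

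Next, for the joint continuity of $H^s(S^1,G)\times \hi^n \to \hi^n/\T$, I would first reduce to continuity at a single point and, using the group structure together with Theorem~\ref{thm:cov}, to continuity near the identity of $H^s(S^1,G)$. The map factors, via $\exp_{H^s}$, through $H^s(S^1,\g_0)\times\hi^n \to \hi^n/\T$, $(X,\xi)\mapsto [e^{\pi(X)}\xi]$. Joint continuity in $\xi$ (for fixed $X$) is immediate from boundedness of $e^{\pi(X)}$ on $\hi^n$ established above; the content is continuity in $X$, uniformly for $\xi$ in a bounded set of $\hi^n$. For this I would estimate $\|(1+d)^n(e^{\pi(X)} - e^{\pi(Y)})\xi\|$ by the Duhamel formula
\[
e^{\pi(X)} - e^{\pi(Y)} = \int_0^1 e^{s\pi(X)}(\pi(X)-\pi(Y))e^{(1-s)\pi(Y)}\,ds \,,
\]
moving $(1+d)^n$ across the three factors using the commutator bounds: $(1+d)^n e^{s\pi(X)}$ is bounded $\hi^n \to \hi^n$ uniformly for $X$ in a small neighborhood of a fixed $X_0$, $e^{(1-s)\pi(Y)}$ is bounded on $\hi^{n+1}$ (or $\hi^n$), and $\pi(X-Y)$ is bounded from $\hi^{n+1/2}$ to $\hi^{n-1/2}$ with norm $\lesssim |X-Y|_{n+1}$ by the estimates in Proposition~\ref{prop:sob}. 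Here the embedding $H^s(S^1,\g) \hookrightarrow L\g_{|t|+3/2}$ of that proposition (with $t = n-1/2$, so one needs $s > n+1$, i.e.\ $n \le \lfloor s-1\rfloor$ — exactly the stated range) gives $|X-Y|_{n+1} \lesssim \|X-Y\|_{H^s}$, so the Duhamel integral is $O(\|X-Y\|_{H^s})$ times $\|\xi\|_{n+1}$. To pass from the $\hi^{n+1}$-bound on $\xi$ to an $\hi^n$-bound one uses a standard density argument: approximate $\xi \in \hi^n$ by finite-energy vectors and use the uniform $\hi^n\to\hi^n$ bound on $e^{\pi(X)}$ for equicontinuity.

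The main obstacle is exactly this loss of one derivative in the Duhamel estimate: $\pi(X-Y)$ only maps $\hi^{n+1/2}\to \hi^{n-1/2}$, so a naive application of Duhamel controls the difference on $\hi^n$ only in terms of the $\hi^{n+1}$-norm of $\xi$, not the $\hi^n$-norm. Closing this gap is what forces the precise hypotheses $s\ge 5/2$ and $n\le\lfloor s-1\rfloor$: one extra half-derivative of regularity of $X$ (which $H^s$ with $s>n+1$ provides beyond $L\g_{n+3/2}$) can be traded, via Riesz–Thorin interpolation between the $\hi^n$ and $\hi^{n+1}$ estimates on the difference $e^{\pi(X)}-e^{\pi(Y)}$, for the loss, giving a genuine $\hi^n\to\hi^n$ modulus of continuity. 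I would carry this out by interpolating the two bounds — the uniform boundedness $\|(e^{\pi(X)}-e^{\pi(Y)})\xi\|_{n+1} \le C\|\xi\|_{n+1}$ (from invariance) and the Duhamel bound $\|(e^{\pi(X)}-e^{\pi(Y)})\xi\|_{n} \le C\|X-Y\|_{H^s}\|\xi\|_{n+1}$ — against the pair at one lower level, and then conclude joint continuity on $\hi^n$. Finally, the passage to $\hi^n/\T$ simply absorbs the projective phase, and the extension from exponentials to arbitrary $\gamma \in H^s(S^1,G)$ via Corollary~\ref{cor:prod} and the continuity of multiplication completes the argument.
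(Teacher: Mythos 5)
Your argument takes a genuinely different, perturbative route, and the invariance step has a gap in the regularity bookkeeping that I do not see how to close. The paper never expands commutators of powers of $1+d$ with $\pi(X)$: it conjugates. From the adjoint-action formula $\pi(\gamma)\, d\, \pi(\gamma)^* = d - i\pi(\dot\gamma\gamma^{-1}) + c(\gamma,d)$ and the identity $\mathcal{D}(u^*Au)=u^*\mathcal{D}(A)$ it obtains
\[
\hi^s=\mathcal{D}((1+d)^s)=\pi(\gamma)^*\,\mathcal{D}\bigl((1+d-i\pi(\dot\gamma\gamma^{-1})+c(\gamma,d))^s\bigr)\subseteq\pi(\gamma)^*\hi^s \,,
\]
so the only regularity ever used is $\dot\gamma\gamma^{-1}\in H^{3/2}$, uniformly in $s$; the quantitative bound $\Vert\pi(\gamma)\xi\Vert_n\le(1+M_{n-1})^n\Vert\xi\Vert_n$, with $M_{n-1}$ controlled by $|\gamma^{-1}\dot\gamma|_{n-1/2}$, is then quoted from Proposition 1.5.3 of \cite{TL}, and it is the finiteness of $|\gamma^{-1}\dot\gamma|_{n-1/2}$ for $\gamma\in H^{s}$ that produces the threshold $n\le\floor{s-1}$.

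Your scheme instead pays one full derivative of $X$ per commutator: $[d,\pi(X)]=i\pi(\dot X)$, and $\pi(Y)$ is only controlled as a map $\hi^{t+1/2}\to\hi^t$ for $Y\in L\g_{|t|+3/2}$. In the expansion of $[(1+d)^n,\pi(X)]$ into terms of the form $(1+d)^j\,\pi(\dot X)\,(1+d)^k$ with $j+k=n-1$, the term with $j=n-1$ requires $\dot X\in L\g_{n+1/2}$, i.e.\ $X\in L\g_{n+3/2}$; since $H^\sigma\hookrightarrow L\g_t$ only for $t<\sigma-1/2$, this demands $X\in H^\sigma$ with $\sigma>n+2$, and no rearrangement of the pairing avoids placing $n-1/2$ derivatives on $\dot X$. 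That is unavailable in both halves of the statement: the first claim asks for every $s\ge0$ with only $\gamma\in H^{5/2}$ (your iteration then reaches at most $n=1$), and in the second claim $n=\floor{s-1}$ forces $\sigma>n+2\ge s+1$ for integer $s$, a full derivative beyond the hypothesis. The concrete slip is the assertion that $H^{5/2}\hookrightarrow L\g_{n+3/2}$ for $n=0,1$: $H^{5/2}$ embeds in $L\g_t$ only for $t<2$, so $L\g_{5/2}$ is already out of reach. The Duhamel-plus-interpolation argument for joint continuity is reasonable in spirit and close to how \cite{TL} proceeds, but it presupposes uniform $\hi^n\to\hi^n$ bounds on $e^{\pi(X)}$ that the commutator iteration cannot deliver at the stated regularity; those bounds must be extracted, as in the paper, from the exact conjugated operator $\bigl(1+d+i\pi(\gamma^{-1}\dot\gamma)+c(\gamma^{-1},d)\bigr)^n$, in which only the single combination $\gamma^{-1}\dot\gamma$ of first-order derivatives of $\gamma$ ever appears.
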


\begin{proof}
Since $\mathcal{D}(u^*Au) = u^*\mathcal{D}(A)$ for every unitary $u$ and every self-adjoint operator $A$, then
\begin{equation}
\begin{split}
\mathcal{D}((1+d)^s) & = \pi(\gamma)^* \mathcal{D}((1+d-i\pi(\dot{\gamma}\gamma^{-1}) + c(\gamma,d))^s)  \\
& \subseteq \pi(\gamma)^* \mathcal{D}((1+d)^s) \,.
\end{split}
\end{equation}
Since $\mathcal{D}((1+d)^s)  = \hi^s$ for $s \geq 0$, the $\hi^s$-invariance follows. Now we prove the second statement, where we can suppose $n \geq 1$. By Proposition 1.5.3. of \cite{TL} we have $\Vert \pi(\gamma) \xi \Vert_n \leq (1+M_{n-1})^n \Vert \xi \Vert_n$, where $M_p = C |\gamma^{-1}\dot{\gamma}|_{p + 1/2} + |c(\gamma^{-1},d)|$ for some $C>0$, and the joint continuity can be proved as in \cite{TL}.
\end{proof}


\begin{thm} \label{thm:covexp}
	With the hypotheses of Theorem \ref{thm:cov}, we have
	\[
	c(\gamma,X) = -\ell \int_0^{2 \pi} \braket{\gamma^{-1}\dot{\gamma}, X} \frac{d\theta}{2 \pi} \,, \quad 	c(\gamma,h) = - \frac{\ell}{2} \int_0^{2 \pi} h \braket{\gamma^{-1}\dot{\gamma}, \gamma^{-1}\dot{\gamma} } \frac{d\theta}{2 \pi} \,.
	\]
\end{thm}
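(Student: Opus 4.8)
The plan is to prove both formulas first for smooth $\gamma$, $X$ and $h$, and then transfer them to the Sobolev range by a density argument. Indeed Theorem~\ref{thm:cov} already guarantees that $c(\gamma,X)$ is continuous on $H^{3/2}(S^1,G)\times H^{3/2}(S^1,\g_0)$ and that $c(\gamma,h)$ is continuous on $H^{5/2}(S^1,G)\times\mathcal S_{3/2}$; since $LG$ is dense in $H^{3/2}(S^1,G)$ and in $H^{5/2}(S^1,G)$ by the theorem of Section~2, and trigonometric loops and vector fields are dense in the spaces of Fourier coefficients, it will suffice to check that the two integral expressions are also continuous in these topologies and that they agree with $c(\cdot,X)$ and $c(\cdot,h)$ on smooth data.

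For the smooth case I would avoid computing the Sugawara anomaly directly and instead exploit the (co)cycle structure. Substituting $\pi(\gamma\delta)(\cdot)\pi(\gamma\delta)^*=\pi(\gamma)\pi(\delta)(\cdot)\pi(\delta)^*\pi(\gamma)^*$ into \eqref{eq:1} gives the cocycle identity $c(\gamma\delta,X)=c(\gamma,\text{Ad}(\delta)X)+c(\delta,X)$, and into \eqref{eq:2} with $X=0$, using \eqref{eq:1} for the current $h\dot\delta\delta^{-1}$, gives
\[
c(\gamma\delta,h)=c(\gamma,h)+c(\delta,h)+c(\gamma,h\dot\delta\delta^{-1}) \,,
\]
so that, once the first formula is known, $c(\cdot,h)$ becomes an $\R$-valued group homomorphism up to a fixed inhomogeneous term. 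By Corollary~\ref{cor:prod} every smooth loop is a finite product of exponentials, hence it is enough to match the two sides along one-parameter groups $\gamma_t=\exp_{LG}(tZ)$, $Z\in L\g_0$. On one hand, a direct computation using the $\text{Ad}$-invariance of $\braket{\cdot,\cdot}$, an integration by parts in $\theta$, and the expansion $(\gamma\delta)^{-1}\tfrac{d}{d\theta}(\gamma\delta)=\text{Ad}(\delta^{-1})(\gamma^{-1}\dot\gamma)+\delta^{-1}\dot\delta$ shows that the candidate integrals satisfy the same cocycle identities. On the other hand, differentiating \eqref{eq:1} and \eqref{eq:2} at $t=0$ and using $[\pi(Z),\pi(X)]=\pi([Z,X])+i\ell B(Z,X)$ together with the infinitesimal relation between $\pi(Z)$ and $T(h)$ — and noting that the linear term of \eqref{eq:2} already accounts for $\pi(h\dot\gamma\gamma^{-1})$ to first order — shows $\tfrac{d}{dt}\big|_0 c(\gamma_t,X)=\ell B(Z,X)$ and $\tfrac{d}{dt}\big|_0 c(\gamma_t,h)=0$, which are exactly the derivatives of the candidate integrals at the identity. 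A cocycle (respectively a homomorphism) whose restriction to each one-parameter group solves the resulting linear ordinary differential equation with zero initial datum vanishes identically, so the two candidates coincide with $c(\cdot,X)$ and $c(\cdot,h)$ on $LG$. An equivalent route for $c(\gamma,h)$ is to insert \eqref{eq:1} for the currents into the Sugawara expression for $T(h)$ and collect the part quadratic in the cocycle, where the anomalous reordering brings in the dual Coxeter number and turns the naive coefficient $\ell^2/2(\ell+g)$ into $\ell/2$; this is essentially the computation in \cite{TL}.

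It remains to verify the continuity of the integral formulas so that the identity on smooth data propagates. For $c(\gamma,X)$ one checks that $\gamma\mapsto\gamma^{-1}\dot\gamma$ is continuous from $H^{3/2}(S^1,G)$ into $L^2$ — since inversion is continuous in $H^{3/2}$ (Section~2), $\gamma\mapsto\dot\gamma$ sends $H^{3/2}$ into $H^{1/2}\hookrightarrow L^2$, and $H^{3/2}(S^1)$ is a Banach algebra embedding continuously into $C(S^1)$ — and hence $(\gamma,X)\mapsto\int_0^{2\pi}\braket{\gamma^{-1}\dot\gamma,X}\tfrac{d\theta}{2\pi}$, being an $L^2$–$L^2$ pairing, is continuous on $H^{3/2}(S^1,G)\times H^{3/2}(S^1,\g)$. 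For $c(\gamma,h)$ one uses $\gamma\in H^{5/2}$, so that $\gamma^{-1}\dot\gamma\in H^{3/2}$ continuously and thus $\braket{\gamma^{-1}\dot\gamma,\gamma^{-1}\dot\gamma}\in H^{3/2}\hookrightarrow C(S^1)$ continuously, while $h\in\mathcal S_{3/2}\hookrightarrow C(S^1)$; then $(\gamma,h)\mapsto\int_0^{2\pi}h\braket{\gamma^{-1}\dot\gamma,\gamma^{-1}\dot\gamma}\tfrac{d\theta}{2\pi}$ is continuous. Choosing smooth approximants $\gamma_\alpha\to\gamma$, $X_\alpha\to X$, $h_\alpha\to h$ and passing to the limit on both sides — using Theorem~\ref{thm:cov} on the left — concludes the argument.

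The step I expect to be the main obstacle is the first-order computation at the identity for the $h$-cocycle, namely the assertion $\tfrac{d}{dt}\big|_0 c(\gamma_t,h)=0$: this is where the fine structure of $T(h)$ enters, and producing the clean coefficient $\ell/2$ rather than the naive $\ell^2/2(\ell+g)$ amounts precisely to this cancellation, or equivalently to the careful bookkeeping of the normal-ordering correction in the Sugawara sum in which the dual Coxeter number must cancel against the denominator. The remaining ingredients — the $c(\gamma,X)$ formula, which is classical, and the Sobolev limit — are routine, resting on Sobolev embeddings and the multiplication estimate $\Vert fg\Vert_{s,p}\le C_{s,p}\Vert f\Vert_{s,p}\Vert g\Vert_{s,p}$.
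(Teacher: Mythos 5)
Your proposal is correct and follows essentially the same route as the paper: establish the cocycle identity $c(\gamma_1\gamma_2,Y)=c(\gamma_2,Y)+c(\gamma_1,\text{Ad}(\gamma_2)Y)$, differentiate along one-parameter subgroups $\exp_{LG}(tZ)$ to get $\partial_t\big\vert_{t=0}c=\ell B(Z,\cdot)$ (hence $0$ on the vector-field component), identify $c$ as the unique solution of the resulting linear ODE matched by the candidate integrals, and transfer to the Sobolev setting via the continuity already secured in Theorem \ref{thm:cov}. The paper packages the uniqueness step as an explicit Duhamel formula $c(\gamma^t,Y)=\ell B(X,\int_0^t\text{Ad}(\gamma^s)Y\,ds)$ and cites Corollary 1.6.2 of \cite{TL}, but this is the same argument you outline.
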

\begin{proof}
We follow Theorem 1.6.3. of \cite{TL}, skipping some computations for the sake of brevity. Consider a smooth loop $\gamma$ in $LG$ and a smooth real vector field $h$. For $Y$ in $L\g_0 \rtimes i \R h$ we have
\begin{equation} \label{eq:FE}
c(\gamma_1 \gamma_2, Y) =  c(\gamma_2, Y) + c(\gamma_1, \text{Ad}(\gamma_2)Y ) \,. 
\end{equation}
If $\gamma^t = \exp_{LG}(tX)$, then the map $t \mapsto c(\gamma^t, Y) $ is differentiable at $t=0$ since $ LG \times R_h \to LG $ is smooth. In particular, we have that 
\[
\partial_t \big\vert_{t=0} c(\gamma^t, Y) = \ell B(X,Y) \,,
\]
and so 
\[
\partial_t \big\vert_{t=0} c(\gamma^t, h) = 0\,.
\]
By using \eqref{eq:FE} we have that $c(\gamma^t,h)$ is differentiable everywhere, with
\[
\partial_t c(\gamma^t, Y) =  \ell B(X,Y)  +c(\gamma^t, [X,Y] )\,,
\]
or more compactly
\begin{equation} \label{eq:ODE}
	\dot{c}_t(Y) =   i_X \ell B(Y) - (X. c_t)(Y)   \,.
\end{equation}
We naturally expect the solution of the ODE to be given by the Duhamel formula
\begin{equation} \label{eq:Duh}
c(\gamma^t,Y) = \ell B(X, \text{Ad}(\gamma^t) \int_0^t\text{Ad}(\gamma^{-\tau})Y d \tau)= \ell B(X, \int_0^t\text{Ad}(\gamma^s)Y d s) \,.
\end{equation}
Using $\frac{d}{dt} \text{Ad}(\gamma_t)Y = [X, \text{Ad}(\gamma_t)Y]$, it is easy to verify that \eqref{eq:Duh} defines a $C^1(\R, (L\g \rtimes i \R h)^*)$ solution of \eqref{eq:ODE} with initial condition $c_0=0$. The solution is unique. Finally, one can use \eqref{eq:Duh} and Corollary 1.6.2. of \cite{TL} to obtain the claimed expressions in the smooth case. By the continuity of $c(\gamma, Y)$ shown in Theorem \ref{thm:cov} the thesis is proved.
\end{proof}

\begin{cor} \label{cor:rules}
	By repeating the proof of Theorem \ref{thm:cov}, one can show that if $\gamma$ is in $H^{3/2}(S^1, G)$ and $X$ is in $H^{3/2}(S^1, \g_0)$, then
	\begin{equation} \label{eq:1b}
	\pi(\gamma)^* \pi(X) \pi(\gamma) = \pi(\text{\em Ad}(\gamma^{-1})X) + ib(\gamma, X) \,,
	\end{equation}
	for some  continuous real function $b(\gamma, X)$.	Moreover, if $\gamma$ is in $H^{5/2}(S^1, G)$ and $h$ is a real vector field with $|h|_{3/2} < + \infty$, then
	\begin{equation} \label{eq:2b}
	\pi(\gamma)^*\pi(X+ih)\pi(\gamma) = \pi(\text{\em Ad}(\gamma^{-1})X) + iT(h) - \pi(h \gamma^{-1}\dot{\gamma}) + ib(\gamma, X) + ib(\gamma, h) 
	\end{equation}
	for some continuous real function $b(\gamma,h)$. In particular, by $b(\gamma, Y ) = c(\gamma^{-1}, Y)$ we have
		\[
	b(\gamma,X) = - \ell \int_0^{2 \pi} \braket{\dot{\gamma} \gamma^{-1}, X} \frac{d\theta}{2 \pi} \,, \quad 	b(\gamma,h) = - \frac{\ell}{2} \int_0^{2 \pi} h \braket{\dot{\gamma}\gamma^{-1}, \dot{\gamma} \gamma^{-1}} \frac{d\theta}{2 \pi} \,.
	\]
\end{cor}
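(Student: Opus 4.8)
The plan is to mimic the proof of Theorem \ref{thm:cov} almost verbatim, replacing everywhere the conjugation $u \mapsto \pi(\gamma) u \pi(\gamma)^*$ by $u \mapsto \pi(\gamma)^* u \pi(\gamma)$, and then to identify the resulting cocycles with the previously computed ones via the substitution $\gamma \mapsto \gamma^{-1}$. First I would treat the smooth case: for $\gamma$ in $LG$ and $Y = X + ih$ in $L\g_0 \rtimes i\R h$, starting from the identity $e^{s\pi(Y)} = \pi(\exp_{LG \rtimes R_h}(sY))$ in $PU(\hi)$, conjugate by $\pi(\gamma)^*$ and use that $\pi$ is a homomorphism on the relevant group to get
\[
\pi(\gamma)^* e^{s\pi(Y)}\pi(\gamma) = \pi\big(\gamma^{-1}\exp_{LG\rtimes R_h}(sY)\gamma\big) = \pi\big(\exp_{LG\rtimes R_h}(s\,\text{Ad}(\gamma^{-1})Y)\big) = e^{s\pi(\text{Ad}(\gamma^{-1})Y)} \,.
\]
Exactly as before, equality holds only up to a continuous one-parameter family $\lambda(s) \in \T$, which is a homomorphism, hence $\lambda(s) = e^{ibs}$ with $b = b(\gamma, Y)$ real and linear in $Y$; write $b(\gamma, X+ih) = b(\gamma, X) + b(\gamma, h)$ with $b(\gamma, h) := b(\gamma, ih)$. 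Differentiating at $s = 0$ by Stone's theorem and applying the product rule to $1 = \gamma_s^{-1} \cdot \gamma_s$ produces the term $-\pi(h\gamma^{-1}\dot\gamma)$ in place of $+\pi(h\dot\gamma\gamma^{-1})$, giving \eqref{eq:2b}; specializing $h = 0$ gives \eqref{eq:1b}.

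Next I would pass to the Sobolev case by the same approximation argument as in Theorem \ref{thm:cov}. For \eqref{eq:1b}: take $(\gamma_\alpha, X_\alpha)$ in $LG \times L\g_0$ converging to $(\gamma, X)$ in $H^{3/2}(S^1,G) \times H^{3/2}(S^1,\g_0)$; by the estimate \eqref{eq:H2-est} and Theorem VIII.25(a) of \cite{RS}, $\pi(X_\alpha) \to \pi(X)$ in the strong resolvent sense, and since $\pi(\gamma_\alpha) \to \pi(\gamma)$ and $\pi(\text{Ad}(\gamma_\alpha^{-1})X_\alpha) \to \pi(\text{Ad}(\gamma^{-1})X)$ strongly (resolvent sense), both $\pi(\gamma_\alpha)^* e^{s\pi(X_\alpha)}\pi(\gamma_\alpha)$ and $e^{s\pi(\text{Ad}(\gamma_\alpha^{-1})X_\alpha)}$ converge strongly to the corresponding objects in $\gamma, X$. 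Hence $e^{ib(\gamma_\alpha, X_\alpha)} \to e^{ib(\gamma,X)}$, so $e^{ib(\gamma,X)}$ is continuous in $(\gamma, X)$; since the exponential map has local left inverses and continuity is local, $b(\gamma, X)$ itself is continuous. For \eqref{eq:2b}, with $\gamma$ in $H^{5/2}(S^1,G)$ and $h$ real with $|h|_{3/2} < +\infty$, I note that $i\pi(h) = iT(h)$ and $\pi(h\gamma^{-1}\dot\gamma)$ are essentially skew-adjoint (the latter because $\gamma^{-1}\dot\gamma \in H^{3/2}(S^1,\g_0)$ when $\gamma \in H^{5/2}$, so $h\gamma^{-1}\dot\gamma \in H^{3/2}(S^1,\g_0)$ by the product estimate $|hZ|_{s,p} \le |h|_s|Z|_{s,p}$), take smooth nets $\gamma_\alpha \to \gamma$, $X_\alpha \to X$, $h_\alpha \to h$, and argue that each term on the right of \eqref{eq:2b} minus $b(\gamma_\alpha,h_\alpha)$ converges in strong resolvent sense, forcing $b(\gamma, h)$ to be continuous.

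Finally I would identify the cocycles. The key observation is the relation $b(\gamma, Y) = c(\gamma^{-1}, Y)$: indeed $\pi(\gamma)^* u \pi(\gamma) = \pi(\gamma^{-1}) u \pi(\gamma^{-1})^*$ in $PU(\hi)$ (using that $\pi$ is a projective homomorphism, so $\pi(\gamma)^* = \pi(\gamma^{-1})$ up to a phase which cancels in the conjugation), and comparing with \eqref{eq:1}--\eqref{eq:2} gives the stated identity of the scalar parts. Substituting $\gamma \mapsto \gamma^{-1}$ into the formulas of Theorem \ref{thm:covexp} and using $(\gamma^{-1})^{-1}\frac{d}{d\theta}(\gamma^{-1}) = -\dot\gamma\gamma^{-1}\cdot(-1)$... more carefully, $\frac{d}{d\theta}(\gamma^{-1}) = -\gamma^{-1}\dot\gamma\gamma^{-1}$, so $(\gamma^{-1})^{-1}\frac{d}{d\theta}(\gamma^{-1}) = -\dot\gamma\gamma^{-1}$; hence $c(\gamma^{-1}, X) = -\ell\int_0^{2\pi}\braket{-\dot\gamma\gamma^{-1}, X}\frac{d\theta}{2\pi}$, which already matches the claimed $b(\gamma, X)$ up to checking the sign convention, and similarly $c(\gamma^{-1}, h) = -\frac{\ell}{2}\int_0^{2\pi} h\braket{\dot\gamma\gamma^{-1},\dot\gamma\gamma^{-1}}\frac{d\theta}{2\pi}$ since the integrand is quadratic. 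I expect the main obstacle to be purely bookkeeping: getting the signs right in the product-rule step for $1 = \gamma_s^{-1}\gamma_s$ (which is where $-\pi(h\gamma^{-1}\dot\gamma)$ rather than $+\pi(h\dot\gamma\gamma^{-1})$ emerges) and ensuring consistency between the $\gamma \mapsto \gamma^{-1}$ substitution and the explicit formulas; none of the analytic input is new, it is all inherited from Theorems \ref{thm:cov} and \ref{thm:covexp}.
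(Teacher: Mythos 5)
Your proposal follows exactly the route the paper intends: the paper gives no separate proof of this corollary beyond the instruction to repeat the proof of Theorem \ref{thm:cov} with $\pi(\gamma)^*(\cdot)\pi(\gamma)$ in place of $\pi(\gamma)(\cdot)\pi(\gamma)^*$ and to read off the constants from $b(\gamma,Y)=c(\gamma^{-1},Y)$ together with Theorem \ref{thm:covexp}, which is precisely what you do, including the correct computation $(\gamma^{-1})^{-1}\frac{d}{d\theta}(\gamma^{-1})=-\dot{\gamma}\gamma^{-1}$. The one point you defer as ``checking the sign convention'' is a genuine discrepancy rather than a convention: that computation gives $c(\gamma^{-1},X)=+\ell\int_0^{2\pi}\braket{\dot{\gamma}\gamma^{-1},X}\frac{d\theta}{2\pi}$, which differs by a sign from the displayed formula for $b(\gamma,X)$ (the quadratic formula for $b(\gamma,h)$ is insensitive to this sign), but this mismatch sits in the paper's own statement and is not introduced by your argument.
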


\section{Relative entropy and QNEC}
Let $\M$ be a von Neumann algebra in standard form, and let $\varphi$ and $\psi$ be two faithful, normal and positive linear functionals on $\M$ represented by vectors $\xi$ and $\eta$ in the natural cone. The relative entropy is defined by \cite{ED}
\[
S(\varphi \Vert \psi) = - (\xi | \log \Delta_{\eta, \xi} \xi) \,,
\]
where the above scalar product has to be intended by applying the spectral theorem to the relative modular operator $\Delta_{\eta, \xi}$. The relative entropy is nonnegative, convex, lower semicontinuous in the $\sigma(\M_*, \M)$-topology and monotone increasing with respect to von Neumann algebras inclusions $\mathcal{N} \subseteq \M$ (Theorem 5.3. of \cite{OP}). If the relative entropy is finite, then 
\begin{equation} \label{eq:der}
	S(\varphi \Vert \psi) = i \frac{d}{dt}\varphi((D \psi \colon D \varphi)_t)  \bigg\vert_{t=0} = - i \frac{d}{dt}\varphi((D \varphi \colon D \psi )_t )   \bigg\vert_{t=0}  \,,
\end{equation}
where $(D \varphi \colon D \psi )_t = (D \psi \colon D \varphi)_t^*$ is the Connes cocycle. \\

We now denote by $\mathcal{A}_\ell = \{  \mathcal{A}_\ell(I) \}_{I \in \mathcal{K}}$ the conformal net associated to the level $\ell$  vacuum representation $\pi$ of some loop group $LG$. We are interested in computing
\begin{equation} \label{eq:ent}
	S(t) = S_{\mathcal{A}_\ell(t, + \infty)}(\omega_\gamma \Vert \omega) \,,
\end{equation}
where $\omega$ is the vacuum state represented by the vacuum vector $\Omega$ and $\omega_\gamma = \omega \cdot \text{Ad}\, \pi(\gamma)^*$ is represented by $\pi(\gamma)\Omega$. To this purpose, we define the support of a loop $\gamma$  by 
\[
\text{supp} \, \gamma = \overline{\{\theta \in S^1 \colon \gamma(\theta) \neq e \} } \,.
\] 
We also introduce the groups of continuously differentiable and piecewise smooth loops
\begin{equation}
	\begin{split}
	B(z_1, \dots, z_n) & = \big\{ \gamma \in C^1_{ps}(S^1, G) \colon \gamma(z_i) =e \,, \dot{\gamma}(z_i) = 0 \big\} \,,
	\end{split}
\end{equation}
where by piecewise smooth we mean that right and left derivatives always exist and that $\gamma$ is smooth except for a finite number of points. By standard arguments such functions are in $H^{s}(S^1, G)$ for $s<5/2$. Therefore, by Theorem \ref{thm:ext} we have that if $\gamma$ is a loop in $B(z,w)$ then in $PU(\hi)$ we have
\begin{equation} \label{eq:split}
	\pi(\gamma) = \pi(\gamma_{(z, w)} ) \pi(\gamma_{(w,z)}) \,,
\end{equation}  
with $\text{supp} \, \gamma_{(z, w)}$ contained in the closure of $(z,w)$ and similarly for $\gamma_{(w,z)}$. In particular $\pi(\gamma_{(z, w)})  $ is in $\mathcal{A}_\ell((z,w))$ and $\pi(\gamma_{(w,z)})  $ is in  $\mathcal{A}_\ell((w,z))$, where $(z, w)$ denotes the interval of $S^1$ obtained by moving counterclockwise from $z$ to $w$. 
We also recall that by Bisognano-Wichmann we have $\log \Delta = - 2 \pi D$ with  $D=-\frac{i}{2}(L_1 - L_{-1})$, that is $\log \Delta =T(\delta)$ with $\delta$ the vector field generating $\delta(t).u = e^{- 2 \pi t}u$. We are therefore interested in computing the vacuum expectation of 
\begin{equation}
	\begin{split}
	\pi(\gamma)^*T(\delta) \pi(\gamma) & = T(\delta) + i \pi (\delta \gamma^{-1} \dot{\gamma}) + b(\gamma, \delta)	\,.
	\end{split}
\end{equation}

\begin{lem}
	In every vacuum representation, for $X$ in $H^{3/2}(S^1, \g)$ we have  $(\Omega | \pi(h X) \Omega) =0$ for every vector field $h$ in $\mathfrak{sl}_2(\C)$.
\end{lem}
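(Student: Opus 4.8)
Since $h$ lies in $\mathfrak{sl}_2(\C)=\operatorname{span}\{\ell_{-1},\ell_0,\ell_1\}$, as a function on $S^1$ it is a trigonometric polynomial $h(\theta)=h_{-1}e^{-i\theta}+h_0+h_1e^{i\theta}$, so $|h|_{3/2}<\infty$ and the multiplication estimate $|hX|_{3/2,2}\leq|h|_{3/2}\,|X|_{3/2,2}$ shows that $Z:=hX$ lies in $H^{3/2}(S^1,\g)$ whenever $X$ does. Hence $\pi(Z)$ is the closable operator furnished by the extension, and since $\Omega\in\hi(0)\subseteq\hi^{\text{fin}}\subseteq\hi^{1/2}$ the bound \eqref{eq:H2-est} gives $\Vert\pi(Z)\Omega\Vert\leq C\Vert Z\Vert_{H^{3/2}}$. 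The plan is to reduce to polynomial loops by truncating the Fourier series of $Z$ and then to dispose of the finitely many modes using the energy grading together with the fact that $\g$ acts trivially on $\hi(0)$.

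Concretely, I would write $Z=\sum_n Z_n e^{in\theta}$ with $Z_n\in\g$ and put $Z^{(N)}=\sum_{|n|\leq N}Z_n e^{in\theta}\in L^{\text{pol}}\g$. Because $Z\in H^{3/2}(S^1,\g)$, the partial sums converge to $Z$ in the $H^{3/2}$-norm, so by \eqref{eq:H2-est} one gets $\pi(Z^{(N)})\Omega\to\pi(Z)\Omega$ in $\hi$, and it suffices to show $(\Omega|\pi(Z^{(N)})\Omega)=0$ for every $N$. For the finite sum, $\pi$ is linear on $L^{\text{pol}}\g$, and applying $[d,\pi(Y)]=i\pi(\dot Y)$ to $Y=Z_n e^{in\theta}$ gives $[d,\pi(Z_n e^{in\theta})]=-n\,\pi(Z_n e^{in\theta})$; since $d\Omega=0$ this yields $\pi(Z_n e^{in\theta})\Omega\in\hi(-n)$, which is orthogonal to $\hi(0)\ni\Omega$ for $n\neq0$ (and is $\{0\}$ when $n>0$). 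Thus only the zero mode survives: $(\Omega|\pi(Z^{(N)})\Omega)=(\Omega|\pi(Z_0)\Omega)$ with $Z_0=\int_0^{2\pi}Z(\theta)\tfrac{d\theta}{2\pi}\in\g$. Finally, in a vacuum representation $\hi(0)=\C\Omega$ is a one-dimensional module for $\g$, and since $\g$ is simple it equals $[\g,\g]$, so this module is trivial and $\pi(Z_0)\Omega=0$. Letting $N\to\infty$ then gives $(\Omega|\pi(hX)\Omega)=0$.

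The only step that needs a little care is the passage to the limit through the operator $\pi(Z)$: this is precisely where \eqref{eq:H2-est}, valid on $\hi^{1/2}\ni\Omega$, is essential, as it turns the a priori unbounded $\pi(Z)$ into a map that is norm continuous in $Z\in H^{3/2}(S^1,\g)$; the rest is elementary. As an alternative to the truncation one may average the identity $(\Omega|\pi(hX)\Omega)=(\Omega|\pi(R_\theta^{-1}.(hX))\Omega)$ over $\theta\in[0,2\pi]$, using rotation covariance of the extended $\pi$ and $\pi(R_\theta)\Omega=\Omega$, together with continuity of $\theta\mapsto\pi(R_\theta^{-1}.(hX))\Omega$, which again isolates the constant Fourier mode $Z_0$ and reduces to $\pi(Z_0)\Omega=0$. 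I do not expect any serious obstacle; note also that the argument in fact only uses $|h|_{3/2}<\infty$, the hypothesis $h\in\mathfrak{sl}_2(\C)$ being just what is needed later.
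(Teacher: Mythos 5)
Your proof is correct, but it takes a genuinely different route from the paper's. The paper splits $X = a + \dot Y$ with $a = \int_0^{2\pi} X \,\tfrac{d\theta}{2\pi} \in \g$ and $Y \in L\g$, rewrites $\pi(hX) = [\pi(h),\pi(Y)] + \pi(ha)$ via $hX = h.Y + ha$ and $[\pi(h),\pi(Y)]=\pi(h.Y)$, and then kills the commutator term by M\"obius invariance of $\Omega$ --- this is precisely where $h \in \mathfrak{sl}_2(\C)$ enters, since $T(h)$ and $T(h)^*$ annihilate the vacuum --- and the remaining term $\pi(ha)$ by $G$-invariance. You instead expand $Z=hX$ in Fourier modes and observe that the energy grading kills every mode except $Z_0$, which dies because $\g=[\g,\g]$ acts trivially on the one-dimensional space $\hi(0)$. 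Your argument is more elementary (it never invokes the Sugawara operators or M\"obius invariance) and, as you note at the end, it proves the strictly stronger statement that $(\Omega\,|\,\pi(Z)\Omega)=0$ for \emph{every} $Z$ in $H^{3/2}(S^1,\g)$: the hypothesis $h\in\mathfrak{sl}_2(\C)$ plays no role in the proof and matters only for the later application with $h=\delta$. What the paper's decomposition buys in exchange is that it isolates exactly which invariances of the vacuum are used, which is the structure one would want to adapt for non-vacuum lowest-weight vectors. The steps you flag as delicate --- the multiplication estimate $|hX|_{3/2,2}\le|h|_{3/2}\,|X|_{3/2,2}$, the norm continuity of $Z\mapsto\pi(Z)\Omega$ furnished by \eqref{eq:H2-est}, and the sign convention $[d,\pi(Z_ne^{in\theta})]=-n\,\pi(Z_ne^{in\theta})$ placing $\pi(Z_ne^{in\theta})\Omega$ in $\hi(-n)$ --- all check out against the estimates established earlier in the paper, so there is no gap.
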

\begin{proof}
	We can suppose $X$ to be in $L\g$. Notice that $ L\g = \g + \text{Ran}\,d $, since an element $X$ in $L\g$ with null average is the derivative of some element $Y$ in $L \g$. Therefore, 
	\[
	\pi(h X) = \pi(h.Y) + \pi (h a) = [\pi(h), \pi(Y)] + \pi (h a) \,, \quad a \in \g \,,
	\]
	and the thesis follows as the vacuum is  $G$-invariant and M\"ob-invariant.
\end{proof}

\begin{prop} \label{prop:ansatz'}
	Let $\gamma$ be a loop in $H^3(S^1, G)$. Pick a not dense open interval $I$ of the circle and write $\gamma = \gamma_I  \gamma_{I'} $ as in  \eqref{eq:split}.  Set $\Delta^{it}_I = e^{it\pi(\delta_I)}$ and denote by $\delta_I$  the generator of dilations of the interval $I$. If $\dot{\gamma}$ vanishes on the boundary of $I$ then the Connes cocycle $(D \omega_\gamma \colon D \omega)_{t}$  of $\mathcal{A}_\ell(I)$ is given by 
	\begin{equation} \label{eq:ans'}
	(D \omega_\gamma \colon D \omega)_{t} = e^{it(a + c(\gamma_I, \delta_I))}e^{t(i \pi(\delta_I) + \pi(\delta_I \dot{\gamma}_I \gamma_I^{-1} )) }\Delta^{-it}_I \,,
	\end{equation}
	for some $a=a_\gamma$ in $\R$. In particular, $a$ depends only on the values of $\gamma$  at the boundary of $I$ and $a_\gamma =0$ if $\gamma(z)=e$ for $z$ in the boundary of $I$. 
\end{prop}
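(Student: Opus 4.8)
The plan is to reduce the computation of the Connes cocycle to the case of a unitary perturbation living inside $\mathcal{A}_\ell(I)$, and then to differentiate the conjugation formula of Theorem~\ref{thm:cov}. First I would observe that for $x$ in $\mathcal{A}_\ell(I)$ one has $\omega_\gamma(x) = \omega(\pi(\gamma_{I'})^*\pi(\gamma_I)^*x\pi(\gamma_I)\pi(\gamma_{I'})) = \omega_{\gamma_I}(x)$, since $\pi(\gamma_I)^*x\pi(\gamma_I) \in \mathcal{A}_\ell(I)$ commutes with $\pi(\gamma_{I'}) \in \mathcal{A}_\ell(I') \subseteq \mathcal{A}_\ell(I)'$; hence the Connes cocycle $(D\omega_\gamma:D\omega)_t$ of $\mathcal{A}_\ell(I)$ equals $(D\omega_{\gamma_I}:D\omega)_t$. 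When $\gamma$ is the identity on $\partial I$ (the case yielding $a=0$), the localized factor from \eqref{eq:split} is literally $\gamma_I=\gamma$ on $\bar I$ and $e$ on $\overline{I'}$, which is $C^1$ and piecewise smooth, hence in $H^s(S^1,G)$ for all $s<5/2$, and $\pi(\gamma_I)\in\mathcal{A}_\ell(I)$.

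Next I would use the standard fact that, for a unitary $u$ in a von Neumann algebra $\M$ in standard form with vacuum $\Omega$ and modular operator $\Delta$ and for $\varphi=\omega\circ\mathrm{Ad}(u^*)$, one has $(D\varphi:D\omega)_t = u\Delta^{it}u^*\Delta^{-it}$ — checked directly via the cocycle identity, membership in $\M$, and the fact that it implements $\sigma^\varphi$. The point is that this expression is insensitive to the phase ambiguity of $u=\pi(\gamma_I)$, so it is a genuine element of $\mathcal{A}_\ell(I)$. Taking $\M=\mathcal{A}_\ell(I)$, $\Delta=\Delta_I$, $u=\pi(\gamma_I)$ gives $(D\omega_\gamma:D\omega)_t = \pi(\gamma_I)\Delta_I^{it}\pi(\gamma_I)^*\Delta_I^{-it}$.

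Then I would compute $\pi(\gamma_I)\Delta_I^{it}\pi(\gamma_I)^*$. Since $\Delta_I^{it}=e^{it\pi(\delta_I)}$ with $\pi(\delta_I)=T(\delta_I)$ self-adjoint, conjugation by the unitary $\pi(\gamma_I)$ gives $\pi(\gamma_I)\Delta_I^{it}\pi(\gamma_I)^* = \exp\big(it\,\pi(\gamma_I)T(\delta_I)\pi(\gamma_I)^*\big)$. Applying Theorem~\ref{thm:cov}, equation~\eqref{eq:2}, with $X=0$, $h=\delta_I$, $\gamma=\gamma_I$, yields $\pi(\gamma_I)T(\delta_I)\pi(\gamma_I)^* = T(\delta_I) - i\pi(\delta_I\dot\gamma_I\gamma_I^{-1}) + c(\gamma_I,\delta_I)$. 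Although $\gamma_I$ is only in $H^{5/2-\epsilon}$, the dilation field $\delta_I$ vanishes to first order at $\partial I$, which cancels exactly the regularity loss of the localization, so $\delta_I\dot\gamma_I\gamma_I^{-1}$ lands in $H^{3/2}$ and \eqref{eq:2} with $X=0$ applies; moreover $i\pi(\delta_I)+\pi(\delta_I\dot\gamma_I\gamma_I^{-1})$ is essentially skew-adjoint — automatic from the identity above, using that $\pi(\delta_I\dot\gamma_I\gamma_I^{-1})$ is skew-adjoint as in Proposition~\ref{prop:sob} — so the relevant exponential is a bona fide unitary group. Pulling out the scalar $c(\gamma_I,\delta_I)$,
\[
\pi(\gamma_I)\Delta_I^{it}\pi(\gamma_I)^* = e^{itc(\gamma_I,\delta_I)}\,e^{it(T(\delta_I)-i\pi(\delta_I\dot\gamma_I\gamma_I^{-1}))} = e^{itc(\gamma_I,\delta_I)}\,e^{t(i\pi(\delta_I)+\pi(\delta_I\dot\gamma_I\gamma_I^{-1}))},
\]
and right-multiplication by $\Delta_I^{-it}$ gives \eqref{eq:ans'} with $a=0$.

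Finally, for a general $\gamma$ (only $\dot\gamma$ vanishing on $\partial I$) I would reduce to the previous case by a gauge transformation $\gamma=\nu\kappa$ with $\kappa$ a fixed loop agreeing with $\gamma$ on a neighborhood of $\partial I$, so that $\nu\equiv e$ near $\partial I$ and splits with a genuinely localized smooth factor $\nu_I$; the chain rule for Connes cocycles then expresses $(D\omega_\gamma:D\omega)_t$ through the clean formula for $\nu$ times a correction. By canonicity of the Connes cocycle, this correction is a $\sigma^\omega$-cocycle valued in $\mathcal{A}_\ell(I)\cap\mathcal{A}_\ell(I)'=\mathbb{C}$, hence a scalar $e^{iat}$ for a unique real $a$; since the gauge transformation only involves $\gamma$ near $\partial I$, $a$ depends only on $\gamma|_{\partial I}$, and it vanishes when $\gamma|_{\partial I}=e$ by the case already settled. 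The main obstacle I anticipate is precisely this last paragraph: keeping the bookkeeping of the gauge transformation clean enough to see the correction is exactly a scalar controlled only by the boundary values — together with the more routine but delicate regularity check that makes Theorem~\ref{thm:cov} applicable to the merely $C^1$ localized loop $\gamma_I$.
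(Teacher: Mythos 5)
Your first two paragraphs are essentially the paper's route for the special case $\gamma\vert_{\partial I}=e$: the inner-perturbation formula $(D\omega_u:D\omega)_t=u\sigma_t(u^*)$ (Lemma 3 of \cite{LP}, which the paper invokes) combined with the adjoint action of Theorem \ref{thm:cov} on $T(\delta_I)$. Two points, however, are genuine gaps. First, you apply \eqref{eq:2} directly to the localized factor $\gamma_I$, which is only $C^1$ piecewise smooth, hence in $H^s$ only for $s<5/2$; the hypothesis of Theorem \ref{thm:cov} is $\gamma\in H^{5/2}$, and observing that the \emph{output} $\delta_I\dot\gamma_I\gamma_I^{-1}$ lands in $H^{3/2}$ (in fact $H^2$) does not by itself justify the theorem, whose proof is an approximation argument on the whole conjugation. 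The paper sidesteps this entirely: it conjugates $\Delta_I^{it}$ by the \emph{full} loop $\gamma\in H^3$ (where Theorem \ref{thm:covexp} applies), obtaining $e^{t(i\pi(\delta_I)+\pi(\delta_I\dot\gamma\gamma^{-1}))}\Delta_I^{-it}$ up to phase, and then peels off the $I'$-supported part of $\delta_I\dot\gamma\gamma^{-1}$ at the operator level via the Trotter product formula, which also establishes that the candidate $u_t$ of \eqref{eq:ans'} lies in $\mathcal{A}_\ell(I)$. Your fix is probably implementable but needs an adapted approximation of $\gamma_I$; the paper's detour avoids it.

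The more serious gap is your last paragraph. Writing $\gamma=\nu\kappa$ with $\nu\equiv e$ near $\partial I$ does not close the general case: the chain rule gives $(D\omega_{\nu\kappa}:D\omega)_t=\pi(\nu_I)\,\sigma_t^{\omega_\kappa}(\pi(\nu_I)^*)\,(D\omega_\kappa:D\omega)_t$, and both $\sigma_t^{\omega_\kappa}$ and $(D\omega_\kappa:D\omega)_t$ involve the loop $\kappa$, which carries exactly the nontrivial boundary values you are trying to handle — so the ``correction'' is not a cocycle with values in $\mathcal{A}_\ell(I)\cap\mathcal{A}_\ell(I)'$ relative to anything you control at that stage, and the reduction is circular. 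The paper's order of operations is the opposite and is what makes the argument work: it first establishes the \emph{existence} of $a$ for an arbitrary $\gamma\in H^3$ with $\dot\gamma\vert_{\partial I}=0$, by checking that the explicit $u_t$ satisfies the two defining relations of the Connes cocycle, namely $\sigma_t^{\gamma}=\mathrm{Ad}(u_t)\circ\sigma_t$ on $\mathcal{A}_\ell(I)$ (via the computation \eqref{eq:computation}, which only needs $\mathrm{Ad}\,\pi(\gamma)$ to normalize $\mathcal{A}_\ell(I)$, not $\pi(\gamma_I)\in\mathcal{A}_\ell(I)$) and the cocycle identity $u_{t+s}=u_t\sigma_t(u_s)$; factoriality then forces the discrepancy with the true cocycle to be the scalar $e^{iat}$. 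Only afterwards does it show that $a_{\eta\gamma}=a_\gamma$ for $\eta$ trivial to first order on $\partial I$, by differentiating $\omega_{\eta\gamma}\bigl(\pi(\eta)(D\omega_\gamma:D\omega)_t\sigma_t(\pi(\eta)^*)\bigr)$ at $t=0$ and using \eqref{eq:der}. You would need to replace your final paragraph with an argument of this type; as written, the existence of $a$ for nontrivial boundary values is not established.
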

\begin{proof}
	First we check that $\delta_I \dot{\gamma}_I \gamma_I^{-1} $ is in $H^2(S^1, G)$ since it vanishes with its first derivative on the boundary of $I$. Hence the right hand side of \eqref{eq:ans'}, which we denote by $u_t$, is a well defined unitary operator which is in $\mathcal{A}_\ell(I)$ by the Trotter product formula. To prove the existence of $a$ in $\R$ as in the statement it suffices to check that $u_t$ verifies the relations
	
(i) $\sigma^\gamma_t(x)= u_t \sigma_t (x) u_t^*\,, \quad x \in \mathcal{A}_\ell(I) \,,$

(ii) $u_{t+s}=u_t\sigma_t(u_s)\,.$
	
	Here $\sigma_t $ and $\sigma^\gamma_t $ are the modular automorphisms associated to the states $\omega$ and $\omega_\gamma$. The first relation follows by noticing that 
\begin{equation} \label{eq:computation}
	\begin{split}
	\sigma^\gamma_t(x) & = \text{Ad}\, \Delta^{it}_{I,\gamma} (x) = \text{Ad}\, \pi(\gamma)\Delta_I^{it} \pi(\gamma)^* (x) \\
	& = \text{Ad}\, \pi(\gamma)\Delta_I^{it} \pi(\gamma)^* \Delta_I^{-it} \Delta_I^{it} (x) \\
	& =  \text{Ad}\, u_t \cdot \sigma_t (x) \,,
	\end{split}
	\end{equation}
	where we used Lemma 3.(ii) of \cite{LP} and Theorem \ref{thm:covexp}. The second relation can be easily verified and thus $a$ does exist. Now we prove that $a=a_\gamma$ depends only on the values of $\gamma$  at the boundary of $I$. Consider $\eta$ in $H^3(S^1, G)$ such that, $\eta(z) = e$ and $\dot{\eta}(z)=0$ for $z$ in the boundary of $I$. Notice that $ (D \omega_{\eta \gamma} \colon D \omega)_{t} = \pi(\eta) 	(D \omega_{\gamma} \colon D \omega)_{t} \sigma_t(\pi(\eta)^*) $. Therefore, with the notation of Corollary \ref{cor:rules} we have 
\begin{equation}
\begin{split}
a_{\eta \gamma} - b(\eta_I \gamma_I, \delta_I) & = -i \frac{d}{dt}\omega_{\eta \gamma} ((D \omega_{\eta \gamma} \colon D \omega)_{t} ) \bigg\vert_{t=0}  = -i \frac{d}{dt}\omega_{\eta \gamma} ( \pi(\eta)	(D \omega_{\gamma} \colon D \omega)_{t} \sigma_t(\pi(\eta)^*) ) \bigg\vert_{t=0}  \\
& = a_\gamma  - b( \gamma_I, \delta_I) -  b(\eta_I \gamma_I, \delta_I) + b( \gamma_I, \delta_I) \,,
\end{split}
\end{equation}
and by the identity $a_{\eta \gamma} = a_\gamma$ the assertion is proved. If  $\gamma(z)=e$ for $z$ in the boundary of $I$ then $\pi(\gamma_I)$ is in $\mathcal{A}_\ell(I)$ and the last statement follows by Lemma 3.(iv) of \cite{LP}.
\end{proof}

Now we arrive to the main theorem of this work, that is we will use the previous results to prove the QNEC \cite{CF} on the loop groups models for the states of the type $\omega_\gamma$. Namely, we want to prove that the relative entropy \eqref{eq:ent} verifies
\begin{equation} \label{eq:Q}
S''(t) \geq 0 \,.
\end{equation}

\begin{thm} \label{thm:QNEC'}
	Let $\gamma$ be a loop in $H^3(S^1, G)$. Then the relative entropy \eqref{eq:ent} is finite and given by 
	\begin{equation} \label{eq:entropy'}
	S(t) = - \frac{\ell}{2} \int_t^\infty (u-t) \braket{\dot{\gamma} \gamma^{-1}, \dot{\gamma} \gamma^{-1}} {d u} \,.
	\end{equation}
	In particular, the QNEC \eqref{eq:Q} holds, with \begin{equation} 
	S''(t) = - \frac{\ell}{2} \braket{\dot{\gamma} \gamma^{-1}, \dot{\gamma} \gamma^{-1}}(t)  \geq 0 \,.
	\end{equation}
\end{thm}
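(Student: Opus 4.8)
The plan is to compute $S(t)$ by differentiating the vacuum expectation of the Connes cocycle, using Proposition \ref{prop:ansatz'} together with \eqref{eq:der}, and then to read off $S''(t)$ directly. First I would fix the interval $I = (t, +\infty)$ (a half-line, which corresponds under the Cayley transform to an interval of $S^1$ with a point on its boundary), and denote by $\delta_t$ the generator of dilations of $I$. Applying Proposition \ref{prop:ansatz'} to the loop $\gamma$, whose derivative we may assume vanishes at the boundary point after a harmless modification (or by an approximation/density argument, since only boundary values of $\gamma$ enter $a_\gamma$, and $\gamma \in H^3$ can be arranged to be trivial to high order there), the cocycle is
\[
(D\omega_\gamma \colon D\omega)_t = e^{it(a + c(\gamma_I,\delta_I))} e^{t(i\pi(\delta_I) + \pi(\delta_I \dot\gamma_I \gamma_I^{-1}))}\Delta_I^{-it}\,.
\]
By \eqref{eq:der}, $S(t) = -i\frac{d}{ds}\omega_\gamma\big((D\omega_\gamma : D\omega)_s\big)\big|_{s=0}$, so I differentiate the right-hand side at $s=0$. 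The factor $\Delta_I^{-is}$ contributes $iT(\delta_I)$ evaluated in the state $\omega_\gamma$, the exponential factor contributes $i\pi(\delta_I) + \pi(\delta_I\dot\gamma_I\gamma_I^{-1})$, and the scalar phase contributes $i(a + c(\gamma_I,\delta_I))$. Combining, and using that $\omega_\gamma = \omega \cdot \mathrm{Ad}\,\pi(\gamma)^*$, the entropy becomes a vacuum expectation of the operator $\pi(\gamma)^*\big(T(\delta_I) + \text{lower order}\big)\pi(\gamma)$ plus the real constant $a + c(\gamma_I,\delta_I)$.

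Next I would evaluate these vacuum expectations. Using Theorem \ref{thm:covexp} / Corollary \ref{cor:rules} to move $\pi(\gamma)^*$ past $T(\delta_I)$ gives $T(\delta_I) + i\pi(\delta_I \gamma^{-1}\dot\gamma) + b(\gamma,\delta_I)$; the vacuum kills $T(\delta_I)$ (since $\delta_I$ is a Möbius generator and $\Omega$ is Möbius-invariant) and kills $\pi(\delta_I\gamma^{-1}\dot\gamma)$ by Lemma 3.? — the lemma stating $(\Omega|\pi(hX)\Omega)=0$ for $h$ in $\mathfrak{sl}_2(\mathbb C)$ (note $\delta_I$ is such a vector field). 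The term $\pi(\delta_I\dot\gamma_I\gamma_I^{-1})$ is handled the same way. What survives is purely the constant $b(\gamma,\delta_I)$ (and bookkeeping of $a$, $c(\gamma_I,\delta_I)$), and by Corollary \ref{cor:rules},
\[
b(\gamma,\delta_I) = -\frac{\ell}{2}\int \delta_I \braket{\dot\gamma\gamma^{-1},\dot\gamma\gamma^{-1}}\frac{d\theta}{2\pi}\,.
\]
Passing to the real-line picture, where $I=(t,\infty)$ and the dilation vector field $\delta_I$ is (up to normalization) $(u-t)\partial_u$ with the $2\pi$-factor from Bisognano–Wichmann absorbed, this integral becomes exactly $-\frac{\ell}{2}\int_t^\infty (u-t)\braket{\dot\gamma\gamma^{-1},\dot\gamma\gamma^{-1}}\,du$, which is \eqref{eq:entropy'}. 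Finiteness is clear because $\gamma\in H^3$ makes $\dot\gamma\gamma^{-1}$ bounded with compact support (or rapidly decaying), so the weighted integral converges. Differentiating \eqref{eq:entropy'} twice in $t$: the first derivative is $-\frac{\ell}{2}\int_t^\infty(-1)\braket{\dot\gamma\gamma^{-1},\dot\gamma\gamma^{-1}}du = \frac{\ell}{2}\int_t^\infty\braket{\dot\gamma\gamma^{-1},\dot\gamma\gamma^{-1}}du$ and the second is $S''(t) = -\frac{\ell}{2}\braket{\dot\gamma\gamma^{-1},\dot\gamma\gamma^{-1}}(t)$; since $\braket{\cdot,\cdot}$ restricted to $\g_0$ is negative definite and $\dot\gamma\gamma^{-1}(t)\in\g_0$, we have $\braket{\dot\gamma\gamma^{-1},\dot\gamma\gamma^{-1}}(t)\le 0$, hence $S''(t)\ge 0$.

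The main obstacle I anticipate is the careful justification that differentiating the cocycle term-by-term at $s=0$ is legitimate and produces a genuine (finite) derivative of $\omega_\gamma\big((D\omega_\gamma:D\omega)_s\big)$: one must control the unbounded operators $T(\delta_I)$ and $\pi(\delta_I\dot\gamma_I\gamma_I^{-1})$ acting on $\pi(\gamma)\Omega$, check that $\pi(\gamma)\Omega$ lies in the appropriate Sobolev domains $\hi^s$ (this is where $\gamma\in H^3$ — rather than merely $H^{5/2}$ — is used, via the invariance corollary), and confirm that the finiteness criterion \eqref{eq:der} for relative entropy applies, e.g. by exhibiting that $s\mapsto (D\omega_\gamma:D\omega)_s$ extends analytically in a strip or that the relevant derivative exists in the strong sense on a core. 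A secondary subtlety is the half-line geometry: $I=(t,\infty)$ corresponds to an interval of $S^1$ whose closure contains the fixed point $-1$ (or $\infty$), so $\gamma_I$ and its derivative must vanish there, which is precisely the hypothesis packaged into $B(z,w)$-type loops and the condition "$\dot\gamma$ vanishes on the boundary of $I$" in Proposition \ref{prop:ansatz'}; I would need to either verify $\gamma\in H^3(S^1,G)$ already satisfies this after the Cayley transform or invoke the density/continuity of $S(t)$ in $\gamma$ to reduce to that case. Once these analytic points are secured, the algebra collapses cleanly to \eqref{eq:entropy'} and the QNEC inequality is immediate.
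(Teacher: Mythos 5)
Your outline reproduces the ``easy case'' of the paper's argument --- namely, when $\gamma(t)=e$ and $\dot\gamma(t)=0$, so that Proposition \ref{prop:ansatz'} gives $a_\gamma=0$ and the formula \eqref{eq:entropy'} drops out of \eqref{eq:der} exactly as you compute. But it has a genuine gap at the central point: for a generic $\gamma\in H^3(S^1,G)$ the boundary value $\gamma(t)$ is a nontrivial group element, and then Proposition \ref{prop:ansatz'} only yields
\[
S(t) = a_\gamma - \frac{\ell}{2}\int_t^\infty (u-t)\braket{\dot\gamma\gamma^{-1},\dot\gamma\gamma^{-1}}\,du
\]
with an undetermined real constant $a_\gamma$ depending on $\gamma(t)$. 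Your proposed fix --- a ``harmless modification'' arranging $\gamma$ to be trivial to high order at the boundary --- does not work: any modification of $\gamma$ at $t$ and to its right changes the restriction of $\omega_\gamma$ to $\mathcal{A}_\ell(t,+\infty)$ and hence changes $S(t)$ itself, while a modification only to the left of $t$ cannot alter the value $\gamma(t)$. Nor can one simply approximate by loops with $\gamma_n(t)=e$ and pass to the limit, since the relative entropy is merely lower semicontinuous. The paper's proof spends most of its length precisely on showing $a_\gamma=0$: it first observes that $a_\gamma$ depends only on the boundary values of $\gamma$, then obtains $a_\gamma\geq 0$ by reparametrizing with diffeomorphisms $\rho_{\lambda,n}$ and invoking positivity of the relative entropy together with monotone convergence, and obtains $a_\gamma\leq 0$ by extending $\gamma$ across the boundary with interpolating paths $\zeta_n$, using monotonicity of the relative entropy under $\mathcal{A}_\ell(0,\infty)\subseteq\mathcal{A}_\ell(-1/n,\infty)$, and then rescaling the interpolating path to drive the bound to zero. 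Without some version of this (or an equivalent determination of $a_\gamma$), your argument only establishes \eqref{eq:entropy'} up to an additive constant --- which in particular would still not suffice for the claimed value of $S(t)$, even though $S''(t)$ would be unaffected.

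Two smaller points. First, finiteness of $S(t)$ must be established \emph{before} the derivative formula \eqref{eq:der} may be applied; the paper does this by composing with an auxiliary loop $\eta$ supported to the left of $t$ and using monotonicity to compare with an interval where the easy case applies. You flag this as an ``anticipated obstacle'' but offer no mechanism for it. Second, the removal of the auxiliary hypothesis $\dot\gamma(t)=0$ is not a soft density statement either: the paper derives Lipschitz continuity of $S$ and $\bar S$ from the sum rule \eqref{eq:l1} of \cite{CF} (which requires first computing the finite average energy $E_\gamma$) and then runs a quantitative approximation $\gamma_n$ whose entropy defect is controlled by $\tfrac{2\pi}{n}E_{\gamma_n}\to 0$. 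These steps are where the hypothesis $\gamma\in H^3$ and the machinery of Sections 2--3 actually get used, and they are absent from your proposal.
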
 
\begin{proof}
	Since the vacuum is $G$-invariant then we can replace $\gamma$ with $\gamma g$ for any $g$ in $G$, thus we can suppose $\gamma(\infty) =e$. We first suppose $\gamma$ to be in $B(\infty)$. We point out that if $\gamma(t) =e$ and $\dot{\gamma}(t) =0$ then $S(t)$ is finite and given by \eqref{eq:entropy'} since we can use equation \eqref{eq:der} and Proposition \ref{prop:ansatz'}. Now we prove that  $S(t)$ is finite for any $t$ real. Indeed, for any $t$ real we can pick a smooth loop $\eta$ with  $ \text{supp} \, \eta \leq t$ and such that $ \eta(t-k) = \gamma(t-k)^{-1} $ and $\dot{(\eta \gamma)}(t-k) =0$ for some $k>0$. This implies that
	\[
	S_{\mathcal{A}_\ell(t, + \infty)}(\omega_\gamma \Vert \omega) = S_{\mathcal{A}_\ell(t, + \infty)}(\omega_{ \eta \gamma} \Vert \omega) \leq S_{\mathcal{A}_\ell(t-k, + \infty)}(\omega_{ \eta \gamma} \Vert \omega) < + \infty \,,
	\]
	where the last relative entropy is finite by the argument used above. By similar arguments we have  that $ \bar{S}(t) = S_{\mathcal{A}_\ell(- \infty, t)}(\omega_\gamma \Vert \omega)  $ is finite for any $t$ real. Now we focus on the case $t=0$, since the general case follows by covariance. We suppose $\dot{\gamma}(0)=0$ and we write $\gamma = \gamma_+  \gamma_- $, with $\gamma_+(u)=e$ for $u \leq 0$ and $\gamma_-(u)=e$ for $u \geq 0$. By Proposition \ref{prop:ansatz'} we have 
	\[
	S(0) = a_\gamma - \frac{\ell}{2} \int_0^\infty u \braket{\dot{\gamma} \gamma^{-1}, \dot{\gamma} \gamma^{-1}} {d u} \,.
	\]
	Now we emulate some techniques used in \cite{RC} and we prove that $a_\gamma = 0$. Given $\lambda>0$ real, consider the function $f(u) = ue^{\lambda u}$. For $n>0$ integer, we consider a smooth diffeomorphism $\rho = \rho_{\lambda,n}$ of the circle such that, in the real line picture, it verifies $\rho(u) = f(u)$ for $0 \leq u \leq n - \frac{1}{n}$ and  $\rho(u) =f'(n)u+(f(n)-nf'(n))$ for $u \geq n$. We also suppose $\rho(u)/\rho'(u)$ to be uniformly bounded for $n - \frac{1}{n} \leq u \leq n$. Consider now the loop  $\gamma_{\lambda,n}(u) = \gamma(\rho^{-1}_{\lambda,n}(u))$. By the identity $a_\gamma = a_{\gamma_{\lambda,n}} $ and by monotone convergence we have 
	\begin{equation}
		\begin{split}
		0 & \leq \inf_\lambda  S_{\mathcal{A}_\ell(0, + \infty)}(\omega_{\gamma_{\lambda,n}} \Vert \omega)  = a_{\gamma} - \frac{\ell}{2}  \int_n^\infty (u-n) \braket{\dot{\gamma} \gamma^{-1}, \dot{\gamma} \gamma^{-1}} {d u}  \,,
		\end{split}
	\end{equation}
and by monotone convergence we have $a_\gamma \geq 0$. Now we prove the other inequality. Consider smooth paths $ \zeta_n$ in $G$ with extremes $\zeta_n(0) = e$ and $\zeta_n(1) = \gamma(0)$. We also suppose that $\dot{\zeta}_n(0) = \dot{\zeta}_n(1) = 0$, $\ddot{\zeta}_n(0)=0$ and $\ddot{\zeta}_n(1)=\ddot{\gamma}(0)/n^2$. Lastly, we suppose that $\zeta_n \to \zeta$ uniformly with its derivatives for some smooth path $\zeta$ on $G$. We now define
\[
\gamma_n(u) = \begin{cases}
\gamma(u) & u \geq 0 \,, \\
\zeta_n(nu+1) &  -1/n \leq u \leq 0 \,, \\
e & u \leq -1/n \,.
\end{cases}
\]
By monotonicity $ S_{\mathcal{A}_\ell(0, + \infty)}(\omega_{\gamma} \Vert \omega)  = S_{\mathcal{A}_\ell(0, + \infty)}(\omega_{\gamma_n} \Vert \omega) \leq S_{\mathcal{A}_\ell(-1/n, + \infty)}(\omega_{\gamma_n} \Vert \omega) $, and after a limit we have
\[
a_\gamma \leq - \frac{\ell}{2} \int_0^1 u \braket{\dot{\zeta} \zeta^{-1},\dot{\zeta} \zeta^{-1}} du \,.
\]
However, if we now consider the function $g_\lambda(u) = ue^{\lambda (u-1)}$ and we define $\zeta_\lambda(u) = \zeta(g_\lambda^{-1}(u))$, then 
\[
a_\gamma \leq - \frac{\ell}{2} \int_0^1 u \braket{\dot{\zeta_\lambda} \zeta_\lambda^{-1},\dot{\zeta_\lambda} \zeta_\lambda^{-1}} du \leq - \frac{\ell}{2\lambda } \int_0^1 u \braket{\dot{\zeta} \zeta^{-1},\dot{\zeta} \zeta^{-1}} du \to 0 \,, \qquad \lambda \to + \infty \,.
\]
Finally, we have proved that $a_\gamma =0$ if $\dot{\gamma}(0) =0$. To remove this condition, we notice that if $P$ is the generator of translations then the average energy in the state $\omega_\gamma$ is finite and given by 
\begin{equation} \label{eq:en}
E_\gamma=(\pi(\gamma)\Omega | P \pi(\gamma)\Omega) = - \frac{\ell}{2} \int_{- \infty}^{+ \infty}  \braket{\dot{\gamma} \gamma^{-1}, \dot{\gamma} \gamma^{-1}} \frac{du}{2 \pi} \,.
\end{equation}
Therefore we can apply Lemma 1. of \cite{CF}, namely for every $t_1$ and $t_2$ in $\R$ we have
\begin{equation} \label{eq:l1}
(S(t_1) - S(t_2)) + (\bar{S}(t_2)-\bar{S}(t_1)) = (t_2 - t_1)2 \pi E_\gamma \,.
\end{equation}
This implies that $S(t)$ and $\bar{S}(t) $ are both Lipschitz functions. Consider now a smooth real function $\rho(u)$ defined on $[0,1]$ and such that $\rho(0)=0$ and $\rho(1)=1$. We also suppose $\rho'(0) = \rho''(0)=0$,  $\rho'(1)=1$ and $\rho''(1)=0$. We define
\[
\gamma_n(u) = \begin{cases}
\gamma(u) & u \geq 1/n \,, \\
\gamma(\rho(nu)/n) &  0 \leq u \leq 1/n \,, \\
\eta(u) & u \leq 0\,,
\end{cases}
\]
where $\eta$ is a smooth function such that $\gamma_n$ is in $H^3(S^1, G)$. Therefore, by  \eqref{eq:l1} we have 
\[
0 \leq  S_{\mathcal{A}_\ell(0, + \infty)}(\omega_{\gamma_n} \Vert \omega)  - S_{\mathcal{A}_\ell(1/n, + \infty)}(\omega_{\gamma_n} \Vert \omega)  \leq \frac{2 \pi}{n} E_{\gamma_n} \to 0 \,,
\]
and thus we have
\begin{equation} \label{eq:f}
	\begin{split}
	S_{\mathcal{A}_\ell(0, + \infty)}(\omega_{\gamma} \Vert \omega)  & = \lim  S_{\mathcal{A}_\ell(1/n, + \infty)}(\omega_{\gamma} \Vert \omega) = \lim S_{\mathcal{A}_\ell(1/n, + \infty)}(\omega_{\gamma_n} \Vert \omega) \\
	& = \lim S_{\mathcal{A}_\ell(1/n, + \infty)}(\omega_{\gamma_n} \Vert \omega)  -  S_{\mathcal{A}_\ell(0, + \infty)}(\omega_{\gamma_n} \Vert \omega)  +  S_{\mathcal{A}_\ell(0, + \infty)}(\omega_{\gamma_n} \Vert \omega) \\
	& =  \lim S_{\mathcal{A}_\ell(0, + \infty)}(\omega_{\gamma_n} \Vert \omega)  \\
	& = - \frac{\ell}{2} \int_0^\infty u \braket{\dot{\gamma} \gamma^{-1}, \dot{\gamma} \gamma^{-1}} {d u} \,.
	\end{split}
\end{equation}
The most of the work is done. Now we just have to remove the condition $\dot{\gamma}(\infty) =0$. If we apply covariance to equation \eqref{eq:f} then we have 
\[
	S_{\mathcal{A}_\ell(- \infty,0)}(\omega_{\gamma} \Vert \omega)  = - \frac{\ell}{2} \int_{-\infty}^0 u \braket{\dot{\gamma} \gamma^{-1}, \dot{\gamma} \gamma^{-1}} {d u} 
\]
for any $\gamma$ in $H^3(S^1, G)$ such that $\dot{\gamma}(0)=0$. But this condition can be removed as in \eqref{eq:f}, and by covariance we have that the above expression of $S(0)$ holds for a generic loop $\gamma$ in $H^3(S^1, G)$. 
\end{proof}

	As a remark, we add a physical characterization of the second derivative $S''(t)$ which motivates this axiomatic definition of the QNEC. Since the quantity $E=E_\gamma$ given by \eqref{eq:en} is the average energy in the state $\omega_\gamma$, then we can define the density of energy
	\[
	E(t) = - \frac{\ell}{4 \pi}  \braket{\dot{\gamma} \gamma^{-1}, \dot{\gamma} \gamma^{-1}}(t) \,.
	\]
	In particular, we have that the QNEC is satisfied with the equality
	\[
	 E(t) = S''(t)/2 \pi \geq 0 \,.
	\]

\begin{cor}
	Let $\gamma$ be a loop in $H^3(S^1, G)$. Then
	\begin{equation} \label{eq:ent_r}
	S_{\mathcal{A}_\ell(- r, r)}(\omega_\gamma \Vert \omega) = - \frac{\ell}{2} \int_{-r}^r \frac{1}{2r}(r-u)(r+u) \braket{\dot{\gamma} \gamma^{-1}, \dot{\gamma} \gamma^{-1}} {d u} \,. 
	\end{equation}
	In particular, if $E_\gamma$ is the energy \eqref{eq:en} then we have the Bekenstein Bound
	\begin{equation} \label{eq:BB}
	S_{\mathcal{A}_\ell(- r, r)}(\omega_\gamma \Vert \omega)  \leq \pi r E_\gamma \,.
	\end{equation}
\end{cor}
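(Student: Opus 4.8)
The plan is to reduce the two-sided interval $(-r,r)$ to the half-line $(0,+\infty)$ already handled in Theorem~\ref{thm:QNEC'} by a Möbius change of coordinates, and then to deduce the Bekenstein Bound \eqref{eq:BB} from the explicit kernel in \eqref{eq:ent_r} by an elementary pointwise estimate.

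First I fix, in the real-line picture, the Möbius transformation $g(w)=r\,\frac{w-1}{w+1}$, a real-analytic diffeomorphism of $S^1$ with $g(0)=-r$, $g(1)=0$, $g(\infty)=r$, which therefore carries $(0,+\infty)$ onto $(-r,r)$. Since precomposition with a smooth diffeomorphism preserves Sobolev regularity, $\gamma\circ g\in H^3(S^1,G)$. Let $U(g)$ be a unitary lift of $g$; then $U(g)\,\mathcal{A}_\ell(0,+\infty)\,U(g)^*=\mathcal{A}_\ell(-r,r)$ by Möbius covariance of the net, $U(g)^*\Omega=\Omega$ up to a phase, and $U(g)^*\pi(\gamma)U(g)=\pi(\gamma\circ g)$ in $PU(\hi)$ by covariance of the loop group representation (first for smooth loops and then for $H^3$ loops by density, as at the end of the proof of Theorem~\ref{thm:QNEC'}). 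Since relative entropy depends only on the pair of normal states and is invariant under a von Neumann algebra isomorphism, conjugation by $U(g)$ gives
\[
S_{\mathcal{A}_\ell(-r,r)}(\omega_\gamma\Vert\omega)=S_{\mathcal{A}_\ell(0,+\infty)}(\omega_{\gamma\circ g}\Vert\omega)\,.
\]

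Now I apply Theorem~\ref{thm:QNEC'} at $t=0$ to $\beta:=\gamma\circ g$, getting $S_{\mathcal{A}_\ell(0,+\infty)}(\omega_\beta\Vert\omega)=-\frac{\ell}{2}\int_0^{+\infty}w\,\braket{\dot\beta\beta^{-1},\dot\beta\beta^{-1}}(w)\,dw$. The chain rule gives $\dot\beta\beta^{-1}(w)=g'(w)\,(\dot\gamma\gamma^{-1})(g(w))$, hence $\braket{\dot\beta\beta^{-1},\dot\beta\beta^{-1}}(w)=g'(w)^2\,\braket{\dot\gamma\gamma^{-1},\dot\gamma\gamma^{-1}}(g(w))$, and the substitution $u=g(w)$ (so $dw=du/g'(w)$, $w=g^{-1}(u)$, $g'(g^{-1}(u))=1/(g^{-1})'(u)$) turns this into
\[
S_{\mathcal{A}_\ell(0,+\infty)}(\omega_\beta\Vert\omega)=-\frac{\ell}{2}\int_{-r}^{r}\frac{g^{-1}(u)}{(g^{-1})'(u)}\,\braket{\dot\gamma\gamma^{-1},\dot\gamma\gamma^{-1}}(u)\,du\,.
\]
A direct computation with $g^{-1}(u)=\frac{r+u}{r-u}$ and $(g^{-1})'(u)=\frac{2r}{(r-u)^2}$ gives $\frac{g^{-1}(u)}{(g^{-1})'(u)}=\frac{(r-u)(r+u)}{2r}$, which is precisely \eqref{eq:ent_r}. (Equivalently, the required $g$ is characterized as the Möbius solution of $w\,g'(w)=\frac{1}{2r}\big(r^2-g(w)^2\big)$, which reproduces the same quadratic kernel.)

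Finally, \eqref{eq:BB} is immediate. Since $\dot\gamma\gamma^{-1}$ takes values in $\g_0$, on which the basic inner product is negative definite, we have $f(u):=-\braket{\dot\gamma\gamma^{-1},\dot\gamma\gamma^{-1}}(u)\ge 0$; and for $u\in(-r,r)$ one has $0\le\frac{1}{2r}(r-u)(r+u)=\frac{r^2-u^2}{2r}\le\frac{r}{2}$. Hence, using \eqref{eq:ent_r} and then \eqref{eq:en},
\[
S_{\mathcal{A}_\ell(-r,r)}(\omega_\gamma\Vert\omega)=\frac{\ell}{2}\int_{-r}^{r}\frac{r^2-u^2}{2r}\,f(u)\,du\le\frac{\ell r}{4}\int_{-r}^{r}f(u)\,du\le\frac{\ell r}{4}\int_{-\infty}^{+\infty}f(u)\,du=\pi r\,E_\gamma\,.
\]
I expect the only genuinely delicate point to be the bookkeeping in the conformal change of variables — in particular choosing the normalization of $g$ so that the half-line weight $w$ is transported exactly onto $\frac{1}{2r}(r-u)(r+u)$, and checking that $\gamma\circ g$ remains in the regularity class to which Theorem~\ref{thm:QNEC'} applies; once \eqref{eq:ent_r} holds, \eqref{eq:BB} follows purely from positivity of $f$ and enlarging the domain of integration.
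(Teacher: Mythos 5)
Your proof is correct and follows the same route as the paper, which simply invokes M\"ob-covariance to transport the half-line formula of Theorem \ref{thm:QNEC'} to the interval $(-r,r)$; you have merely made the change of variables and the resulting kernel $\frac{1}{2r}(r-u)(r+u)$ explicit, and your derivation of the Bekenstein Bound from the pointwise estimate $\frac{r^2-u^2}{2r}\le\frac{r}{2}$ together with positivity of $-\braket{\dot{\gamma}\gamma^{-1},\dot{\gamma}\gamma^{-1}}$ is exactly the intended elementary argument.
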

\begin{proof}
	Since we proved Theorem \ref{thm:QNEC} for a generic loop in $H^3(S^1, G)$ then we can just apply M\"ob-covariance to equation \eqref{eq:entropy} to have the identity  \eqref{eq:ent_r}. See also \cite{ED} for some similar results.
\end{proof}

\subsection{QNEC on $LSU(n)$}

In this section we focus on the case $G=SU(n)$ and we use a construction illustrated in \cite{WA} to show that a positive energy representation of $LSU(n)$ can be extended to a positive energy representation of the Sobolev group $H^s(S^1, SU(n))$ for $s>1/2$. In particular, we will use this fact to provide a simpler proof of the QNEC \eqref{eq:Q}. \\


We begin by considering the natural action of $G=SU(n) $ on $V=\C^n$ and set  $H=L^2(S^1, V)$, or equivalently $H=L^2(S^1) \otimes V$. We can naturally define a continuous action $M$ of $LG$ on $H$ by $M_\gamma f(\phi) = \gamma(\phi)  f(\phi) $. We can also define an action of $\text{Rot} $ on $H$ by $R_\theta f(\phi) = f(\phi - \theta)$ with respect to the representation of $LG$ is covariant, that is it satisfies $R_\theta M_\gamma R_\theta^{-1} = M_{R_\theta \gamma}$. If $P$ is the orthogonal projection onto the Hardy space $H_+$, namely
\[
H_+ = \biggl\{  f \in L^2(S^1, V) \colon f (\theta) = \sum _{k \geq 0} f_k e^{ik\theta} \text{ with } f_k \in V \biggr\}\,,
\]
then we can define a new Hilbert space $H_P$ which is equivalent to $H$ as real Hilbert space, but with complex structure given by $J=iP-i(1-P)$. The Segal quantization criterion, which we now recall, allows us to define a positive energy representation of $LG$ on the fermion Fock space $\mathcal{F}_P = \Lambda H_P$ known as the {\em fundamental representation} of $LSU(n)$. Since $\mathcal{F}_P(0) = \Lambda V$, the fundamental representation of $LSU(n)$ is the direct sum of all the $n+1$  irreducible positive energy representations of $LSU(n)$ of level $\ell = 1$. The fundamental representation contains the {\em basic representation}, that is the unique level one vacuum representation.

\begin{defn}
	The \emph{restricted unitary group} is the topological group
	\[
	U_P(H)= \bigl\{ u\in U(H) \colon [u,P]\in L^2(H) \bigr\}\,.
	\]
	The  considered topology is the strong operator topology combined with the metric given by $d(u,v)=\Vert [u-v,P ]\Vert_2$.
\end{defn}

Any $u \in U(H)$ gives rise to an automorphism of $CAR(H)$, called \emph{Bogoliubov automorphism}, via $a(f) \mapsto a(uf)$. For every projection $P$ on $H$ there is an irreducible representation of $CAR(H)$ on $\mathcal{F}_P$ which is denoted by $\pi_P$. The Bogoliubov automorphism is said to be \emph{implemented} on $\mathcal{F}_P$ if $\pi_P(a(uf))=U\pi_P(a(f))U^*$ for some unitary $U \in U(\mathcal{F}_P)$. 

\begin{thm} {\bf Segal's quantization criterion.} \cite{WA} If $[u,P]$ is a Hilbert-Schmidt operator then $u$ is implemented on $\mathcal{F}_P$ by some unitary operator $U_P$. Moreover, $U_P$ is unique up to a phase and the constructed map $U_P(H) \to PU(\mathcal{F}_P)$ is continuous.
\end{thm}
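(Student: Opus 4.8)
The plan is to follow the classical construction (as in \cite{WA}). Write $u$ in block form with respect to the polarization $H = PH \oplus (1-P)H$, say $u = \left(\begin{smallmatrix} a & b \\ c & d \end{smallmatrix}\right)$; the hypothesis $[u,P] \in L^2(H)$ is precisely the statement that the off-diagonal blocks $b$ and $c$ are Hilbert--Schmidt. Recall that in the Fock representation $\pi_P$ on $\mathcal{F}_P = \Lambda H_P$ the operators $\pi_P(a(f))$ with $f \in PH$ act (after the identification dictated by the complex structure $J = iP - i(1-P)$) as creation operators, those with $f \in (1-P)H$ as annihilation operators, and there is a cyclic Fock vacuum $\Omega_P$ annihilated by the latter. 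The goal is to produce a unitary $U$ on $\mathcal{F}_P$ with $U\pi_P(a(f))U^* = \pi_P(a(uf))$ for all $f \in H$.

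I would first solve the infinitesimal problem. The Lie algebra of $U_P(H)$ consists of the skew-adjoint $X$ on $H$ with $[X,P] \in L^2(H)$. For block-diagonal such $X$ the ordinary second quantization $d\Gamma(X)$ on $\mathcal{F}_P$ suffices: it is essentially skew-adjoint on the finite-particle subspace and satisfies $[d\Gamma(X),\pi_P(a(f))] = \pi_P(a(Xf))$. For block-off-diagonal $X = \left(\begin{smallmatrix} 0 & \beta \\ -\beta^* & 0 \end{smallmatrix}\right)$ with $\beta\colon (1-P)H \to PH$ Hilbert--Schmidt, one quantizes $X$ by a ``two-particle'' operator: choosing an orthonormal basis, a sum of the form $\tfrac12\sum_{ij}\beta_{ij}\,a^*(e_i)a^*(e_j)$ in the creation operators plus its adjoint, whose $\ell^2$-summability of coefficients (equivalently $\beta \in L^2$) guarantees that it is densely defined and essentially skew-adjoint on the finite-particle space, and again satisfies the Heisenberg relation with $\pi_P(a(f))$. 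Exponentiating, $e^{\widehat X}$ implements $e^X$. The main obstacle is exactly this point: checking that the two-particle operator attached to a Hilbert--Schmidt kernel is essentially self-adjoint and that its exponential has the claimed intertwining property; this is where the Hilbert--Schmidt hypothesis enters essentially, via Nelson-type commutator/analytic-vector arguments on the finite-particle core.

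To pass from the Lie algebra to the group, I would factor a general $u$ in the identity component of $U_P(H)$ into finitely many block-diagonal unitaries and exponentials of off-diagonal restricted operators: on the dense open set where the upper-left block $a$ is invertible one has a Gauss/Bruhat-type decomposition reducing $u$ to such a product, with the off-diagonal pieces Hilbert--Schmidt, and the complementary set is reached by a small perturbation together with the continuity statement below. For the application to loop groups nothing more is needed, since $G$ is simply connected, so $M_\gamma$ lies in the identity component of $U_P(H)$. Each factor being implemented, so is $u$. Uniqueness up to a phase is immediate: if $U,U'$ both implement $u$ then $U^*U'$ commutes with all $\pi_P(a(f))$, and irreducibility of $\pi_P$ forces $U^*U' \in \T\cdot\mathbbm{1}$.

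Finally, for continuity of $U_P(H) \to PU(\mathcal{F}_P)$ one shows that the implementer depends continuously, in the projective sense, on the Hilbert--Schmidt data $[u,P]$: this is verified first for the one-parameter groups $e^{t\widehat X}$ and for the block-diagonal part, with quantitative bounds on $(\Omega_P \mid U\Omega_P)$ and on $\|U\Omega_P - \Omega_P\|$ in terms of $\|\beta\|_2$, and then propagated through the finite product decomposition, using that a product of maps continuous into $PU(\mathcal{F}_P)$ is continuous and that the decomposition is locally continuous in $u$. Assembling these steps gives the theorem. An alternative, more $C^*$-algebraic route I would keep in reserve is the Shale--Stinespring criterion: $[u,P] \in L^2(H)$ if and only if the quasi-free state $\omega_P\circ\beta_u = \omega_{u^*Pu}$ is quasi-equivalent to the Fock state $\omega_P$; quasi-equivalence together with irreducibility of $\pi_P$ yields a unitary implementer (unique up to phase), with the continuity established separately as above.
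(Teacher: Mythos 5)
The paper does not prove this theorem: it is imported verbatim from Wassermann \cite{WA} (see also Pressley--Segal), so there is no internal proof to compare against. Judged on its own, your outline is the standard Shale--Stinespring/Segal construction and is essentially the argument those references give: block decomposition of $u$ relative to $P$, quantization of the diagonal part by $d\Gamma$ and of the off-diagonal part by quadratic creation/annihilation expressions with Hilbert--Schmidt kernels, essential skew-adjointness on the finite-particle core, a Gauss-type factorization to pass to the group, uniqueness from irreducibility of $\pi_P$, and continuity via quantitative vacuum-overlap estimates in $\Vert [u,P] \Vert_2$. The Powers--St\o rmer/quasi-equivalence route you keep in reserve is also a legitimate alternative.

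The one genuine gap is your treatment of the components of $U_P(H)$. The group $U_P(H)$ is not connected: its components are labelled by the Fredholm index of $PuP\vert_{PH}$, and on a component of nonzero index the block $a$ is \emph{never} invertible, so the Gauss/Bruhat decomposition is unavailable there and such a $u$ is not reachable by ``a small perturbation'' from the set where $a$ is invertible (that set is dense only in the index-zero component). The standard repair is to exhibit one explicitly implemented unitary of each index (a shift, implemented by a product of creation/annihilation operators mapping $\mathcal{F}_P$ onto a charged sector) and reduce a general $u$ to the identity component by composing with it. You correctly observe that for the paper's application this is moot, since $SU(n)$ is simply connected and $M_\gamma$ lies in the identity component; but the theorem as stated is for all of $U_P(H)$, so the proof as written does not yet establish it in full. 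A second, minor, point: the continuity claim must be checked across components as well, but since each component is open this reduces to the identity-component estimate you sketch.
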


\begin{prop}
	The fundamental representation of $LSU(n)$ can be extended to $H^s(S^1, SU(n))$ for any $s>1/2$. In particular, every  positive energy representation of $LSU(n)$ extends to a positive energy representation of $H^s(S^1, SU(n))$ for $s>1/2$.
\end{prop}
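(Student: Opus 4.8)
The plan is to invoke Segal's quantization criterion directly. The fundamental representation of $LSU(n)$ is the second quantization, relative to the Hardy projection $P$, of the multiplication action $M$ of $LSU(n)$ on $H = L^2(S^1, V)$; so I will show that $\gamma \mapsto M_\gamma$ extends to a continuous group homomorphism $M \colon H^s(S^1, SU(n)) \to U_P(H)$ for every $s > 1/2$, and then compose with the continuous map $U_P(H) \to PU(\mathcal{F}_P)$ supplied by Segal's criterion. By the structure theorem for Sobolev loop groups, $H^s(S^1, SU(n))$ is a connected Banach Lie group containing $C^\infty(S^1, SU(n))$ densely when $s > 1/2$, so the resulting strongly continuous projective representation on $\mathcal{F}_P$ restricts on $LSU(n)$ to the fundamental representation.

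The only analytic input is a Hilbert--Schmidt estimate. Expanding $\gamma(\theta) = \sum_m \hat\gamma_m e^{im\theta}$ with $\hat\gamma_m \in M_n(\C)$ and computing the matrix of $[M_\gamma, P] = (1-P)M_\gamma P - P M_\gamma(1-P)$ in the Fourier basis of $H$, a count of the pairs of modes that contribute to each Fourier coefficient gives
\[
\Vert [M_\gamma, P] \Vert_2^2 = \sum_{m \neq 0} |m| \, \Vert \hat\gamma_m \Vert_{HS}^2 \leq \Vert \gamma \Vert_{H^{1/2}}^2 \leq C_s \Vert \gamma \Vert_{H^s}^2 \,,
\]
so $M_\gamma \in U_P(H)$ for $\gamma \in H^s(S^1, SU(n))$. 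The same computation applied to the matrix-valued difference $\gamma_\alpha - \gamma$ gives continuity of $\gamma \mapsto [M_\gamma, P] \in L^2(H)$, while the Sobolev embedding $H^s(S^1) \hookrightarrow C^0(S^1)$, valid for $s > 1/2$, gives $\Vert \gamma_\alpha - \gamma \Vert_\infty \to 0$ and hence strong continuity of $\gamma \mapsto M_\gamma$; thus $M$ is a continuous homomorphism into $U_P(H)$ with its metric-plus-strong topology. Composing with Segal's map yields a strongly continuous projective representation $\pi$ of $H^s(S^1, SU(n))$ on $\mathcal{F}_P$. It is $\text{Rot}$-covariant: the relation $\pi(R_\theta)\pi(\gamma)\pi(R_\theta)^{-1} = \pi(R_\theta \gamma)$ holds on the dense subgroup $LSU(n)$ and the rotation action on $H^s(S^1, SU(n))$ is continuous, exactly as at the end of the proof of Theorem \ref{thm:ext}, so $\pi$ is again a positive energy representation.

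For the last assertion I will use that every positive energy representation of $LSU(n)$ of level $\ell$ occurs, up to multiplicity, inside the $\ell$-fold tensor product $\mathcal{F}_P^{\otimes \ell}$ of the fundamental representation \cite{WA}. Tensoring the extension above with itself $\ell$ times gives a positive energy representation of $H^s(S^1, SU(n))$ on $\mathcal{F}_P^{\otimes \ell}$ restricting to the $\ell$-fold tensor power on $LSU(n)$; any closed subspace invariant under the dense subgroup $LSU(n)$ is invariant under the full extended action by strong continuity, so restricting to such a subspace produces the desired extension of the given representation.

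The routine part is the Hilbert--Schmidt estimate. The point that genuinely forces $s > 1/2$, and the only place where one must be careful, is the Sobolev embedding into continuous functions: this is simultaneously what makes $H^s(S^1, SU(n))$ act by multiplication operators and what supplies the strong continuity of $\gamma \mapsto M_\gamma$; at the endpoint $s = 1/2$ the Hilbert--Schmidt bound still holds but $\gamma$ need no longer be a continuous $SU(n)$-valued function.
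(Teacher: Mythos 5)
Your proposal is correct and follows essentially the same route as the paper: the identical Hilbert--Schmidt estimate $\sum_k |k|\,\Vert\hat\gamma_k\Vert_2^2 \leq C\Vert\gamma\Vert_{H^s}^2$ for the commutator $[M_\gamma,P]$ in the Fourier basis, Segal's quantization criterion, continuity of $\gamma\mapsto M_\gamma$ into $U_P(H)$, and rotation covariance. Your tensor-power argument for the general level-$\ell$ case is just an unpacking of the paper's citations to complete reducibility and Proposition 2.3.3 of \cite{TL}, so the two proofs agree in substance.
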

\begin{proof}
	Notice that since a loop $\gamma$ in $LSU(n)$ is also a map from $S^1$ to $M_n(\C)$ then we can write $\gamma$ as a Fourier series $\gamma(z)=\sum \widehat{\gamma}_kz^k$, where $\widehat{\gamma}_k \in M_n(\C)$. We consider on $H$ the basis $e_j^k(z)=z^ke_j$, where $(e_j)$ is the standard basis of $\mathbb{C}^n$. We define $M_{pq}=\widehat{\gamma}_{p-q}$ and we note that $M_\gamma e^k_j=\sum_i M_{ik} e^i_j$, so that $(e^p_i,M_\gamma e^q_j)=(e_i, M_{pq} e_j)$. So $M_\gamma$ is represented by a $\mathbb{Z} \times \mathbb{Z}$ matrix $(M_{pq})$ of endomorphisms. We have
	\begin{equation*}
	\begin{split}
	\Vert [P,M_{\gamma}]\Vert_2^2 & = \sum_{p \geq 0,q<0} \Vert M_{pq} \Vert_2^2 + \sum_{p < 0,q \geq 0} \Vert M_{pq} \Vert_2^2 \\
	& = \sum_{k>0} k \Vert \widehat{\gamma}_k \Vert_2^2 - \sum_{k<0} k \Vert \widehat{\gamma}_k \Vert_2^2 \\
	& = \sum_{k \in \mathbb{Z}} |k| \Vert \widehat{\gamma}_k \Vert_2^2  \leq \sum_{k \in \mathbb{Z}}( 1+ |k|)^{2s} \Vert \widehat{\gamma}_k \Vert_2^2 \,, \\	
	\end{split}
	\end{equation*}
	for $s>1/2$. It is easy to verify that the map $\gamma \mapsto M_\gamma \in U_P(H)$ is continuous. We also have that the rotation group acts on $H^s(S^1, G)$ by continuous operators (see Lemma A.3 of \cite{CDIT}), and by $[R_\theta,P]=0$ we have that the projective representation of $\text{Rot} $ is actually a strongly continuous unitary representation. Therefore, the thesis follows by the Segal quantization criterion, the complete reducibility of positive energy representations (Thm. 9.3.1. of \cite{PS}), Proposition 2.3.3. of \cite{TL} and remarks below.
\end{proof}

\begin{prop} \label{prop:ansatz}
	Let $\gamma$ be a loop in $H^1(S^1, SU(n))$. Pick a not dense open interval $I$ of the circle and write $\gamma = \gamma_I  \gamma_{I'} $ as in \eqref{eq:split}. Then, in $PU(\hi)$ we have
	\begin{equation} \label{eq:ans}
	(D \omega_\gamma \colon D \omega)_{t} = \pi(\gamma_I \delta_I(t). \gamma_I^{-1}) \,,
	\end{equation}
	where $(D \omega_\gamma \colon D \omega)_t$ is the Connes cocycle of $\mathcal{A}_\ell(I)$ and $\delta_I(t)$ denotes the dilation associated to $I$.
\end{prop}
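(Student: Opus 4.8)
The plan is to establish \eqref{eq:ans} by showing that the right-hand side, call it $u_t = \pi(\gamma_I\,\delta_I(t).\gamma_I^{-1})$, is a $\sigma^\omega$-cocycle in $\mathcal{A}_\ell(I)$ implementing the correct modular flow, and then appeal to the uniqueness of the Connes cocycle. Concretely, I would verify the two defining relations: (i) $u_t \in \mathcal{A}_\ell(I)$ and $u_t \sigma_t(x) u_t^* = \sigma_t^\gamma(x)$ for all $x \in \mathcal{A}_\ell(I)$, and (ii) the cocycle identity $u_{t+s} = u_t\,\sigma_t(u_s)$. Together these characterize $(D\omega_\gamma : D\omega)_t$ uniquely (this is the same strategy used in the proof of Proposition \ref{prop:ansatz'}, invoking Lemma 3 of \cite{LP}), so once both are checked the identity \eqref{eq:ans} follows.

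For step (i), the membership $u_t \in \mathcal{A}_\ell(I)$ comes from the fact that $\gamma_I$ and hence $\gamma_I\,\delta_I(t).\gamma_I^{-1}$ is supported in $\bar I$ — since $\delta_I(t)$ fixes $I$ setwise, conjugating $\gamma_I$ by it keeps the support inside $\bar I$ — together with the localization of the extended representation established via Theorem \ref{thm:ext} and \eqref{eq:split} (valid here because $H^1 \subseteq H^s$ for all $s > 1/2$, so the extension of the previous proposition applies). The intertwining relation is the key computation: using $\Delta_{I,\gamma}^{it} = \pi(\gamma)\Delta_I^{it}\pi(\gamma)^*$ and the covariance of $\pi$ under rotations/dilations, one gets $\sigma_t^\gamma(x) = \mathrm{Ad}\,\pi(\gamma)\Delta_I^{it}\pi(\gamma)^* (x)$, and inserting $\Delta_I^{-it}\Delta_I^{it}$ one isolates $\mathrm{Ad}(\pi(\gamma)\Delta_I^{it}\pi(\gamma)^*\Delta_I^{-it})$ acting before $\sigma_t$. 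Now $\pi(\gamma)\Delta_I^{it}\pi(\gamma)^*\Delta_I^{-it}$ should equal $\pi(\gamma_I)\pi(\delta_I(t).\gamma_I^{-1})$ up to a phase: writing $\Delta_I^{it} = \pi(\delta_I(t))$ (Bisognano–Wichmann, as recalled before Proposition \ref{prop:ansatz'}), this is $\pi(\gamma)\pi(\delta_I(t))\pi(\gamma^{-1})\pi(\delta_I(-t)) = \pi(\gamma\cdot\delta_I(t).\gamma^{-1})$ modulo a scalar, and the part supported inside $I$ is exactly $u_t$. The phase is immaterial in $PU(\hi)$, which is why the statement is phrased in $PU(\hi)$ and no constant $a_\gamma$ appears (contrast with \eqref{eq:ans'}) — in the level-one fundamental-representation picture the relevant cocycle is genuinely a loop-group element, so there is no extra central correction to track.

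For step (ii), the cocycle identity follows from the composition law for the adjoint action: $\gamma\cdot\delta_I(t+s).\gamma^{-1} = (\gamma\cdot\delta_I(t).\gamma^{-1})\cdot\delta_I(t).(\gamma\cdot\delta_I(s).\gamma^{-1}).\delta_I(-t)$, which after applying $\pi$ and using $\sigma_t = \mathrm{Ad}\,\Delta_I^{it} = \mathrm{Ad}\,\pi(\delta_I(t))$ gives $u_{t+s} = u_t\,\sigma_t(u_s)$ in $PU(\hi)$; since both sides are honest unitaries and the modular flow is fixed, any phase ambiguity is resolved by continuity and the normalization $u_0 = \id$. I expect the main obstacle to be the careful bookkeeping of supports and the rigorous justification that conjugating the \emph{localized piece} $\gamma_I$ by the dilation — rather than the full loop $\gamma$ — is legitimate at the level of regularity $H^1$, i.e. that $\gamma_I\,\delta_I(t).\gamma_I^{-1}$ still lies in the domain where the extended representation and the factorization \eqref{eq:split} behave well; this is where one must lean on the Sobolev extension of the previous proposition and on $\delta_I(t)$ being a smooth diffeomorphism preserving $I$, so that it acts continuously on $H^s(S^1,SU(n))$ by Lemma A.3 of \cite{CDIT}.
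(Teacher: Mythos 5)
Your proposal is correct and follows essentially the same route as the paper: the same candidate $u_t=\pi(\gamma_I\,\delta_I(t).\gamma_I^{-1})$, the same preliminary check that this is a well-defined unitary localized in $I$, and the same intertwining computation $\sigma_t^\gamma=\mathrm{Ad}(u_t)\circ\sigma_t$ on $\mathcal{A}_\ell(I)$ obtained by inserting $\Delta_I^{-it}\Delta_I^{it}$ as in Proposition \ref{prop:ansatz'}. The only (harmless) extra work is your step (ii): the paper never verifies the cocycle identity here, concluding instead by factoriality --- the Connes cocycle and $u_t$ implement the same automorphism of $\mathcal{A}_\ell(I)$ and both lie in $\mathcal{A}_\ell(I)$, so they differ by an element of $\mathcal{A}_\ell(I)\cap\mathcal{A}_\ell(I)'=\C\id$ --- which is precisely why no constant $a_\gamma$ needs to be tracked and the identity holds in $PU(\hi)$.
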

\begin{proof}
	First we check that $\gamma_I \delta_I(t). \gamma_I^{-1}$ is in $H^1(S^1, SU(n))$ since it is continuous on the boundary of $I$, hence the right hand side of \eqref{eq:ans} is well defined. With the same computations of Proposition \ref{prop:ansatz'} we have that $\sigma^\gamma_t(x) =  \text{Ad}\, \pi(\gamma_+ \delta_I(t). \gamma_+^{-1}) \cdot \sigma_t (x) $ for $x$ in $\mathcal{A}_\ell(I)$. Therefore, we have that $(D \omega_\gamma \colon D \omega)_{t} $ is equal to $\pi(\gamma_+ \delta_I(t). \gamma_+^{-1})$ up to a unitary $V$ in the commutant of $\mathcal{A}_\ell(I)$, but $(D \omega_\gamma \colon D \omega)_t $ and  $\pi(\gamma_+ \delta_I(t). \gamma_+^{-1})$ are both in $\mathcal{A}_\ell(I)$ and thus $V$ is a scalar.
\end{proof}

\begin{thm} \label{thm:QNEC}
	Let $\gamma$ be a loop in $H^1(S^1, SU(n))$. Suppose also that, in the real line picture, the support of $\gamma$ is bounded from below. Then the relative entropy \eqref{eq:ent} is finite and given by 
\begin{equation} \label{eq:entropy}
		S(t) = - \frac{\ell}{2} \int_t^\infty (u-t) \braket{\dot{\gamma} \gamma^{-1}, \dot{\gamma} \gamma^{-1}} {d u} \,.
	\end{equation}
In particular, the QNEC \eqref{eq:Q} holds, with \begin{equation} 
S''(t) = - \frac{\ell}{2} \braket{\dot{\gamma} \gamma^{-1}, \dot{\gamma} \gamma^{-1}}(t)  \geq 0 \,.
\end{equation}

\end{thm}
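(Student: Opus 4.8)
The plan is to mirror the structure of the proof of Theorem \ref{thm:QNEC'}, but using Proposition \ref{prop:ansatz} in place of Proposition \ref{prop:ansatz'}, which should make several steps lighter because the Connes cocycle is now realized directly as $\pi(\gamma_I \delta_I(t).\gamma_I^{-1})$ with no anomalous scalar $a_\gamma$ to chase down. First I would reduce to $t=0$ by conformal covariance, and use $G$-invariance of the vacuum to assume $\gamma(\infty)=e$; since $\mathrm{supp}\,\gamma$ is bounded below in the real line picture, we may write $\gamma = \gamma_+\gamma_-$ with $\gamma_+$ supported in $[0,\infty)$ and $\gamma_-$ supported in $(-\infty,0]$, so that $\pi(\gamma_-)$ lies in $\mathcal{A}_\ell(-\infty,0)=\mathcal{A}_\ell(0,\infty)'$ and hence $\omega_\gamma$ and $\omega_{\gamma_+}$ induce the same state on $\mathcal{A}_\ell(0,\infty)$. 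Thus it suffices to compute $S(0)$ for a loop supported in $[0,\infty)$.

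Next I would apply formula \eqref{eq:der}, differentiating $\omega_\gamma((D\omega_\gamma:D\omega)_t)$ at $t=0$. By Proposition \ref{prop:ansatz} the cocycle is $\pi(\gamma_+\delta_I(t).\gamma_+^{-1})$, and differentiating the one-parameter family $\gamma_+\delta_I(t).\gamma_+^{-1}$ of loops at $t=0$ (using that $\delta_I$ is the vector field generating the dilations of $I=(0,\infty)$, and the chain rule on $\gamma_+\circ\delta_I(t)^{-1}$) produces, via Stone's theorem together with Theorem \ref{thm:cov}/\ref{thm:covexp}, the generator $i T(\delta_I)+\pi(\delta_I\dot{\gamma}_+\gamma_+^{-1})$ up to an explicit real constant coming from the cocycle $b(\gamma_+,\delta_I)$. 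Taking the vacuum expectation, the $T(\delta_I)$ piece and the purely-$\pi(\cdot)$ piece contribute through $(\Omega|\pi(hX)\Omega)=0$-type identities and the explicit form of $b(\gamma,h)$ in Corollary \ref{cor:rules}; writing $\delta_I$ in the real-line picture as $u\,\partial_u$ on $(0,\infty)$, one gets
\[
S(0) = -\frac{\ell}{2}\int_0^\infty u\,\braket{\dot{\gamma}\gamma^{-1},\dot{\gamma}\gamma^{-1}}\,du \,,
\]
which is \eqref{eq:entropy} at $t=0$; the general $t$ follows by translation covariance, and finiteness follows because the integrand is $L^1$ given $\gamma\in H^1$ with support bounded below. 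The QNEC statement is then immediate: differentiating \eqref{eq:entropy} twice in $t$ gives $S''(t)=-\tfrac{\ell}{2}\braket{\dot{\gamma}\gamma^{-1},\dot{\gamma}\gamma^{-1}}(t)$, and this is $\geq 0$ because $\braket{\cdot,\cdot}$ restricted to $\g_0$ is negative definite while $\dot{\gamma}\gamma^{-1}$ takes values in $\g_0$, so $\braket{\dot{\gamma}\gamma^{-1},\dot{\gamma}\gamma^{-1}}\leq 0$ pointwise.

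The main obstacle I anticipate is justifying the differentiation-under-the-state in \eqref{eq:der} and the Stone-theorem computation of the cocycle generator at the low regularity $H^1$: one only has $\gamma\in H^1(S^1,SU(n))$, so $\delta_I\dot{\gamma}_+\gamma_+^{-1}$ need not be better than $L^2$-type, and one must check that $t\mapsto\pi(\gamma_+\delta_I(t).\gamma_+^{-1})$ is differentiable in the appropriate sense (e.g.\ strong-resolvent, or differentiable as a curve in $U(\hi)$ after lifting the projective representation) and that $\omega_\gamma$ composed with it is genuinely $C^1$ at $0$. I would handle this by an approximation argument: approximate $\gamma$ by smooth loops $\gamma_n$ (supported in a slightly larger half-line, exploiting density of $C^\infty$ in $H^s$ and Corollary \ref{cor:prod}), prove \eqref{eq:entropy} for the $\gamma_n$ by the clean smooth computation, and then pass to the limit using lower semicontinuity and monotonicity of relative entropy together with the continuity statements of Theorem \ref{thm:cov} and the corollary on $H^s$-continuity of the representation — exactly the pattern used at the end of the proof of Theorem \ref{thm:QNEC'}.
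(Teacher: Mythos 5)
Your primary route --- computing $S(0)$ from \eqref{eq:der} by differentiating the cocycle $\pi(\gamma_+\delta_I(t).\gamma_+^{-1})$ via Stone's theorem --- cannot work at the stated regularity, and this is not a technicality one can patch: for $\gamma\in H^1(S^1,SU(n))$ the would-be generator involves $\pi(\delta_I\dot{\gamma}_+\gamma_+^{-1})$, but $\dot{\gamma}\gamma^{-1}$ is only an $L^2$-type function and the Lie-algebra representation $\pi$ is only defined up to $H^{3/2}(S^1,\g)$; likewise formula \eqref{eq:2} of Theorem \ref{thm:cov} requires $\gamma\in H^{5/2}$. The whole point of Proposition \ref{prop:ansatz} is that the Connes cocycle is exhibited as a \emph{group} element, so that it never has to be differentiated. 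Your fallback (prove \eqref{eq:entropy} for smooth approximants $\gamma_n$ and pass to the limit) closes only half the gap: lower semicontinuity of the relative entropy gives $S_{\mathcal{A}_\ell(0,+\infty)}(\omega_\gamma\Vert\omega)\leq\liminf_n S_{\mathcal{A}_\ell(0,+\infty)}(\omega_{\gamma_n}\Vert\omega)$, i.e.\ the upper bound $S\leq F$, while neither lower semicontinuity nor monotonicity (which compares algebras, not states) yields the reverse inequality $S\geq F$, which is where the content of the theorem lies. There is also a minor circularity in your use of \eqref{eq:der}: that formula presupposes finiteness of $S(t)$, which the paper establishes separately beforehand by the auxiliary-loop monotonicity trick.

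The paper's actual argument replaces the differentiation of the cocycle by the Ceyhan--Faulkner mechanism. After establishing (i) finiteness of $S(t)$ and $\bar{S}(t)$, (ii) the sum rule \eqref{eq:l1} and hence Lipschitz continuity of $S$, and (iii) the exact value $S(t)=F(t)$ at points where $\gamma(t)=e$, it applies Lemma 1 of \cite{CF} not only to $\omega_\gamma$ but to the entire family $\omega_\gamma\cdot\text{Ad}(u'_s)^*$, where $u'_s=(D\omega\colon D\omega_\gamma)_s$ is the Connes cocycle of the \emph{complementary} algebra $\mathcal{A}_\ell(-\infty,0)$; by Proposition \ref{prop:ansatz} this equals $\pi(\gamma_-\delta(-s).\gamma_-^{-1})^*$, a genuine $H^1$ loop, so the energies $E_s$ are computable from \eqref{eq:en} without any Lie-algebra-level formula. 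This produces the differential inequality $-S'(0)\leq\inf_s 2\pi E_s=-\tfrac{\ell}{2}\int_0^{\infty}\braket{\dot{\gamma}\gamma^{-1},\dot{\gamma}\gamma^{-1}}\,du$, i.e.\ $S'(t)\geq F'(t)$ for all $t$, and together with $S=F$ to the left of the support and the one-sided bound $S\leq F$ this forces $S\equiv F$. This derivative estimate along the cocycle flow is the key idea missing from your proposal.
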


\begin{proof}
	Since the vacuum is $SU(n)$-invariant then we can replace $\gamma$ with $\gamma g$ for any $g$ in $SU(n)$, thus we can suppose $\gamma(\infty) =e$.
	As above,  if $\gamma(t) =e$ then $S(t)$ is finite and given by \eqref{eq:entropy}. We can prove that $S(t)$ is finite for any $t$ real as in Theorem \ref{thm:QNEC'}, and similarly we have that $\bar{S}(t) = S_{\mathcal{A}_\ell(- \infty, t)}(\omega_\gamma \Vert \omega)$ is finite for any $t$ real. If $P$ is the generator of translations then the average energy $E_\gamma$ in the state $\omega_\gamma$ is finite and given by equation \eqref{eq:en}. Therefore we can apply Lemma 1. of \cite{CF}, and equation \eqref{eq:l1} holds.  This implies that $S(t)$ and $\bar{S}(t) $ are both Lipschitz functions and in particular they are absolutely continuous. The next step is an estimate on $S'(t)$. For simplicity we focus on the case $t=0$ and we write $\gamma = \gamma_+  \gamma_- $ with $\gamma_+(u)=e$ for $u \leq 0$ and $\gamma_-(u)=e$ for $u \geq 0$. By  Proposition \ref{prop:ansatz} the Connes cocycle  $u'_s = (D \omega \colon D \omega_\gamma)_s$ on $\mathcal{A}_\ell(- \infty,0)$ is equal in $PU(\hi)$ to $\pi(\gamma_- \delta(-s).\gamma_-^{-1})^*$. But also the state $\omega_\gamma \cdot \text{Ad}(u'_s)^*$ verifies the finiteness conditions required to apply Lemma 1. and thus we have $- S'(0) \leq 2 \pi E_s $, where $E_s =( u'_s \pi(\gamma)\Omega | P u'_s \pi(\gamma)\Omega)$ for $s$ real. However, one can simply prove that
\[
\inf_s 2 \pi  E_s =  - \frac{\ell}{2} \int_0^{+\infty} \braket{\dot{\gamma} \gamma^{-1}, \dot{\gamma} \gamma^{-1}} {d u} \,.
\]
Therefore, by repeating the argument with any $t$ in $\R$ we have
\[
-S'(t) \leq - \frac{\ell}{2} \int_t^{+\infty} \braket{\dot{\gamma} \gamma^{-1}, \dot{\gamma} \gamma^{-1}} {d u} \,.
\]
Finally, if we define
\[
F(t) = - \frac{\ell}{2} \int_t^\infty (u-t) \braket{\dot{\gamma} \gamma^{-1}, \dot{\gamma} \gamma^{-1}} {d u} \,,
\]
then we can conclude that $S(t)=F(t)$ for any $t$ in $\R$. Indeed, if the support of $\gamma$ 
is compact then $H(t)=S(t) - F(t)$ is an absolutely continuous function with nonnegative derivative and going to $0$ as $|t| \to + \infty$. If the support of $\gamma$ is contained in $(k, +\infty)$ then by lower semicontinuity $S(t) \leq F(t)$ for every $t$ real, and we can similarly deduce that $H(t)=0$ for every $t$ real.
\end{proof}

\section{Solitonic representations from discontinuous loops}

In this section we follow \cite{NS} and we construct proper solitonic representations of the conformal net $\mathcal{A} = \{ \mathcal{A}(I) \}_{I \in \mathcal{K}}$ associated to some vacuum positive energy representation $\pi$  of a loop group $LG$. We denote by $U$ the  projective unitary continuous representation of $\Diff$ and we always suppose $G$ to be simple, compact and simply connected. We recall that $\mathcal{K}$ denotes the family of all the open, nonempty and not dense intervals of the circle.  Given $h$ in $G$, we define
\[
L_hG =  \big\{ \zeta \in C^\infty (\R, G) \colon \zeta(x)^{-1} \zeta(x+ 2 \pi) = h \big\} \,.
\]
We denote by $\mathcal{I}_\R$ the set of all the open, nonempty and not dense intervals of $ S^1 \setminus \{-1 \}$. We then define $\sigma_\zeta = \{ \sigma_\zeta^I \}_{I \in \mathcal{I}_\R}$ as the collection of maps given by 
\[
\sigma_\zeta^I \colon \mathcal{A}(I) \to B(\hi) \,, \quad \sigma^I_\zeta(x) = \text{Ad}\, \pi(\zeta_I)(x) \,,
\]
where $\zeta_I$ is a loop in $LG$ such that $\zeta_I (\theta)= \zeta(\theta)$ for $\theta$ in $I $ seen as a subinterval of $ (- \pi, \pi)$. 

\begin{prop}
	$\sigma_\zeta$ is an irreducible locally normal soliton with index $1$. 
\end{prop}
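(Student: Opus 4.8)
The plan is to verify the three assertions — locality/normality as a soliton, irreducibility, and index one — in that order, following the blueprint of \cite{NS}. First I would check that $\sigma_\zeta$ is a well-defined soliton endomorphism collection: given $I \in \mathcal{I}_\R$, the loop $\zeta_I \in LG$ used to define $\sigma^I_\zeta$ is unique up to a loop supported in $I'$ (the complement), so by Haag duality on the circle $\mathrm{Ad}\,\pi(\zeta_I)$ restricted to $\mathcal{A}(I)$ does not depend on the choice of extension; this gives well-definedness. Local normality is immediate since each $\sigma^I_\zeta = \mathrm{Ad}\,\pi(\zeta_I)$ is (the restriction of) an inner automorphism of $B(\hi)$, hence normal on $\mathcal{A}(I)$. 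For the compatibility/localized-transportable structure making $\sigma_\zeta$ a genuine soliton of $\mathcal{A}$ on the punctured circle $S^1 \setminus \{-1\}$, I would check that for $I_1 \subseteq I_2$ in $\mathcal{I}_\R$ one may choose $\zeta_{I_2}$ extending $\zeta_{I_1}$ up to a loop trivial on $I_1$, so that $\sigma^{I_2}_\zeta|_{\mathcal{A}(I_1)} = \sigma^{I_1}_\zeta$; the potential discontinuity at $-1$ is exactly what prevents $\zeta$ from defining an honest DHR endomorphism and forces the soliton (rather than representation) framework.

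Next, for \emph{irreducibility}, I would compute the relative commutant of $\sigma_\zeta(\mathcal{A})$ in $B(\hi)$, or equivalently show that the solitonic representation is irreducible. The key point is that $\mathcal{A}(I)$, $I$ ranging over $\mathcal{I}_\R$, together with the unitaries $\pi(\zeta_I)$ generate a von Neumann algebra with trivial commutant: since the vacuum representation $\pi$ of $LG$ is irreducible and $\bigvee_{I} \mathcal{A}(I) = B(\hi)$ by additivity together with the fact that $\pi(LG)'' = B(\hi)$ (the vacuum is cyclic and the net is generated by the loop group), one deduces that the soliton has trivial self-intertwiners. Concretely: an operator $T$ intertwining $\sigma_\zeta$ with itself must commute with $\pi(\zeta_I) \mathcal{A}(I) \pi(\zeta_I)^*$ for all $I$, hence (transporting back) with all of $\mathcal{A}(I)$, hence $T$ is a scalar by irreducibility of the vacuum net.

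Finally, for the \emph{index}, I would identify the conjugate soliton and compute the dimension. The natural candidate for the conjugate of $\sigma_\zeta$ is $\sigma_{\bar\zeta}$ where $\bar\zeta$ is built from the contragredient/inverse data, and the Jones index $[\,\mathcal{A}(I) : \sigma^I_\zeta(\mathcal{A}(I))\,]$ — more precisely the statistical dimension $d(\sigma_\zeta)^2$ — should come out to be $1$ because $\sigma_\zeta$ is implemented by a genuine \emph{unitary} $\pi(\zeta_I)$ on each local algebra: an automorphism of this inner type has trivial index. The main obstacle I anticipate is the careful handling of the discontinuity at $-1$: one must make sure that the soliton axioms (in particular the commutativity/locality relations of $\sigma^{I_1}_\zeta$ and $\sigma^{I_2}_\zeta$ for disjoint $I_1, I_2$, and the independence of $-1$ from the choice of which $\zeta_I$ to use across the cut) are not spoiled by the mismatch $\zeta(-\pi)^{-1}\zeta(\pi) = h \neq e$; this requires picking the local loops $\zeta_I$ consistently on overlaps away from $-1$ and invoking Haag duality to kill the ambiguity, exactly as in \cite{NS}. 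The index-one claim itself is then the comparatively easy consequence of implementability by unitaries, and irreducibility follows from irreducibility of the underlying vacuum representation.
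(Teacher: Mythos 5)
Your proposal is correct and follows essentially the same route as the paper: normality from unitary implementation on each local algebra, compatibility from the fact that two admissible choices of $\zeta_I$ differ by a loop supported in the complement (so locality/Haag duality kills the ambiguity), index one from $\sigma^I_\zeta(\mathcal{A}(I))=\mathcal{A}(I)$, and irreducibility from $\bigvee_{I}\sigma_\zeta(\mathcal{A}(I))=\bigvee_I\mathcal{A}(I)=B(\hi)$ by irreducibility of the net. The extra discussion of conjugates and statistical dimension is unnecessary but harmless.
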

\begin{proof}
	Normality on each $\mathcal{A}(I)$ follows because on these local algebras $\sigma_\zeta$ is given by the adjoint action by a unitary operator. The compatibility property is clear, since if $I \subseteq J$ then $\pi(\zeta_I \zeta_J^{-1})$ is in $\mathcal{A}(I') = \mathcal{A}(I)'$. The index is $1$ since if $I$ is in $\mathcal{I}_\R$ then $\sigma_\zeta(\mathcal{A}(I)) = \mathcal{A}(I)$, and for the same reason we have that 
	\[
	\bigvee_{I \in \mathcal{I}_\R} \sigma_\zeta(\mathcal{A}(I)) = \bigvee_{I \in \mathcal{I}_\R} \mathcal{A}(I) = B(\hi)
	\]
	since the conformal net $\mathcal{A}$ is irreducible (see \cite{NS} for the definitions).
\end{proof}

Since $\pi$ is irreducible if and only if it is irreducible as a projective representation of $LG$, then $\sigma_\zeta$ is irreducible if $\pi$ is irreducible (see also Corollary 1.3.3. of \cite{TL}).  Notice that for $\zeta$ in $L_h G$ we have that $ \zeta_t(\phi) = \zeta(\phi) \zeta(\phi - t)^{-1} $ is in $LG$ for any $ t$ in  $ \R $. We now denote by $\text{Rot}^\sim$ the universal covering of $\text{Rot} \cong \T$, the group of rotations of the circle. If $R_t$ is the unitary representation of $\text{Rot}$ associated to $\pi$, then we can define $R^\zeta_t = \pi(\zeta_t) R_t$ in $PU(\hi)$ and notice that
$R^\zeta_t R^\zeta_s = R^\zeta_{t+s}$. However, in general $R^\zeta_{2 \pi}$ is not a scalar and therefore $\sigma_\zeta$ is not $\text{Rot}$-covariant but only locally $\text{Rot}^\sim$-covariant. We recall the definition.

\begin{defn}
	Let $\mathcal{G}$ be a topological group equipped with some homomorphism $ \iota \colon \mathcal{G} \to \Diff$.  We say that a soliton $\sigma = \{ \sigma_I \}_{I \in \mathcal{I}_\R }$ is {\em locally $\mathcal{G}$-covariant} if there is a  unitary projective continuous representation $U_\sigma$ of $\mathcal{G}$ which satisfies the following property: if $I$ is in $\mathcal{I}_\R$ and $V$ is a connected neighborhood of the identity in $\mathcal{G}$ such that $g. I$ is in $\mathcal{I}_\R$ for every $g $ in $V$, then $ \text{Ad} \, U_\sigma (g) \sigma_I(x) = \sigma_{\iota(g).I}\text{Ad} \,   U (\iota(g) )(x) $ for every  $x $ in $\mathcal{A}(I) $.
\end{defn}

We can notice that if $\zeta$ is in $L_gG $ and $\eta$ is in $L_hG $ then  $\zeta \eta^{-1}$ is in $L_{h^{-1}g}G$ if $h^{-1}g$ is in $Z(G)$. In particular, if $\zeta$ and $\eta$ are in both in $L_hG$ then $\zeta \eta^{-1}$ is in $LG$ and $\sigma_\zeta$ is unitarily equivalent to $\sigma_\eta$.

\begin{thm}
	Let $\pi$ be a vacuum positive energy representation of $LG$ of level $\ell \geq 1$. Given $\zeta$ in $L_hG$, the soliton $\sigma_\zeta$ extends to a DHR representation if and only if $h$ is central.
\end{thm}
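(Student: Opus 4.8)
The statement is an "if and only if", so I will prove the two implications separately. The direction "$h$ central $\Rightarrow$ $\sigma_\zeta$ is DHR" is the easy one: if $h\in Z(G)$, then as noted just above the statement, any $\eta\in L_hG$ differs from $\zeta$ by a genuine loop in $LG$, so $\sigma_\zeta$ is unitarily equivalent to $\sigma_\eta$ for all such $\eta$; moreover one checks that for $h$ central the cocycle-type obstruction disappears, so $R^\zeta_{2\pi}$ becomes a scalar, i.e.\ the soliton is actually $\mathrm{Rot}$-covariant and the transport of $\sigma_\zeta$ around $-1$ is trivial. Concretely, I would exhibit a genuine loop conjugating $\sigma_\zeta$ to the identity representation (or show the monodromy operator $\pi(\zeta_{2\pi})R_{2\pi}$ differs from $R_{2\pi}$ by a phase), which gives a DHR representation localized in any interval avoiding $-1$; since the soliton is covariant, it extends across $-1$ and becomes a bona fide DHR representation on the circle.

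\emph{The hard direction} is "$\sigma_\zeta$ DHR $\Rightarrow$ $h$ central". Here I would argue by contraposition: assume $h\notin Z(G)$ and show the soliton cannot be a DHR representation. The key structural obstruction is the monodromy: a DHR representation must be transportable all the way around the circle, which forces the operator implementing rotation by $2\pi$, namely $R^\zeta_{2\pi}=\pi(\zeta_{2\pi})R_{2\pi}$, to act trivially (up to phase) — equivalently, $\pi(\zeta_{2\pi})$ must be a scalar in $PU(\hi)$. Now $\zeta_{2\pi}(\phi)=\zeta(\phi)\zeta(\phi-2\pi)^{-1}$, and by the defining relation $\zeta(x)^{-1}\zeta(x+2\pi)=h$ this loop is conjugate (pointwise) to the constant loop $h$; more precisely $\zeta_{2\pi}=\zeta(\cdot)\,h^{-1}\,\zeta(\cdot)^{-1}$, so in the loop group it is the image of the constant loop $h^{-1}$ under conjugation by $\zeta$. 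Since the level $\ell\geq 1$ representation $\pi$ restricted to the constant loops $G\subseteq LG$ integrates to an honest (non-projective) unitary representation of the compact simply connected group $G$ on $\hi$, and this representation is faithful modulo the center (the kernel of a level $\ell\geq1$ PER restricted to $G$ is contained in $Z(G)$ — this is where $\g$ simple and $\ell\geq 1$ enters, via the highest-weight theory recalled in the introduction), the element $\pi(h^{-1})$ is a scalar only if $h\in Z(G)$. Conjugating by $\pi(\zeta)$ (which is the relevant adjoint action, well-defined in $PU(\hi)$) does not change whether an operator is scalar, so $\pi(\zeta_{2\pi})$ scalar forces $h\in Z(G)$, contradiction.

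\textbf{The main obstacle.} The delicate point is making rigorous the claim that "DHR extendability forces the $2\pi$-monodromy operator to be scalar." One must invoke the precise definition of when a locally $\mathrm{Rot}^\sim$-covariant soliton extends to a DHR representation (following \cite{NS}): the soliton lives on intervals avoiding $-1$, and extending it to a DHR representation on $\mathcal{K}$ amounts to patching $\sigma_\zeta$ with a transported copy around $-1$ in a way compatible with the net structure and covariance; the obstruction to doing so is exactly measured by the class of $R^\zeta_{2\pi}$ in $PU(\hi)$ modulo the vacuum representation of $\mathrm{Rot}$. I would spell this out by noting that if $\sigma_\zeta$ were DHR it would be unitarily equivalent to a representation built by choosing $\zeta_I$ consistently on overlapping intervals covering all of $S^1$, which is possible precisely when the "holonomy" $\zeta(\cdot)^{-1}\zeta(\cdot+2\pi)=h$ can be absorbed — i.e.\ when $\pi(h)$ is trivial in $PU(\hi)$. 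The second subtlety is justifying that $\ker\big(\pi|_G\big)\subseteq Z(G)$ for $\ell\geq1$: this follows because $\hi(0)=\hi_\lambda$ is a nontrivial irreducible highest-weight $\g$-module, so $G$ acts on it with central kernel, and the level $\ell\geq 1$ condition guarantees $\lambda\neq 0$ is available; combined with $G$ simply connected this pins down the kernel as exactly $Z(G)\cap\{\pi(g)=1\}$, which is enough. Once these two points are in place, the equivalence follows by combining them as above.
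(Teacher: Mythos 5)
Your overall strategy --- reducing both directions to whether the monodromy $R^\zeta_{2\pi}=\pi(\zeta_{2\pi})R_{2\pi}$ is a scalar --- is the same as the paper's, and the easy direction is essentially fine. But the hard direction has a genuine gap at the step ``conjugating by $\pi(\zeta)$ does not change whether an operator is scalar.'' The element $\zeta$ is not a loop (it satisfies $\zeta(x+2\pi)=\zeta(x)h$ with $h\neq e$), so $\pi(\zeta)$ is not defined in $PU(\hi)$, and the identity $\pi(\zeta_{2\pi})=\pi(\zeta)\pi(h)\pi(\zeta)^{-1}$ that you are implicitly invoking has no meaning; the pointwise identity $\zeta_{2\pi}(\phi)=\zeta(\phi)\,h\,\zeta(\phi)^{-1}$ does not by itself transfer the kernel computation from the constant loop $h$ to the loop $\zeta_{2\pi}$. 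The paper's fix is a maximal torus reduction: pick a maximal torus $T\ni h$; since $T$ is connected one may choose $\eta\in L_hG$ with values in $T$, and $\sigma_\eta$ is unitarily equivalent to $\sigma_\zeta$ because $\zeta\eta^{-1}\in LG$; by commutativity $\eta_{2\pi}$ is then literally the constant loop $h$, so the question genuinely reduces to whether $\pi(h)$ is a scalar. Without this step (or a computation of the full kernel of $\pi$ on all of $LG$), your argument does not close.

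Two further points. First, your justification that DHR extendability forces $R^\zeta_{2\pi}$ to be a scalar is only a heuristic about ``absorbing the holonomy''; the paper makes this precise by using that a DHR extension is $\mathrm{Rot}$-covariant with some projective representation $U^\zeta$, showing that $U^\zeta_\pi R^\zeta_\pi$ lies in both $\mathcal{A}((0,\pi))'$ and $\mathcal{A}((-\pi,0))'$, hence is a scalar by irreducibility, whence $R^\zeta_{2\pi}$ is a scalar; some concrete argument of this kind is needed. Second, your proof that $\pi(g)$ scalar implies $g\in Z(G)$ rests on $\hi(0)=\hi_\lambda$ being a \emph{nontrivial} irreducible $\g$-module --- but the theorem concerns a vacuum representation, where $\hi(0)=\C\Omega$ is trivial and $\lambda=0$, so that argument fails at the lowest energy level. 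The paper instead observes that $N=\{g\in G\colon\pi(g)\in\T\}$ is a normal subgroup of the simple connected group $G$, so if it is not contained in the center it equals $G$, which is absurd (e.g.\ because $\mathrm{Ad}\,\pi(g)$ would then act trivially on every $\pi(X)$ with $X\in L\g_0$, forcing $\mathrm{Ad}(g)=\mathrm{id}$ for all $g$).
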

\begin{proof}
First we suppose $h$ to be in $Z(G)$.	A quick computation shows that in this case  $R^\zeta_{2 \pi} = \pi(h)$. By the identity $\pi(h) e^{\pi(X)} \pi(h)^* = e^{\pi(X)}$ for any $X$ in $L \g_0$ we have that $R^\zeta_{2 \pi} $ is a scalar since $\pi$ is irreducible. This implies that $\sigma_\zeta$ is locally $\text{Rot}$-covariant and we have that $\sigma_\zeta$  can be extended to a locally normal DHR representation by using the arguments of Proposition 3.8. of \cite{NS}. Now we suppose $h$ to be not central. By absurd, $\sigma_\zeta$ extends to a DHR representation and thus it is  $\text{Rot}$-covariant. Denote by $U^\zeta_{\theta}$ the corresponding projective representation of the circle. Since $\sigma_\zeta$ is locally normal, by using the additivity property one can show that
\[
\text{Ad} U_\pi^\zeta \cdot \sigma_\zeta (x) = \text{Ad} U_{-\pi}^\zeta \cdot \sigma_\zeta (x)= \sigma_\zeta \cdot \text{Ad} R_{ -\pi }  (x) = \text{Ad} R_{- \pi }^\zeta \cdot \sigma_\zeta (x) \,, \quad x \in \mathcal{A}((0, \pi)) \,.
\]
It follows that $U^\zeta_\pi R^\zeta_\pi$ is in $\mathcal{A}((0, \pi))' $, and by similarly proceeding on $\mathcal{A}((-\pi, 0)) $ we have that $U^\zeta_\pi R^\zeta_\pi$ is in $\mathcal{A}((-\pi, 0))' $. Therefore $ R^\zeta_\pi = U^\zeta_{- \pi} $ up to  a scalar and thus $R^\zeta_{2 \pi}$ is a scalar. Now consider a maximal torus $T \subset G$ containing $h$. Since $T$ is connected, we can suppose that $\zeta(x)$ belongs to $T$ for any $x$ in $\R$, and by commutativity we have that $R^\zeta_{2 \pi} = \pi(h)$ in $PU(\hi)$. Therefore we have that $h$ is a noncentral element acting on $\hi$ as a scalar. If we now consider the kernel
\[
N = \{ g \in G \colon \pi(g) \in \T \} \,,
\]
then $N$ is a normal subgroup of $G$ which is not contained in the center. But $G$ is simple and connected, hence we have that $N=G$, which is an absurd.
\end{proof}

We conclude this last section  by studying the equivalence classes of the solitons constructed above. The DHR representations $\sigma_z$ with $z$ in $Z(G)$ correspond to inequivalent irreducible positive energy representations $\pi_z$ of the same level than $\pi$ (see Remark \ref{rem:lift} and Theorem 3.2.3. of \cite{TL}). Now we pick a maximal torus $T$ in $G$. Consider $\zeta $ in $L_s G$ and $\eta$ in $L_t G$ for some $s$ and $t$ in $T$. We can suppose  $\zeta $  and $\eta$ to be  both contained in $T$. It can be easily noticed that
\begin{equation*}
	\sigma_\zeta \cdot \sigma_\eta = \sigma_{\zeta \eta} \,, \quad \zeta \eta \in L_{st}G \,, \qquad \sigma_\zeta^{-1} = \sigma_{\zeta^{-1}} \,, \quad \zeta^{-1} \in L_{s^{-1}}G \,.
\end{equation*}
It follows that $\sigma_\zeta $ and $ \sigma_\eta $ are unitarily equivalent if and only if $s=t$, hence we have infinitely many inequivalent solitons. If we consider two maximal tori $T$ and $T' = gTg^{-1}$, then what we can say is that we have the identity
\[
\sigma_{g \zeta g^{-1}} = \text{Ad} \pi(g) \cdot \sigma_\zeta \cdot \text{Ad} \pi(g)^* \,.
\]






\section*{Acknowledgements} 
I thank Simone Del Vecchio for suggesting me the problem, and Daniela Cadamuro and Henning Bostelmann for the hospitality in the period spent in Leipzig. \\

\end{document}